\newif\ifhidetext
\newcommand{\hidetext}[1]{%
  \ifhidetext
  \else
    #1
  \fi
}
\newcommand{\cm}[1]{\ignorespaces}
\definecolor{mypink}{RGB}{219, 48, 122}
\definecolor{mypurple}{RGB}{75,0,130}
	\newtheorem{theorem}{Theorem}
	\newtheorem{lemma}{Lemma} 
	\newtheorem{proposition}{Proposition} 
	\newtheorem{definition}{Definition}
	\newtheorem{corollary}{Corollary}
	\newtheorem{assumption}{Assumption}
\newcommand{\bR}{\mathbb{R}}
\newcommand{\bN}{\mathbb{N}}
\newcommand{\cC}{\mathcal{C}}
\newcommand{\cU}{\mathcal{U}}
\newcommand{\cE}{\mathcal{E}}
\newcommand{\bE}{\mathbb{E}}
\newcommand{\bP}{P}
\newcommand{\bQ}{Q}
\newcommand{\bM}{\mathbb{M}}
\newcommand{\bG}{\mathbb{G}}
\newcommand{\supp}{\text{supp}}
\newcommand{\LHS}{\text{LHS}}
\newcommand{\br}[1]{\left\{ #1 \right\} }
\newcommand{\sbr}[1]{\left( #1 \right) }
\newcommand{\nbr}[1]{\left\| #1 \right\|}
\newcommand{\absbr}[1]{\left| #1 \right|}
\newcommand\numberthis{\addtocounter{equation}{1}\tag{\theequation}}
\def\dd{\text{d}}
\def\cX{\mathcal{X}} 
\def\cY{\mathcal{Y}} 
\def\leb{\lambda} 
\def\bftheta{\boldsymbol \theta}
\def\bfone{\boldsymbol 1}
\def\cN{\mathcal N}
\def\cG{\mathcal G}
\def\cP{\mathcal P}
\def\cF{\mathcal F}
\def\iid{\scriptsize \mbox{iid}}
\def\as{\scriptsize \mbox{a.s.}}
\def\rT{\mathrm T}
\def\R{\mathbb R}
\def\spacingset#1{\renewcommand{\baselinestretch}%
{#1}\small\normalsize} \spacingset{1}
\title{Distributional Evaluation of Generative Models via Relative Density Ratio}
\author{Yuliang Xu $^{1}$, Yun Wei $^2$, and Li Ma $^{1}$\\[4mm]
Department of Statistics \& Data Science Institute, University of Chicago$^1$\\
Department of Mathematical Science, University of Texas at Dallas$^2$}
\date{}
\begin{document}

\maketitle

\begin{abstract}
    We propose a function-valued evaluation metric for generative models based on the relative density ratio (RDR) designed to characterize distributional differences between real and generated samples. As an evaluation metric, the RDR function preserves $\phi$-divergence between two distributions, enables sample-level evaluation that facilitates downstream investigations of feature-specific distributional differences, and has a bounded range that affords clear interpretability and numerical stability. Function estimation of the RDR is achieved efficiently through optimization on the variational form of $\phi$-divergence. We provide theoretical convergence rate guarantees for general estimators based on M-estimator theory, as well as the convergence rate of neural network-based estimators when the true ratio is in the anisotropic Besov space. We demonstrate the power of the proposed RDR-based evaluation through numerical experiments on MNIST, CelebA64, and the American Gut project microbiome data. We show that the estimated RDR enables not only effective overall comparison of competing generative models, but also a convenient way to reveal the underlying nature of goodness-of-fit. This enables one to assess support overlap, coverage, and fidelity while pinpointing regions of the sample space where generators concentrate and revealing the features that drive the most salient distributional differences.
\end{abstract}

\noindent%
{\it Keywords:} Model assessment, nonparametric inference, deep learning, empirical processes, microbiome compositions.
\vfill

\spacingset{1.2} 

\newpage

\section{Introduction}\label{sec:intro}

Generative AI has become an integral part of modern society, facilitating tasks in everyday life and advancing scientific discovery. With the remarkable development and deployment of generative models in domains ranging from text and image generation to scientific data synthesis, there is a growing interest in determining which models perform better in different contexts. 
We have witnessed great successes in text and image generation using various model classes, including VAE (variational autoencoders), GANs (generative adversarial networks), diffusion models, and continuous flow matching models. However, despite the realism of generated samples to the human eye, studies have revealed distributional shifts between samples produced by advanced generators and those observed in real data. This phenomenon has been observed for many modern generators, such as  GANs \citep{richardson2018gans} and diffusion models \citep{sehwag2022generating}, which are not easily detected by visual inspection. Such discrepancies are particularly concerning when generative AI is used for scientific applications, where the fidelity of the underlying data distribution is critical. Thus, a key step toward trustworthy generative modeling is to understand, at a distributional level, how and where the generator diverges from the true data generation mechanism.

Existing automated evaluation criteria for modern generative models have largely focused on benchmarking generators through {\em scalar} metrics. Popular metrics include the Fréchet Inception Distance \citep[FID]{heusel2017gans}  and the Inception Score \citep[IS]{salimans2016improved}, both of which rely on strong parametric assumptions or low-dimensional feature mappings to assess model performance on large-scale, high-dimensional data. \citet{pmlr-v97-simon19a} and \citet{sajjadi2018assessing} provide a precision-recall curve to describe the coverage and fidelity of generators based on low-dimensional features of the original data. While these metrics capture human-perceivable fidelity, they largely overlook underlying distributional discrepancies. As a result, even a generator that samples extensively from a narrow region of the true data support can achieve a high score, provided that the generated samples \textit{look} realistic. More importantly, in scientific data generation, as we will demonstrate in a microbiome example, visually convincing generations may be biased and fail to represent the true data-generating mechanism in important ways that could bias the downstream scientific analysis. This issue is critical in many scientific application domains where the generated samples are used to augment observed data \citep{choi2023deepmicrogen} and to perform downstream analyzes such as biomarker detection \citep{marouf2020realistic} or cell clustering \citep{lall2022lsh}. The goal of this work is to propose a {\em function-valued} metric, rather than a single scalar metric, capable of providing both distribution-level and sample-level assessments. Our approach is based on the relative density ratio, which is simple to implement and scalable to massive, high-dimensional datasets. We illustrate how this framework allows for the identification of features associated with the detected distributional discrepancies between observed and generated data.

Comparing the generative model distribution and the real data distribution can be viewed as the classic two-sample problem from a statistical perspective. The classic two-sample comparison methods in statistics have mostly adopted a hypothesis testing perspective. The predominant approach involves computing a test statistic, which is usually a scalar distance metric, and based on that, computing the p-value under its (asymptotic) null distribution. Examples of univariate samples include the Kolmogorov–Smirnov test and the Anderson-Darling test \citep{razali2011power}. 
Modern nonparametric methods, such as Maximum Mean Discrepancy (MMD) and Energy Distance \citep{gretton2012kernel,szekely2013energy}, use prespecified kernel embeddings or Euclidean distances to provide a distance metric between two distributions. These methods aim to address the question of how far two distributions are apart through a single scalar metric. As such, they cannot reveal exactly how and where they differ. One is often satisfied with a simulation based $p$-value for these metrics attained through resampling. 

In general, we question the relevance of the hypothesis testing perspective on two-sample comparisons in the context of assessing modern generative models---{\em all models are wrong, and all generative models are definitely quite wrong} in applications that involve highly complex data generative mechanisms. There is no question that the null hypothesis, with sufficient data (often available in synthetic experiments), {\em will} be rejected. The more valuable question is how the generative model fails to capture the truth. As such, we believe the function estimation perspective is more appropriate. 

Which function should we estimate as a way to summarize two-sample discrepancies? A natural candidate is the {\em density ratio}, i.e., the ratio of the two sampling densities. 
Indeed, density ratio estimation (DRE) has been studied, and well-known approaches include kernel-based methods such as KLIEP \citep{sugiyama2007direct} and uLSIF \citep{kanamori2009least}. More recently, \cite{awaya2025two} introduced an approach that uses additive tree ensembles along with a new loss function to learn the DR and achieves generalized Bayesian uncertainty quantification. 
A fundamental assumption of the DRE approach is that the two distributions must have the same support; otherwise, the ratio is undefined in parts of the space. Such an assumption is easily violated in very high-dimensional, complex data generative settings. This issue can incur severe extrapolation errors, a form of overfitting, when highly flexible function approximators, such as neural-networks, are adopted for DRE.

We circumvent this difficulty in direct DRE by considering the following modification to the density ratio:
\begin{align}
    r(x) = \frac{p(x)}{\frac{1}{2}\br{p(x)+q(x)}} \label{eq:RDR}
\end{align}
as our choice of the function-valued metric, which \cite{yamada2013relative} first considered and referred to as the {\em relative density ratio} (RDR). 
In our setting, the denominator corresponds specifically to the 50–50 mixture of $p$ and $q$. While \cite{yamada2013relative} is largely grounded in kernel-based methods for RDR estimation, our estimation approach is tailored to neural-network-based estimators using $\phi$-divergence objectives, following the general strategy introduced in \cite{Nguyen2010-xb}. We will show in Section~\ref{sec:RDR} that the RDR has a bounded range, good interpretability, and numerical stability for neural network estimators.

Alongside the proposed relative density ratio framework, we contribute new theoretical insights and empirical results on both image generation and scientific data domains, including the relative abundance microbiome data from the American Gut Project \citep{AmericanGut}.  We first show that the ratio estimator can be viewed as an M-estimator, and we establish consistency and convergence rate results for general estimator classes that satisfy certain entropy bounds. Specifically, for the sparse neural network estimator, building on prior work by \citet{suzuki2021deep}, we show that if the true ratio function lies in an anisotropic Besov space, a function space that allows differing degrees of smoothness across coordinates, then the convergence rate of the neural network estimator is nearly optimal up to a logarithmic factor in $n$, and depends only on the smoothness of the true ratio function, rather than directly on the sample dimension.

In our numerical experiments, we compare a range of modern generative models, including variational autoencoders (VAEs), generative adversarial networks (deep convolutional GANs, \citealp{DCGAN}), diffusion models (DDIM, \citealp{song2021ddim}), and continuous flow matching models (ICFM, \citealp{lipman2022flow}), on image datasets such as MNIST and CelebA-64, as well as on the American Gut data. We found that even for the well-performing generators with high fidelity and good coverage, in a simple task of generating digits from MNIST data, the generator tends to have its own bias and concentrate on certain local regions in the high-dimensional support. A similar trend also appears in the higher-dimensional CelebA-64 data and the more complex, highly sparse microbiome data. Furthermore, in the experiments of CelebA-64 and the American Gut Project, we demonstrate how to use this ratio function to detect the features or attributes associated with such distributional discrepancies, making it more interpretable.

The rest of this paper is organized as follows. Section~\ref{sec:RDR} introduces the properties of the relative density ratio and demonstrates its connection to existing evaluation criteria in a simple 1D setting. Section~\ref{sec:estimation_and_theory} provides the estimation methods and theoretical results. Section~\ref{sec:numerical_example} provides the numerical experiments on MNIST, CelebA-64, and the American Gut Project data. We conclude in Section~\ref{sec:discussion} with a discussion of potential future directions.

\section{Distributional Evaluation via Relative Density Ratio}\label{sec:RDR}
For a compact support $\cX$, let $X_1,\dots,X_n\sim P$ and $Y_1,\dots,Y_m\sim Q$. Let $a \lesssim b$ denote that $a \leq b$ up to a constant. Let $\leb$ be the base measure. Let $\bR$ be the real line, and $\bR^+$ be $[0,\infty)$. For the generative model evaluation task, we assume $P$ is the true distribution for the real sample, and $Q$ is the distribution for the generative model.

When designing an evaluation method, we aim for the following desiderata. Among them, (A) is a common focus of existing methods, but we argue is inadequate by itself, whereas (B) and (C) are additional objectives that are of high practical values which we also aim to achieve.
\begin{itemize}
    \item[(A)] A global numeric score for comparing and ranking different generators;
    \item[(B)] An observation-level metric that provides insights into the fit of a generator; 
    \item[(C)] Feature learning for summarizing distributional discrepancies, i.e., identifying features that contribute to the distributional difference.
\end{itemize}

The RDR, as defined in \eqref{eq:RDR}, along with several quantities that can be derived from it, satisfies all of the above desiderata, as later illustrated in 1D numerical examples in Section~\ref{subsec:connection}. Proposition~\ref{prop:RDR} shows two related properties that make RDR interpretable.

\begin{proposition}\label{prop:RDR}
The RDR $r(x)$ has a bounded image $[0,2]$. The squared Hellinger distance between the numerator density $p(x)$ and the denominator density $\frac{1}{2}\br{p(x) + q(x)}$ is bounded above by $1-\frac{1}{\sqrt{2}}(\approx 0.293)$.
\end{proposition}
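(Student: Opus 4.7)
The proposition splits into two independent claims, and both admit short, direct arguments. For the range of $r(x)$, my plan is to use nothing more than nonnegativity of densities. Since $p(x) \geq 0$ and $q(x) \geq 0$, the ratio $r(x) = 2p(x)/(p(x)+q(x))$ is nonnegative wherever the denominator is positive, and the bound $p(x) \leq p(x)+q(x)$ immediately forces $r(x) \leq 2$. (One should note that the value is defined only on the support of the mixture $\tfrac{1}{2}(p+q)$; outside this set, both $p$ and $q$ vanish and the ratio is conventionally set to $0$, which is consistent with $[0,2]$.)

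For the Hellinger bound, I would start from the standard identity $H^2(f,g) = 1 - \int \sqrt{f(x) g(x)}\,d\leb(x)$ applied to $f = p$ and $g = m := \tfrac{1}{2}(p+q)$. This reduces the claim to the inequality
\begin{equation*}
\int \sqrt{p(x)\cdot \tfrac{1}{2}(p(x)+q(x))}\,d\leb(x) \;\geq\; \frac{1}{\sqrt{2}},
\end{equation*}
or equivalently $\int \sqrt{p(x)(p(x)+q(x))}\,d\leb(x) \geq 1$. Since $q(x) \geq 0$, the integrand dominates $\sqrt{p(x)\cdot p(x)} = p(x)$ pointwise, and integrating gives at least $\int p\,d\leb = 1$. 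Chaining these steps yields $H^2(p,m) \leq 1 - 1/\sqrt{2}$, with equality approached in the degenerate limit where $p$ and $q$ have disjoint supports.

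There is essentially no obstacle here: both claims follow from pointwise inequalities and the normalization $\int p\,d\leb = 1$. The only subtle point worth flagging explicitly in the writeup is how $r(x)$ is defined on the null set $\{p+q = 0\}$; picking the convention $r = 0$ there preserves the bound $r \in [0,2]$ and keeps the Hellinger computation unaffected since this set has measure zero under $m$. The tightness of the constant $1 - 1/\sqrt{2}$ (via the disjoint-support limit) is worth a one-line remark, as it reinforces the interpretability claim that the RDR lives on a universally bounded scale regardless of the generator $q$.
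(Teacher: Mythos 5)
Your proof is correct and follows essentially the same approach as the paper: for the range of $r$, the paper rewrites $r(x) = 2/(1 + q(x)/p(x))$ rather than bounding $p \leq p+q$ directly, but these are the same pointwise observation, and for the Hellinger bound the paper uses the identical chain $H^2(p, \tfrac{p+q}{2}) = 1 - \int \sqrt{p\,\tfrac{p+q}{2}}\,d\lambda \leq 1 - \int \tfrac{p}{\sqrt{2}}\,d\lambda = 1 - \tfrac{1}{\sqrt{2}}$. Your remarks on the convention at $\{p+q=0\}$ and on tightness in the disjoint-support limit are sensible additions the paper omits, but they do not change the substance of the argument.
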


For desideratum (A), we adopt the squared Hellinger distance $H^2(p,\frac{p+q}{2})$ as the global score: if $H^2(p,\frac{p+q}{2})=0$, the two distributions are identical; whereas if $H^2(p,\frac{p+q}{2})=1-\frac{1}{\sqrt{2}}$, there is almost no overlap between them. The approximated upper bound $0.293$ will repeatedly show up in our numerical examples in Section~\ref{subsec:connection} and Section~\ref{sec:numerical_example}. The boundedness will also lead to numerical robustness in estimation from the data. For desideratum (B), recall that the definition of RDR is $\frac{p(x)}{0.5p(x)+0.5q(x)} = \frac{2p/q}{p/q+1}$. For an observation $X_i=x$ in the support $\supp(P)\cup\supp(Q)$, the probability masses of the two distributions are equal when $r(x)=1$; if $r(x)$ is close to $0$, $X_i$ is more likely drawn from $Q$ than from $P$; and vice versa when $r(x)$ is close to $2$. For desideratum (C), samples evaluated by $r(x)$ form a one-dimensional distribution $\{r(X_i)\}_{i=1}^n$. When covariates or features are available for each observation, downstream analyses, such as regression or simple correlation analysis, can help identify covariates that drive the distributional differences. If we have two high-dimensional distributions $X\sim P$ and $Y\sim Q$, the following Theorem~\ref{thm:dpi} shows that the 1D distributions after the RDR mapping, $r(X)$ and $r(Y)$, preserve the $\phi$-divergence of the original high-dimensional distributions $P$ and $Q$. If $X\sim P$, given a measurable map $F$, we use the pushforward notation $F_{\#}P$ to denote the distribution of $F(X)$.

\begin{theorem}\label{thm:dpi}
Assume $P$ and $Q$ have common support $\cX$. The density ratio transformation $g:\cX\to\R^{+}$, $g(x) = p(x)/q(x)$, is a maximizer of $D_\phi(F_{\#}P \| F_{\#}Q)$ among all measurable maps $F:\cX\to \cY$. Consequently, for any one-to-one mapping $T$, $T_g:=T\circ g$ is also a maximizer of $D_\phi(F_{\#}P \| F_{\#}Q)$ among all measurable $F$ on $\cX$. Furthermore, $D_\phi({T_g}_{\#}P \| {T_g}_{\#}Q) = D_\phi(P \| Q)$.
\end{theorem}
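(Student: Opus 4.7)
The plan is to combine two classical tools: the data processing inequality (DPI) for $\phi$-divergences, and the fact that the likelihood ratio $g(X) = dP/dQ(X)$ is a sufficient statistic for the binary model $\{P,Q\}$, so that pushing forward by $g$ conserves the full divergence. Once these are in place, the maximizer claim for $g$ and the preservation identity for $T\circ g$ follow by bookkeeping.

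First I would invoke the DPI: for every measurable $F:\cX \to \cY$,
\[
D_\phi(F_{\#}P \,\|\, F_{\#}Q) \;\le\; D_\phi(P \,\|\, Q),
\]
which follows from Jensen's inequality applied to $\mE_Q[\,dP/dQ \mid \sigma(F)\,]$, a quantity that agrees $Q$-a.s.\ with $d(F_{\#}P)/d(F_{\#}Q)$ composed with $F$. This supplies the upper bound that the claimed maximizers $g$ and $T\circ g$ must attain. Next I would show that $g$ attains the bound by computing its pushforward Radon--Nikodym derivative directly. For every Borel $A\subset \bR^{+}$, using $dP = g\,dQ$ and then change of variables along $g$,
\[
g_{\#}P(A) \;=\; P(g^{-1}(A)) \;=\; \int_{g^{-1}(A)} g(x)\, dQ(x) \;=\; \int_A u\, d(g_{\#}Q)(u).
\]
Hence $d(g_{\#}P)/d(g_{\#}Q)(u) = u$ for $g_{\#}Q$-a.e.\ $u$, which yields
\[
D_\phi(g_{\#}P \,\|\, g_{\#}Q) \;=\; \int \phi(u)\, d(g_{\#}Q)(u) \;=\; \mE_Q[\phi(g(X))] \;=\; D_\phi(P\,\|\,Q).
\]
Combined with the DPI bound, this proves the first assertion.

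For the one-to-one map $T$, I would apply DPI in both directions. Writing $(T\circ g)_{\#}P = T_{\#}(g_{\#}P)$, and similarly for $Q$, DPI applied to $T$ gives $D_\phi((T\circ g)_{\#}P \,\|\, (T\circ g)_{\#}Q) \le D_\phi(g_{\#}P \,\|\, g_{\#}Q)$. Because $T$ is one-to-one, a measurable left inverse $T^{-1}$ exists on its image with $g = T^{-1}\circ (T\circ g)$, so applying DPI to $T^{-1}$ furnishes the reverse bound and equality holds throughout. Chaining with the previous paragraph gives $D_\phi((T\circ g)_{\#}P \,\|\, (T\circ g)_{\#}Q) = D_\phi(P\,\|\,Q)$, which both identifies $T\circ g$ as a DPI maximizer and delivers the preservation identity.

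The main technical step, and essentially the only place where care is needed, is the change-of-variables computation yielding $d(g_{\#}P)/d(g_{\#}Q)(u) = u$, since $g$ itself need not be injective and one must argue pointwise on the pushforward measure rather than via a change of variables in the usual sense. The existence of a measurable left inverse for the one-to-one $T$ on its image is standard under the paper's setting (compact $\cX$, Borel $T$ on a standard Borel space), so no extra regularity beyond what is already assumed is needed.
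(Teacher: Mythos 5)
Your proof is correct and takes a genuinely different route from the paper. The paper proceeds entirely by citation: it invokes Theorem~3.3 of \citet{Liese2012-rl} to characterize equality in the data processing inequality as occurring precisely at sufficient statistics for the family $\{P,Q\}$, then cites Theorem~6.12 of \citet{lehmann2006theory} to identify $g=p/q$ as a minimal sufficient statistic, and concludes immediately. You instead make the equality case self-contained by directly computing the pushforward Radon--Nikodym derivative $d(g_{\#}P)/d(g_{\#}Q)(u)=u$ via the pushforward change-of-variables identity $\int h(g(x))\,dQ(x)=\int h(u)\,d(g_{\#}Q)(u)$, which (as you note) holds without injectivity of $g$, and then evaluate the divergence of the pushforwards explicitly. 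Your two-sided DPI argument for the one-to-one $T$ (using a Borel left inverse on the image, legitimate on a standard Borel space) likewise replaces the paper's appeal to the fact that a one-to-one function of a sufficient statistic is again sufficient. What the paper's route buys is brevity and the extra information that $g$ is in fact \emph{minimal} sufficient, not merely a DPI maximizer; what your route buys is a short, self-contained argument that a reader can verify without unpacking the sufficiency-characterization machinery, and it makes the mechanism of the preservation identity transparent, namely that the divergence integrand $\phi(g(x))$ is literally transported unchanged under pushforward by $g$.
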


Note that there is a one-to-one mapping between RDR and the density ratio: $r = T(g)$, where $T(g) = \frac{2g}{g+1}$, which is a one-to-one mapping between $r(x)$ and $g(x)$. Because of the one-to-one mapping, Theorem~\ref{thm:dpi} 
indicates that we can use the 1D empirical distributions formed by $\br{r(X_i)}_{i=1}^n,\br{r(Y_j)}_{j=1}^m$ to summarize the distributional difference between $P$ and $Q$ in high-dimensions while preserving the $\phi$-divergence $D_\phi(P\| Q)$. 
We can use the 1D histograms of $\br{r(X_i)}_{i=1}^n$, $\br{r(Y_j)}_{j=1}^m$ as the distributional visualization of the original two samples; histograms such as Row 3 in Figure~\ref{fig:oneD_beta} demonstrate the coverage/fidelity of the generator of high-dimensional samples.

We first introduce the estimation method of the ratio function $r(x)$ in Section~\ref{subsec:estimation}, and then illustrate the usage of RDR in Section~\ref{subsec:connection} with simple 1D examples in connection with existing two-sample comparison methods.

\subsection{Estimation via the variational form of $\phi$-divergence}\label{subsec:estimation}

Our estimation of $r(x)$ is closely related to the density ratio $g(x) = p(x)/q(x)$. If $P\ll Q$ and $\phi:\R^{+}\to \R$ is a convex function with $\phi(1)$=0, let $D_\phi$ be the general $\phi$-divergence, defined as $D_\phi(P\| Q):=\int_\cX \phi\sbr{\frac{\dd P}{\dd Q}(x)}\dd Q(x)$. Table~\ref{tb:phi_div} lists a few common options for $\phi$. The variational form  (dual form) of $\phi$-divergence provides a way to learn the density ratio $g(x)$. This theoretical property of $\phi$-divergence is proved in \cite{Nguyen2010-xb}. In convex optimization \citep{rockafellar1997convex}, the \textbf{subdifferential} of a convex function $\phi$ at point $t\in\R$ is defined as the set 
\[\partial \phi(t):=\br{z\in\R:\phi(s)\geq \phi(t) + z(s-t)}\]
and if $\phi$ is differentiable at $t$, $\partial \phi(t)=\br{\phi'(t)}$. The \textbf{conjugate dual function}, denoted as $\phi^c$, associated with $\phi$, is defined as 
\[\phi^c:=\sup_{u\in\R}\br{uv-\phi(u)}.\]

We reiterate Lemma 1 in \cite{Nguyen2010-xb} since it is the foundation of our estimation procedure.

\begin{lemma}[\citet{Nguyen2010-xb}]\label{lem:long_lem1}
    For any $\phi$-divergence, the dual form can be written as an optimization problem with respect to a function class $\cF$ mapping from $\cX$ to $\R$,
    \begin{equation}\label{eq:variational_form}
        D_\phi(P\|Q) \geq \sup_{f\in\cF}\int [f \dd P - \phi^c(f)\dd Q]
    \end{equation}
    The equality holds when the optimal $f^*\in \partial \phi(p/q)$ for any $x\in\cX$. 
\end{lemma}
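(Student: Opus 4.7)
The plan is to derive the inequality from the Fenchel--Young inequality applied pointwise and then to integrate against $Q$, and to characterize the case of equality through the subdifferential condition.

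First, recall the Fenchel--Young inequality: for the convex function $\phi:\R^{+}\to\R$ with conjugate $\phi^c$, we have for any $u\in\R^{+}$ and any $v\in\R$,
\[
uv \;\leq\; \phi(u) + \phi^c(v),
\]
with equality if and only if $v\in \partial\phi(u)$. I would apply this pointwise with $u = g(x) := dP/dQ(x)$ and $v = f(x)$ for any $f\in\cF$, obtaining $f(x)\,g(x) \le \phi(g(x)) + \phi^c(f(x))$ for $Q$-a.e.\ $x$.

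Next, I would integrate both sides against $Q$. Using $g\,dQ = dP$ on the left and the definition $D_\phi(P\|Q) = \int \phi(g)\,dQ$ for the first term on the right, this yields
\[
\int f\, dP \;\leq\; D_\phi(P\|Q) + \int \phi^c(f)\, dQ,
\]
so that $\int f\,dP - \int \phi^c(f)\,dQ \le D_\phi(P\|Q)$ for every $f\in\cF$. Taking the supremum over $f\in\cF$ gives \eqref{eq:variational_form}. (If $P\not\ll Q$, so that $D_\phi(P\|Q) = +\infty$, the bound is trivial, so the reduction to the absolutely continuous case is harmless.)

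For the equality statement, I would invoke the equality condition in Fenchel--Young. If $f^*(x) \in \partial\phi(g(x))$ $Q$-a.e., then $f^*(x) g(x) = \phi(g(x)) + \phi^c(f^*(x))$ pointwise, and integrating against $Q$ converts all inequalities above into equalities, yielding $\int f^*\,dP - \int \phi^c(f^*)\,dQ = D_\phi(P\|Q)$. Provided such an $f^*$ lies in $\cF$ (or is approximable in $\cF$ in a suitable sense), the supremum is attained and matches $D_\phi(P\|Q)$. I expect the main subtlety to be not the inequality itself, which is a clean consequence of Fenchel duality, but rather the hypotheses needed on $\cF$ to realize (or approach) the pointwise optimizer $f^* \in \partial\phi(p/q)$; in the paper's setup, $\cF$ is taken to be rich (e.g.\ a sufficiently expressive neural-network class), so this attainability holds in the limit. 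Beyond this, the only measure-theoretic care required is handling points with $q(x)=0$, which can be absorbed into the $P\ll Q$ reduction.
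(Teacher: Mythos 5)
Your proof is correct and is the standard convex-duality argument. Note that the paper does not supply its own proof of this lemma --- it is stated as a restatement of Lemma~1 of \citet{Nguyen2010-xb} with no proof given in the appendix --- and the argument you give (pointwise Fenchel--Young applied at $u = dP/dQ$ and $v = f$, integration against $Q$, and the subdifferential characterization of equality) is precisely the derivation in that reference, so there is nothing to contrast.

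Two minor remarks, neither of which is a gap. First, for the forward implication of the equality case you only need $v\in\partial\phi(u)$ to be \emph{sufficient} for pointwise equality in Fenchel--Young, which is all the lemma asserts; the if-and-only-if is true for proper lsc convex $\phi$ but not required here. Second, you correctly flag that attainment of $f^*\in\partial\phi(p/q)$ inside $\cF$ is a hypothesis on $\cF$ rather than a consequence of the inequality; the lemma as stated in the paper leaves this implicit (``the equality holds when the optimal $f^*\in\partial\phi(p/q)$''), and your reading of it as an attainability/richness condition on $\cF$ is the intended one.
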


This result indicates that the optimizer of the variational form of $\phi$ divergences is a certain function of the density ratio $g$, leading to many ways of estimating the density ratio. Table \ref{tb:phi_div} shows the optimizers for KL-divergence, chi-squared divergence, and the squared Hellinger divergence. In fact, this idea has been explored by \cite{nowozin2016f} to use the density ratio as the discriminator in a GAN (Generative Adversarial Network). 

Due to the numerical stability and interpretability of the bounded range, we choose the squared Hellinger distance because it provides a bounded outcome, as discussed in Proposition~\ref{prop:RDR}. 
\begin{table}[ht!]
    \centering
    \resizebox{\textwidth}{!}{
    \begin{tabular}{@{}lll@{}}
    \toprule
    \textbf{$\phi$--divergence} & \textbf{Variational form} & \textbf{Optimizer $f^*(x)$} \\
    \midrule
    KL: $\phi(u)=u\log u$
    &
    $\displaystyle D_{\mathrm{KL}}(P\|Q)
    =\sup_{f}\Big\{
    \bE_{P}[f(X)] \;-\;\bE_{Q}\big[e^{f(X)-1}\big]
    \Big\}$
    &
    $\displaystyle f^*(x)=1+\log\frac{p(x)}{q(x)}$ \\[2ex]
    
    Chi‐squared: $\phi(u)=(u-1)^2$
    &
    $\displaystyle \chi^2(P\|Q)
    =\sup_{f}\Big\{
    \bE_{P}[f(X)] \;-\;\bE_{Q}\big[\tfrac{f(X)^2}{4}+f(X)\big]
    \Big\}$
    &
    $\displaystyle f^*(x)=2\Bigl(\frac{p(x)}{q(x)}-1\Bigr)$ \\[2ex]
    
    Squared Hellinger: $\phi(u)=\frac{1}{2}(\sqrt{u}-1)^2$
    &
    $\displaystyle H^2(P\|Q)
    =\sup_{f<\frac{1}{2}}\Big\{\bE_{P}[f(X)] \;-\;\bE_{Q}\!\Big[\frac{f}{1-2f}\Big]
    \Big\}$
    &
    $\displaystyle f^*(x)=\frac{1}{2}-\frac{1}{2}\sqrt{\frac{q(x)}{p(x)}}$ \\
    \bottomrule
    \end{tabular}
    }
\caption{Variational representations of selected $\phi$-divergences and their optimizers. }
\label{tb:phi_div}
\end{table}

\subsection{A neural network model to estimate the (relative) density ratio}

When $X_1,\dots,X_n\overset{\iid}{\sim}P$, and $Y_1,\dots, Y_m\overset{\iid}{\sim}Q$, for any measurable set $A\subset \cX$, denote the empirical distributions $\bP_n(A):=\frac{1}{n}\sum_{i=1}^n \bfone\br{X_i\in A}$, and $\bQ_m(A):=\frac{1}{m}\sum_{j=1}^m\bfone\br{Y_j\in A}$. Using a change-of-variable transformation $g = (1-2f)^{-2}$ from the variational form of the squared Hellinger distance in Table~\ref{tb:phi_div}, let $\hat g_n$ be the estimated maximizer over a function class of candidates $\cG$, based on the observed samples, 
\begin{equation}\label{eq:objective}
    \hat g_n = \arg\max_{g\in \cG} \br{1-
    \frac{1}{2}\int g^{-\frac{1}{2}} \dd\bP_n -\frac{1}{2}\int g^{\frac{1}{2}}\dd\bQ_m}
\end{equation}
For the estimation of RDR, $r(x)$ defined in \eqref{eq:RDR}, we need to replace $\bQ_m$ with the mixture $\widetilde Q_m:=0.5\bP_n+0.5\bQ_m$.
\begin{equation}\label{eq:RDR_objective}
    \hat r_n = \arg\max_{g\in \cG} \br{1-
    \frac{1}{2}\int g^{-\frac{1}{2}} \dd\bP_n -\frac{1}{2}\int g^{\frac{1}{2}}\dd\widetilde\bQ_m}.
\end{equation}
The estimation method depends on the class of candidate functions in $\cG$ and the smoothness of the true function $g_0$. \citet{awaya2025two} considers additive tree ensembles for $\log g$ as their $\cG$. Note that $g^*=p/q$ is the unique global minimizer of $\frac{1}{2}\int g^{-\frac{1}{2}} \dd\bP +\frac{1}{2}\int g^{\frac{1}{2}}\dd\bQ$. 
In this work, we focus on using neural networks to estimate the ratio functions under this loss through gradient-based optimization. 
In practice, (stochastic) gradient descent algorithms to find the optimal $\hat g_n,\hat r_n$.

The feedforward neural network can be viewed as a function learner \citep{Sazli2006BriefReview}, parameterized with weight matrices $\br{W^{(l)}}_{l=1}^L$ and bias vectors $\br{b^{(l)}}_{l=1}^L$. Let $U_l$ be the total number of nodes in layer $l$ when $1<l\leq L$, $W^{(l)}\in\R^{U_l\times U_{l-1}}$, and the vector-valued activation function $\sigma^{(l)}:\R^{U_l}\to\R^{U_l}$ is an element-wise Lipschitz function. Denote $\bftheta:=\br{W^{(l)},b^{(l)}}_{l=1}^L$ as the set of all parameters, a neural network estimator can be formulated as a sequence of compositions 
$$f(x;\bftheta):=\sigma^{(L)}(W^{(L)}(\cdot)+b^{(L)})\circ\dots \circ \sigma^{(2)}(W^{(2)}(\cdot)+b^{(2)})\circ \sigma^{(1)}(W^{(1)}x+b^{(1)}),$$ see more details in Section~\ref{subsubsec:ffnn}. Using a neural network estimator to learn the density ratio becomes an optimization problem
\begin{equation*}
    \widehat\bftheta_g = \arg\min_{\bftheta_g} \br{\frac{1}{2}\sum_{i=1}^n f^{1/2}(X_i;\bftheta_g) + \frac{1}{2} \sum_{j=1}^m f^{-1/2}(Y_j;\bftheta_g)}.
\end{equation*}
The empirical loss being minimized was first introduced in \cite{awaya2025two} and referred to as the ``balancing loss'' for reasons explained in that paper. The optimizer in \eqref{eq:objective} can be estimated by $\hat g_n(x) = f(x;\widehat\bftheta_g)$. Similarly, given the objective in \eqref{eq:RDR_objective} and the learned $\widehat\bftheta_r$, $\hat r_n =f(x;\widehat\bftheta_r)$.
The neural network architecture varies in different data applications: for image data, variations of convolutional neural networks (CNNs) are applied, and for lower-dimensional examples and microbiome data, multilayer perceptrons (MLPs) are sufficient. Later in Section~\ref{sec:estimation_and_theory}, we will first provide a general convergence theory for general estimators with constraints on $\cG$, followed by a convergence rate result using the sieved sparse neural networks space as the function class $\cG_n$, where the size of the neural network space is allowed to grow with the sample size $n$.

\subsection{Connection to existing methods}\label{subsec:connection}

\noindent\textbf{Connection to DRE.}

The main appeal of using $r(x)$ rather than the density ratio $g(x)$ is that, with a mixture of $p(x)$ and $q(x)$, the denominator measure is guaranteed to have full support over the numerator measure $p(x)$, i.e., $P \ll \frac{P+Q}{2}$, whereas the same is not guaranteed for the density ratio $g(x)$, especially in high-dimensional problems. Even though we could assume in theory fully overlapping support for $p$ and $q$; in practice, even for simple 1D Gaussians, the estimated density ratio $p/q$ can be far less than the theoretical value when $p$ has a nonzero measure but $q$ is near-zero. The theoretical properties related to $r(x)$ in Proposition~\ref{prop:RDR} yield substantial improvements in numerical estimation compared to $g(x)$.
Figure~\ref{fig:oneD_compare} shows the 1D normal comparison of $r(x)$ and $g(x)$ as the mean shift increases. We use a simple 4-layer MLP to learn both $r(x)$ and $g(x)$, and train the function estimator on the observed training data with 1000 independent draws from both groups, and evaluate the trained function $r(x)$ or $g(x)$ over an evenly-spaced 1D grid of 500 grid points on $[-6,6]$. 

We observe that for both $\hat g(x)$ and $\hat r(x)$, there exist estimation errors. There are two potential culprits. First, neural network estimators, such as a simple ReLU MLP, extrapolate linearly to out-of-sample areas \citep{xu2021how}, which leads to estimation error in the low data mass region (see Figure~\ref{fig:oneD_compare} Column 1, $\hat r(x)$ has large extrapolation error when $x>3$). Second, the estimation performance of $r(x)$ depends on the output activation function of the last layer. There are different choices of the output activation function that all have a bounded image $[0,2]$, and the estimation performance depends on how fast the output activation function saturates to the end points $0,2$. In the Supplementary Section~\ref{supp_sec:act_sensi}, we provide a discussion on the choice of output activation functions and a sensitivity analysis on their performances.

\begin{figure}[ht]
    \centering
    \includegraphics[width=\linewidth]{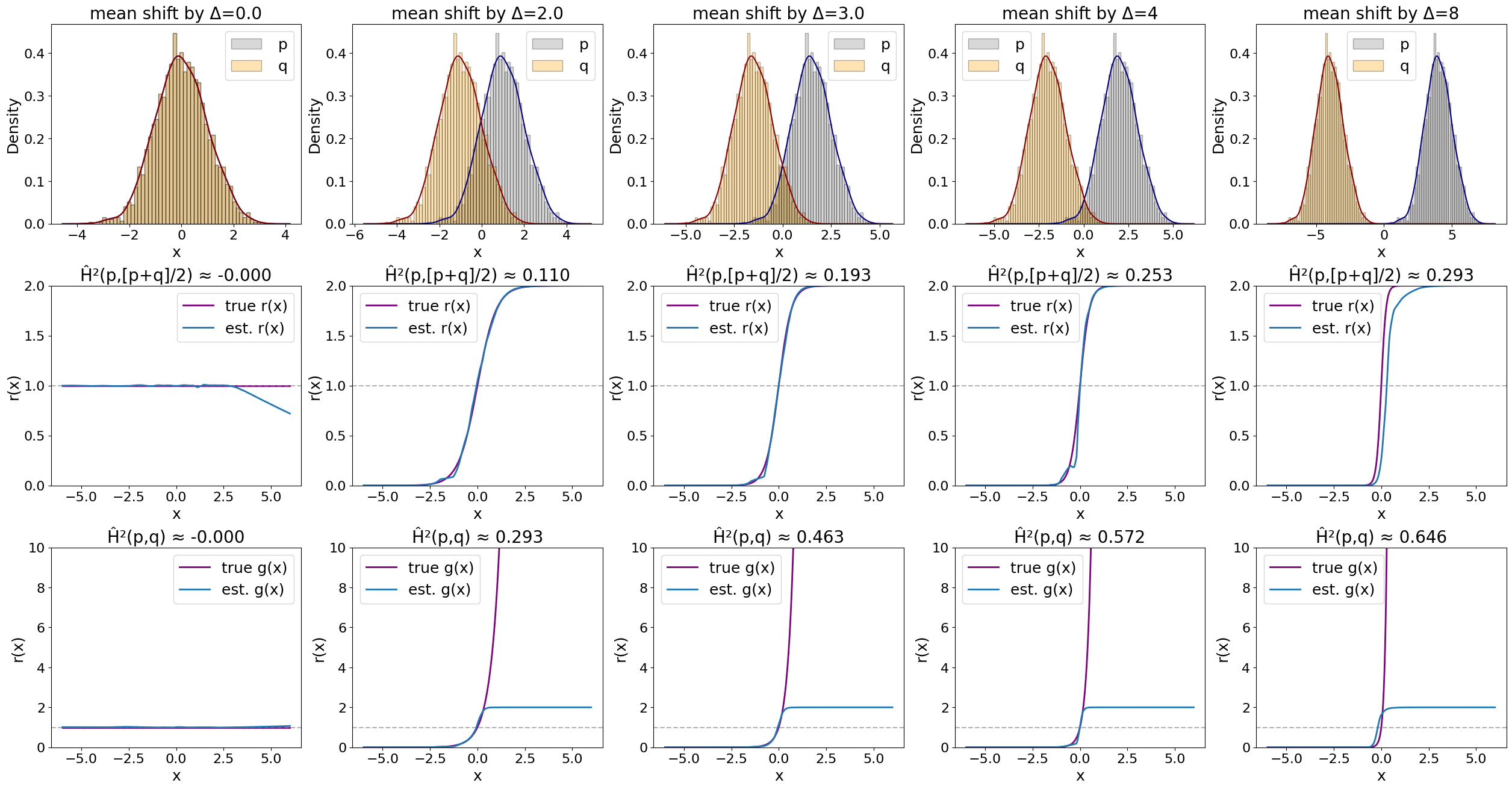}
    \caption{1D normal comparison between the relative density ratio $r(x)$ (Row 2) and the density ratio $g(x)$ (Row 3). Both function estimators are learned using the same ReLU MLP.}
    \label{fig:oneD_compare}
\end{figure}

 In Figure~\ref{fig:oneD_compare}, as the normal densities move further from each other, the theoretical density ratio $g(x)$ quickly explodes to infinity, whereas the relative density ratio $r(x)$ is theoretically bounded (see Proposition~\ref{prop:RDR}), and is more robust to extrapolation errors in areas with low data mass. This is the numerical advantage brought by $r(x)\in [0,2]$ in Proposition~\ref{prop:RDR}.

Recall desideratum (A), when $p$ and $q$ are different,  we need to measure how different they are. The objective value of \eqref{eq:RDR_objective} automatically provides the estimate for the squared Hellinger distance between $p$ and $(p+q)/2$. Hence, we use the squared Hellinger distance as the global numeric score in desideratum (A), as illustrated in Figure~\ref{fig:oneD_compare}. Both the estimated $H^2(p,q)$ and $H^2(p,\frac{p+q}{2})$ increase as the two densities increasingly shift away. The theoretical upper bound for $H^2(p,q)$ is 1, while the upper bound for $H^2(p,\frac{p+q}{2})$ is $1-1/\sqrt{2}$ (shown in Proposition~\ref{prop:RDR}). In the far-right panel in Figure~\ref{fig:oneD_compare}, when the normal means are shifted by 8, the empirical distributions are almost completely non-overlapping, which should yield the theoretical upper bound for both $H^2(p,q)$ and $H^2(p,\frac{p+q}{2})$. However, because of the massive extrapolation error induced by $\hat g(x)$, the estimated $\hat H^2(p,q)$ is still far from 1, whereas the estimated $\hat H^2\sbr{p,\frac{p+q}{2}}$ is almost exactly at the theoretical upper bound.

\vspace{0.5em}

\noindent\textbf{Connection to binary classification and the ``density ratio trick''.} 

As one of the most classical two-sample comparison methods, binary classifiers center around estimating $P(Z=1|X)$, where $Z\in\br{0,1}$ is the group label, and $X$ is the sample of interest. If we assume $X\sim P$ when $Z=1$ and $X\sim Q$ when $Z=0$, 
\begin{align*}
    P(Z=1|X)&=\frac{P(X|Z=1)P(Z=1)}{P(X|Z=1)P(Z=1) + P(X|Z=0)P(Z=0)},
\end{align*}
then 
the density ratio can be recovered from the estimated $P(Z=1|X)$ through the ``density ratio trick'' because $g(X) = P(Z=1|X)/P(Z=0|X)\cdot P(Z=0)/P(Z=1)$ and $r=2g/(g+1)$.
However, training a binary classifier under misclassification risk can be sensitive to the marginal sample probabilities $P(Z=i),i\in\br{0,1}$,
and \cite{awaya2025two} demonstrate numerically that the density ratios estimated with binary classifiers can be severely biased when the two sample sizes are  not equal and/or when the support of the two densities have regions of poor overlap. 
Estimation of $r(x)$ through the variational form of $\phi$-divergence is generally robust to unbalanced sample sizes and poor support overlap. This is highly relevant because both are common scenarios in applications involving generative models. 

\vspace{0.5em}

\noindent\textbf{Connection to precision-recall evaluation.}

A recent popular evaluation criterion for generative models is the precision-recall curve \citep{pmlr-v97-simon19a,sajjadi2018assessing}. As illustrated in Figure~\ref{fig:oneD_beta}, in the context of Figure 1 in \citet{pmlr-v97-simon19a}, precision measures how much of $Q$ can be generated by a part of $P$, and recall measures how much of $P$ can be generated by a part of $Q$. The left panel in Figure~\ref{fig:oneD_beta} (partial precision, full recall) indicates that the generator has full coverage (all $P$-mass within $\supp(Q)$), but suffers from potential model hallucination ($\supp(Q)\subsetneq\supp(P)$). The middle panel is the reversed case, where the generator has high fidelity but only partial coverage ($\supp(Q)\subsetneq\supp(P)$), missing critical regions in the real data, and the right panel is a low-fidelity and low-coverage case ($\supp(P)\nsubseteq\supp(Q)$, $\supp(Q)\nsubseteq\supp(P)$). The second row in Figure~\ref{fig:oneD_beta} shows the theoretical and estimated $r(x)$, which gives a clear indication of the precision-recall/coverage-fidelity situation in each case; moreover, $r(x)$ pinpoints exactly which areas in the support suffer from low-fidelity (when $r(x)$ is near 0) and which areas suffer from low-coverage (when $r(x)$ is near 2). The bottom row of Figure~\ref{fig:oneD_beta} shows the histogram of $r(X_i)$ evaluated on the training data, where $X_i$ is drawn from $P$ or $Q$. The histogram also shows a clear range of $r(x)$: for the model hallucination (low fidelity) case (left), the range of $r(x)$ is strictly bounded below 2 with a large mass near 0. In the reversed case (middle), $r(x)$ is bounded above 0 with large mass near 2. More interestingly, in the case of partial precision and partial recall (right), we see a spike of mass of $r(X_i),X_i\sim P$ concentrated around 2, another spike of $r(X_i),X_i\sim Q$ concentrated around 0, and a smaller group of samples with overlapping $r(X_i)$ from both groups in the middle. This histogram plot will be useful in high-dimensional evaluation to pinpoint regions of low fidelity or coverage and regions of good representation when it is expensive to create evenly spaced grids. 

\begin{figure}[ht]
    \centering
    \includegraphics[width=0.6\linewidth]{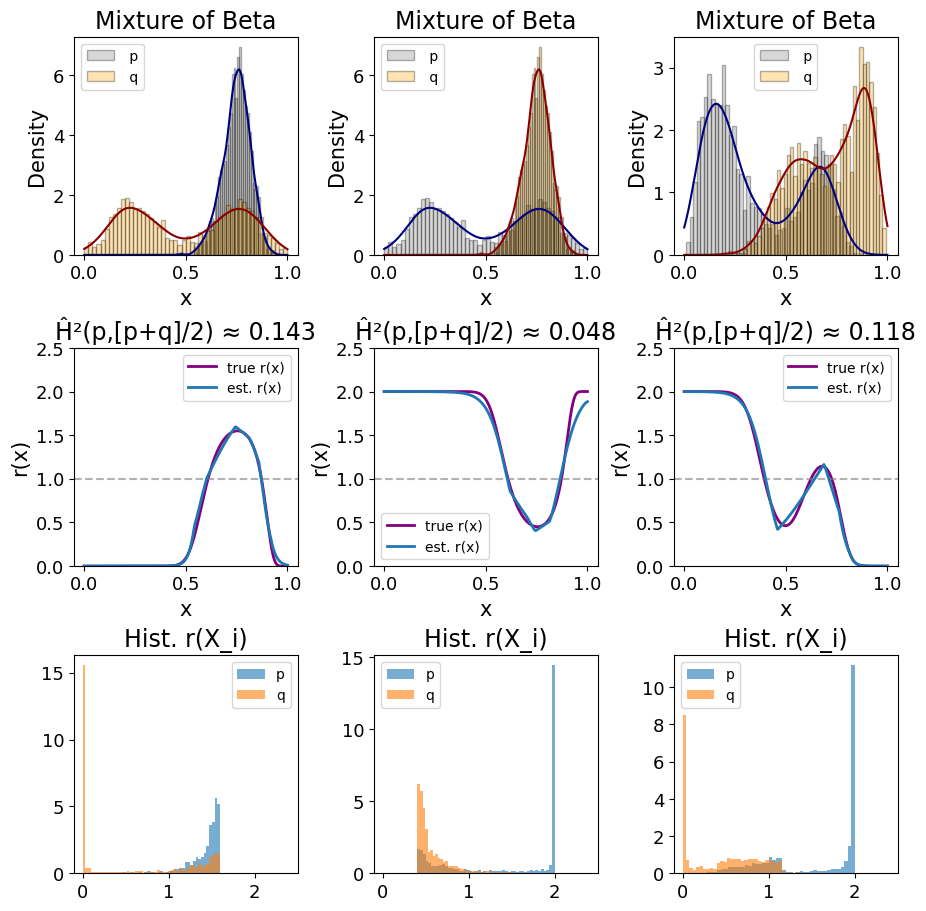}
    \caption{A 1D 3-mixture of Beta densities demonstrating the precision-recall comparison, simulation of Figure 1 in \cite{pmlr-v97-simon19a}. Left (model hallucination, low fidelity): partial precision, full recall. Middle (low coverage): partial recall, full precision. Right: partial recall and precision, a mixed case.}
    \label{fig:oneD_beta}
\end{figure}

\section{Theoretical Analysis}\label{sec:estimation_and_theory}

Based on the M-estimator theory \citep{van1996weak}, with the objective function being the variational form of the squared Hellinger distance \eqref{eq:objective}, we first provide the general consistency and convergence rate results for estimators that satisfy a certain entropy bound of the function space in Section~\ref{subsec:convergence_general}. Then we explore the sparse neural network function space with 1-Lipschitz activation functions such as ReLU and provide a convergence rate result when the true ratio function is assumed to be in the anisotropic Besov space. In this section, we use $p_0,q_0$ to specifically denote the true density functions for distributions $P,Q$. Denote $g_0$ as the true density ratio $g_0 = p_0/q_0$, and $r_0$ as the true relative density ratio $r_0=\frac{2p_0}{p_0+q_0}$.

\subsection{General convergence theory}\label{subsec:convergence_general}

The following theorem establishes the almost sure consistency of the density ratio function as an M-estimator for the variational form of the squared Hellinger distance. The main proof techniques follow directly from \cite{Nguyen2010-xb}, with the adaptation to find a metric that can lower bound the difference between functions induced by the Hellinger distance.

Let $\cG$ be the general function class representing the search space for the optimization problem in the variational form. As discussed in Section~\ref{sec:RDR}, the variational form of the squared Hellinger divergence in Table \ref{tb:phi_div} can be written as 

\begin{equation}\label{eq:hellinger_objective}
\sup_{g>0} \br{1-\frac{1}{2}\int g^{-\frac{1}{2}} \dd\bP - \frac{1}{2}\int g^{\frac{1}{2}}\dd\bQ}.    
\end{equation}

The optimizer for this objective function is at $g_0=p_0/q_0$. Let $\cG^{1/2}:=\br{g^{1/2}:g\in\cG}$ and $\cG^{-1/2}:=\br{g^{-1/2}:g\in\cG}$. Let $\cG^{1/2}-g_0^{1/2}$ be the shifted function space $\br{g^{1/2}-g_0^{1/2}:g\in\cG}$, similarly $\cG^{-1/2}-g_0^{-1/2}$ the shifted space of $\cG^{-1/2}$. To describe the size of the search space $\cG$, we denote $N(\delta,\cG,L^r(\bP))$ as the smallest covering number of the space $\cG$ with metric $L^r(\bP)$ and radius $\delta$. We introduce the following assumptions for the general convergence results. 

\begin{assumption}\label{asm:sample_size_ratio}
    Assume that $\frac{m}{n}\to \rho$ as $m,n\to\infty$, where $\rho>0$ is a finite constant.
\end{assumption}

\begin{assumption} \label{asm:gen_consist_true} 
    The function class $\cG$ contains some $g$ such that $g=g_0$ almost surely.
\end{assumption}

\begin{assumption} \label{asm:g_L1_bound}
    Define envelope functions $G_0(x):=\sup_{g\in \cG}|g^{1/2}(x)|$, $G_1(x):=\sup_{g\in\cG}|g^{-1/2}(x)|$. Assume $\int G_0\dd\bQ< \infty$, $\int G_1\dd\bP< \infty$. Denote $M_0:=\int G_0\dd\bQ< \infty$.
\end{assumption}

\begin{assumption}\label{asm:general_entropy}
    Assume that for any $\delta>0$, 
        \begin{align*}
            &\frac{1}{m}\log N(\delta, \cG^{1/2}-g_0^{1/2},L^1(\bQ_m)) \overset{\bQ}{\to} 0\\
            &\frac{1}{n}\log N(\delta, \cG^{-1/2}-g_0^{-1/2},L^1(\bP_n)) \overset{\bP}{\to} 0
        \end{align*}
\end{assumption}

Assumption~\ref{asm:gen_consist_true} is used for the consistency proof. Assumption~\ref{asm:g_L1_bound} is a standard assumption in empirical process theory about the envelop functions, and Assumption~\ref{asm:general_entropy} is the entropy condition for the general estimator space $\cG$. Similar assumptions as Assumptions~\ref{asm:gen_consist_true} to \ref{asm:general_entropy} are made in \citet{Nguyen2010-xb}.

Recall the squared Hellinger distance is denoted as $H^2(\bP\|\bQ):=\frac{1}{2}\int (\sqrt{\dd\bP}-\sqrt{\dd\bQ})^2$. The corresponding sample-level objective value to \eqref{eq:hellinger_objective} is denoted as 
\begin{equation}
    \hat H^2 = 1 - \frac{1}{2}\int \hat g_n^{-1/2} \dd\bP_n -  \frac{1}{2}\int \hat g_n^{1/2} \dd\bQ_m
\end{equation}

\begin{theorem}\label{thm:consistency}
    Under Assumption \ref{asm:sample_size_ratio} to \ref{asm:general_entropy}, $|\hat H^2-H^2(\bP\|\bQ)|\overset{\as}{\to}0$, $\|\hat g_n^{1/2}-g_0^{1/2}\|_{L^1(\bQ)}\overset{\as}{\to}0$ and $\|\hat g_n^{-1/2}-g_0^{-1/2}\|_{L^1(\bP)}\overset{\as}{\to}0$.
\end{theorem}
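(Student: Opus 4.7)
The plan is to follow the standard M-estimator consistency template: combine a uniform strong law of large numbers on $\cG$ with a quadratic lower bound that links differences in the population objective to the desired $L^1$ discrepancies. Write the population and empirical criteria from \eqref{eq:objective} as
\begin{equation*}
J(g)=1-\tfrac{1}{2}\!\int g^{-1/2}\dd\bP-\tfrac{1}{2}\!\int g^{1/2}\dd\bQ,\qquad
J_n(g)=1-\tfrac{1}{2}\!\int g^{-1/2}\dd\bP_n-\tfrac{1}{2}\!\int g^{1/2}\dd\bQ_m,
\end{equation*}
so that $\hat g_n=\arg\max_{g\in\cG}J_n(g)$ and $\hat H^2=J_n(\hat g_n)$. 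Using $p_0=g_0 q_0$, a direct calculation gives $J(g_0)=1-\int\sqrt{p_0 q_0}\,\dd x=H^2(\bP\|\bQ)$ along with the two equivalent identities
\begin{equation*}
J(g_0)-J(g)=\tfrac{1}{2}\!\int g^{-1/2}\bigl(g^{1/2}-g_0^{1/2}\bigr)^{2}q_0\,\dd x=\tfrac{1}{2}\!\int g^{1/2}\bigl(g^{-1/2}-g_0^{-1/2}\bigr)^{2}p_0\,\dd x,
\end{equation*}
which identify $g_0$ as the unique maximizer of $J$ on $\cG$ (by Assumption~\ref{asm:gen_consist_true}) and supply the key lower bound needed to convert objective gaps into $L^1$ discrepancies.

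Next I establish the uniform strong law $\sup_{g\in\cG}|J_n(g)-J(g)|\overset{\as}{\to}0$. The two summands are handled separately via a Glivenko--Cantelli argument (e.g.\ Theorem~2.4.3 of \cite{van1996weak}) applied to the shifted classes $\cG^{1/2}-g_0^{1/2}$ and $\cG^{-1/2}-g_0^{-1/2}$: the integrable envelopes in Assumption~\ref{asm:g_L1_bound} supply the required domination, the $L^1$ covering-entropy conditions in Assumption~\ref{asm:general_entropy} supply the required asymptotic entropy negligibility, and Assumption~\ref{asm:sample_size_ratio} lets me diagonalize over $(m,n)\to\infty$. Combined with the pointwise SLLN $J_n(g_0)\overset{\as}{\to}J(g_0)$, the textbook M-estimator chain
\begin{equation*}
0\le J(g_0)-J(\hat g_n)\le\bigl[J(g_0)-J_n(g_0)\bigr]+\bigl[J_n(\hat g_n)-J(\hat g_n)\bigr]\le|J(g_0)-J_n(g_0)|+\sup_{g\in\cG}|J_n(g)-J(g)|
\end{equation*}
yields $J(\hat g_n)\overset{\as}{\to}J(g_0)$, and then $|\hat H^2-H^2(\bP\|\bQ)|\le |J_n(\hat g_n)-J(\hat g_n)|+|J(\hat g_n)-J(g_0)|\overset{\as}{\to}0$ delivers the first conclusion of the theorem.

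For the two $L^1$ conclusions, apply Cauchy--Schwarz to the quadratic identities above: for every $g\in\cG$,
\begin{equation*}
\|g^{1/2}-g_0^{1/2}\|_{L^1(\bQ)}^{2}\le 2\bigl(J(g_0)-J(g)\bigr)\!\int g^{1/2}\dd\bQ\le 2M_0\bigl(J(g_0)-J(g)\bigr),
\end{equation*}
and analogously $\|g^{-1/2}-g_0^{-1/2}\|_{L^1(\bP)}^{2}\le 2\bigl(J(g_0)-J(g)\bigr)\!\int G_1\,\dd\bP$, which is finite by Assumption~\ref{asm:g_L1_bound}. Evaluating at $g=\hat g_n$ and invoking $J(\hat g_n)\overset{\as}{\to}J(g_0)$ closes the argument for both almost-sure $L^1$ statements.

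The main technical obstacle I anticipate lies in the uniform SLLN step, not in the identities or the Cauchy--Schwarz bound. The classes $\cG^{\pm 1/2}$ need not be uniformly bounded, so one cannot appeal to the bounded-class Glivenko--Cantelli theorem directly; careful truncation against the envelopes $G_0,G_1$ combined with the $L^1(\bP_n)$ and $L^1(\bQ_m)$ entropy hypothesis of Assumption~\ref{asm:general_entropy} is what makes the convergence almost sure (rather than merely in probability) and uniform simultaneously across both sample indices. The remainder of the proof is essentially bookkeeping once this uniform SLLN is in place.
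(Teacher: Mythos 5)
Your proposal is correct and follows essentially the same route as the paper: a uniform strong law over $\cG$ (obtained from the envelope and $L^1$ covering-entropy conditions, with the $g_0$-centering needed to invoke them), combined with the quadratic identity $J(g_0)-J(g)=\tfrac12\int g^{-1/2}(\sqrt g-\sqrt{g_0})^2\dd\bQ$ and Cauchy--Schwarz against the envelope to convert the objective gap into $L^1$ consistency. The paper packages the same two steps slightly differently --- it splits the error into centered terms $v_n(g)$ and the $g_0$-term $w_n$ rather than asserting $\sup_g|J_n(g)-J(g)|\to0$ directly, and it works with the pseudo-distance $d(g,g_0)$ rather than $J(g_0)-J(g)$ --- but these are cosmetic.
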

This proof of Theorem \ref{thm:consistency} follows the proof of Theorem 1 in \cite{Nguyen2010-xb}, whose theory is based on the Kullback-Leibler divergence, with adaptation to the squared Hellinger distance.

Note that Assumption~\ref{asm:general_entropy} is a commonly used assumption when $\cG^{1/2}-g_0^{1/2}$ is replaced by some general function class $\cF$. However, this assumption characterizes the size of the function class in a probabilistic statement through empirical measures $\bP_n,\bQ_m$. In the following Lemma~\ref{lem:entropy_bound}, we show that a bounded bracketing number with the population metric $L^1(\bP)$ implies that the entropy number with the empirical metric $L^1(\bP_n)$ is of the order $o_P(n)$. 

A bracketing number of a function class $\cF$ is defined as follows: for a metric $\|\cdot\|$, given any small $\delta>0$, a $\delta$-bracket is defined as a pair of functions such that $l\leq f\leq u$ and $\|l-u\|\leq \delta$. The bracketing number $N_{[]}(\delta,\cF,\|\cdot\|)$ is defined as the minimum number of $\delta$-brackets needed to cover $\cF$.

\begin{lemma}\label{lem:entropy_bound}
    For a general function class $\cF$ and for any $\epsilon>0$, $N_{[]}(\epsilon, \cF, L^1(\bP))<\infty$ implies $\frac{1}{n}\log N(\epsilon, \cF, L^1(\bP_n))\overset{\bP}{\to}0$.
\end{lemma}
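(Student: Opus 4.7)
The plan is to upgrade population-level bracketing control to empirical-level covering control via the strong law of large numbers applied bracket-by-bracket. First I would fix $\epsilon > 0$ and invoke the hypothesis to obtain a finite number $N := N_{[]}(\epsilon/2, \cF, L^1(\bP))$ of brackets $[l_k, u_k]$, $k = 1,\ldots,N$, with population widths $\|u_k - l_k\|_{L^1(\bP)} \le \epsilon/2$ that together cover $\cF$. Each width function $|u_k - l_k|$ is $\bP$-integrable, so Kolmogorov's SLLN yields, for every fixed $k$,
\[
\|u_k - l_k\|_{L^1(\bP_n)} = \frac{1}{n}\sum_{i=1}^n |u_k(X_i) - l_k(X_i)| \;\overset{\as}{\to}\; \|u_k - l_k\|_{L^1(\bP)} \;\le\; \epsilon/2.
\]

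Since only $N$ brackets are involved, a finite union bound shows that the event $A_n := \{\|u_k - l_k\|_{L^1(\bP_n)} \le \epsilon \text{ for all } k = 1,\ldots,N\}$ satisfies $\bP(A_n) \to 1$. On $A_n$ the same collection $\{[l_k, u_k]\}_{k=1}^N$ is an $\epsilon$-bracketing under the empirical metric $L^1(\bP_n)$; moreover, because any $f \in [l_k, u_k]$ satisfies $\|f - l_k\|_{L^1(\bP_n)} \le \|u_k - l_k\|_{L^1(\bP_n)} \le \epsilon$, the centers $\{l_k\}_{k=1}^N$ form an $\epsilon$-cover of $\cF$ under $L^1(\bP_n)$. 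Consequently $N(\epsilon, \cF, L^1(\bP_n)) \le N$ on $A_n$, and on that event
\[
\frac{1}{n}\log N(\epsilon, \cF, L^1(\bP_n)) \;\le\; \frac{\log N}{n} \;\longrightarrow\; 0
\]
deterministically as $n \to \infty$. Combined with $\bP(A_n) \to 1$, this delivers $\frac{1}{n}\log N(\epsilon, \cF, L^1(\bP_n)) \overset{\bP}{\to} 0$, as claimed.

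No substantive obstacle arises here: the argument is really just the pointwise SLLN plus a finite-cardinality union bound, relying on the standard fact that $\delta$-brackets induce $\delta$-covers in $L^1$ by taking the lower endpoint as the center. The only subtlety to be careful about is that the brackets are selected \emph{before} the sample is drawn, so $N$ does not depend on $n$; this is what lets the union bound succeed and keeps the covering cardinality $o(e^n)$ trivially. No concentration inequalities or chaining arguments are required, because $\epsilon$ is held fixed throughout and $\log N$ is a constant that is killed by the $1/n$ factor.
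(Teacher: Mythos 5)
Your proof is correct, and it is a cleaner variant of the same underlying idea as the paper's. Both proofs use the brackets (selected at the population level) to build an empirical $\epsilon$-cover and then appeal to a law-of-large-numbers argument to control the empirical widths. The difference is how the empirical width is controlled. You dominate $\|f - l_k\|_{L^1(\bP_n)}$ directly by $\|u_k - l_k\|_{L^1(\bP_n)}$ (valid because $l_k \le f \le u_k$ pointwise), and then apply the SLLN only to the $N$ fixed width functions $|u_k - l_k|$; that entirely avoids any supremum over $f \in \cF$. The paper instead writes $\|f-f^L_m\|_{L^1(\bP_n)} - \|f-f^L_m\|_{L^1(\bP)} = (\bP_n-\bP)f - (\bP_n-\bP)f^L_m$, bounds the first term uniformly by invoking the Glivenko--Cantelli theorem (Theorem 2.4.1 in van der Vaart and Wellner, which itself is a consequence of the finite bracketing hypothesis), and bounds the second by a finite application of the SLLN, then adds back the population bound $\|f-f^L_m\|_{L^1(\bP)}\le\epsilon$ to get a $2\epsilon$-cover. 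Your route buys economy: it removes the Glivenko--Cantelli detour (which is a slightly heavier tool than needed and is itself proved from bracketing plus LLN), and it halves the radius at the outset so no $2\epsilon \to \epsilon$ rescaling is needed at the end. Both yield the same conclusion, but your argument is the more elementary and self-contained of the two.
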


Lemma \ref{lem:entropy_bound} implies that Assumption~\ref{asm:general_entropy} can be replaced by $N_{[]}(\epsilon, \cG^{1/2}-g_0^{1/2}, L^1(\bP))<\infty$ and $N_{[]}(\epsilon, \cG^{-1/2}-g_0^{-1/2}, L^1(\bP))<\infty$.

Based on the general consistency result in Theorem~\ref{thm:consistency}, we can show the convergence rate of $g$ with additional assumptions.

Define the squared Hellinger metric for functions with respect to measure $\bQ$ as 
\begin{equation}
    h^2_\bQ(g,g_0):=\int (\sqrt{g} - \sqrt{g_0})^2\dd\bQ.
\end{equation}
In the proof of the convergence rate, we need to characterize the following two function spaces:
\begin{equation}\label{eq:def_F}
    \cF_{\delta}^P:=\br{ g^{-1/2} - g_0^{-1/2} : g\in \cG, h_\bQ (g,g_0)<\delta }, \quad 
    \cF_{\delta}^Q:=\br{ g^{1/2} - g_0^{1/2} : g\in \cG, h_\bQ (g,g_0)<\delta  }. 
\end{equation}
We follow the notations in \cite{van1996weak} and denote the bracketing entropy integral as follows:
\[\tilde J_{[]}(\delta,\cF,\|\cdot\|):=\int_0^\delta \sqrt{1+\log N_{[]}(\epsilon,\cF,\|\cdot\|)} \dd \epsilon.\]

The size of $\cF_{\delta}^P$ and $\cF_{\delta}^Q$ will determine the convergence rate. We make the following assumptions to characterize the function spaces.

\begin{assumption} \label{asm:g_bound}
    Assume that $\sup_{g\in\cG}|g| < M_u <\infty$, $\sup_{g\in\cG}|g^{-1}| < M_l<\infty$ for some constants $M_u,M_l$.
\end{assumption}
Assumption~\ref{asm:g_bound} is stronger than Assumption~\ref{asm:g_L1_bound}. A similar assumption is also required for the convergence rate in \citet{Nguyen2010-xb}. 
\begin{assumption}\label{asm:general_entropy_integral} 
 Assume that for any $\epsilon>0$, $\log N_{[]}(\epsilon,\cF_\eta^P,L^2(\bP)) = O(\epsilon^{-\gamma_p})$, and $\log N_{[]}(\epsilon,\cF_\eta^Q,L^2(\bQ)) = O(\epsilon^{-\gamma_q})$ for some positive constants $\gamma_p<2,\gamma_q<2$, and some fixed small positive constant $\eta$.
\end{assumption}

\begin{theorem}\label{thm:convergence_rate}
    Under Assumptions \ref{asm:sample_size_ratio}, \ref{asm:gen_consist_true}, and \ref{asm:general_entropy} to \ref{asm:general_entropy_integral}, $h_\bQ(\hat g_n,g_0) = O_p(n^{-\frac{1}{2+\gamma}})$ where $\gamma = \min\br{\gamma_p.\gamma_q}$.
\end{theorem}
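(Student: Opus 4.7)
The plan is to apply the standard M-estimator rate theorem, for instance Theorem 3.2.5 of \citet{van1996weak}, which requires two ingredients: (i) a quadratic lower bound on the excess population risk in the metric $h_\bQ$, and (ii) a modulus-of-continuity bound for the centered empirical process in shrinking $h_\bQ$-balls around $g_0$. Denote the population criterion by $M(g) := 1 - \tfrac12\int g^{-1/2}\dd\bP - \tfrac12\int g^{1/2}\dd\bQ$ and its empirical version by $M_n$. Substituting $\dd\bP = g_0\dd\bQ$ and completing the square,
\begin{align*}
M(g_0) - M(g) \;=\; \tfrac{1}{2}\int g^{-1/2}\bigl(g^{1/2} - g_0^{1/2}\bigr)^{2}\,\dd\bQ \;\geq\; \tfrac{1}{2\sqrt{M_u}}\, h_\bQ^{2}(g, g_0),
\end{align*}
where Assumption~\ref{asm:g_bound} supplies $g^{-1/2} \geq M_u^{-1/2}$. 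Theorem~\ref{thm:consistency} guarantees that $\hat g_n$ eventually lies in any fixed $h_\bQ$-neighborhood of $g_0$ where this curvature bound is in force, so the rate machinery can be invoked.

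For (ii) I would decompose
\begin{align*}
(M_n - M)(g) - (M_n - M)(g_0) \;=\; -\tfrac12(\bP_n - \bP)\bigl[g^{-1/2} - g_0^{-1/2}\bigr] \;-\; \tfrac12(\bQ_m - \bQ)\bigl[g^{1/2} - g_0^{1/2}\bigr],
\end{align*}
with the first piece indexed by $\cF_\delta^P$ and the second by $\cF_\delta^Q$ as in \eqref{eq:def_F}. The identity $g^{-1/2} - g_0^{-1/2} = (g_0^{1/2} - g^{1/2})/\sqrt{g g_0}$ combined with Assumption~\ref{asm:g_bound} yields $|g^{-1/2} - g_0^{-1/2}| \leq M_l\,|g^{1/2} - g_0^{1/2}|$, so that elements of $\cF_\delta^P \cup \cF_\delta^Q$ have $L^2(\bP)$ and $L^2(\bQ)$ norms bounded by constant multiples of $\delta$ (here one also uses $g_0 \leq M_u$ to pass from $L^2(\bQ)$ to $L^2(\bP)$ via $\dd\bP = g_0\dd\bQ$). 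A standard bracketing maximal inequality (Lemma 3.4.2 of \citet{van1996weak}) then gives
\begin{align*}
\phi_n(\delta) \;:=\; E^{*}\!\!\sup_{g\in\cG,\; h_\bQ(g,g_0)\leq\delta}\!\! \sqrt{n}\,\bigl|(M_n - M)(g) - (M_n - M)(g_0)\bigr| \;\lesssim\; \tilde J_{[]}\!\bigl(\delta, \cF_\delta^P, L^2(\bP)\bigr) + \tilde J_{[]}\!\bigl(\delta, \cF_\delta^Q, L^2(\bQ)\bigr),
\end{align*}
where Assumption~\ref{asm:sample_size_ratio} absorbs $\sqrt{n/m}$ into the constant. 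Under Assumption~\ref{asm:general_entropy_integral} these two entropy integrals are of order $\delta^{1-\gamma_p/2}$ and $\delta^{1-\gamma_q/2}$, so $\phi_n(\delta) \lesssim \delta^{1-\gamma/2}$ with $\gamma = \min(\gamma_p,\gamma_q)$.

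With $\phi_n(\delta) \asymp \delta^{1-\gamma/2}$, which satisfies the monotonicity requirement that $\phi_n(\delta)/\delta^{\alpha}$ be decreasing for any $\alpha \in (1-\gamma/2, 2)$, the critical rate $r_n$ from the M-estimator theorem is determined by $r_n^{2}\,\phi_n(r_n^{-1}) \leq \sqrt{n}$, i.e.\ $r_n^{1 + \gamma/2} \lesssim \sqrt{n}$, giving $r_n \asymp n^{1/(2+\gamma)}$ and hence $h_\bQ(\hat g_n, g_0) = O_p\!\bigl(n^{-1/(2+\gamma)}\bigr)$. The main obstacle I anticipate is the careful passage from the $h_\bQ$-constraint defining $\cF_\delta^P$ (a $Q$-side object) to the $L^2(\bP)$-envelope control demanded by the maximal inequality on the first empirical process; the two-sided bounds in Assumption~\ref{asm:g_bound} are precisely what make $g \mapsto g^{-1/2}$ uniformly Lipschitz against $g \mapsto g^{1/2}$ on $\cG$, and without them the two empirical processes cannot be synchronized at a common $\delta$-scale and the entropy integral bound would fail to propagate across both terms.
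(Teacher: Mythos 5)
Your proposal follows the same route as the paper's proof: both invoke Theorem~3.2.5 of van der Vaart and Wellner, establish the curvature bound $M(g_0)-M(g)\gtrsim h_\bQ^2(g,g_0)$ via the identity $M(g_0)-M(g)=\tfrac12\int g^{-1/2}(g^{1/2}-g_0^{1/2})^2\dd\bQ$ together with $g\le M_u$, bound the two centered empirical processes with Lemma~3.4.2 over the classes $\cF_\delta^P,\cF_\delta^Q$ after transferring the $h_\bQ$-ball to an $L^2(\bP)$ envelope using the two-sided bounds in Assumption~\ref{asm:g_bound}, and then solve $r_n^2\phi_n(1/r_n)\lesssim\sqrt{n}$ with $\phi_n(\delta)\asymp\delta^{1-\gamma/2}$ to obtain $r_n\asymp n^{1/(2+\gamma)}$. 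The argument and all key steps match; your proof is correct.
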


The above consistency and convergence rate results of the density ratio $g(x)$ can be directly applied to the relative density ratio $r(x)$, since throughout the proof of Theorem~\ref{thm:consistency} and Theorem~\ref{thm:convergence_rate}, we do not assume any independence between $\bP_n$ and $\bQ_m$. For the result of $r(x)$, we can view it as a type of density ratio by simply changing $\bQ$ to any general mixture $\widetilde\bQ:=\alpha\bP+(1-\alpha)\bQ, \alpha\in (0,1)$, similarly for $\bQ_m$. In our case, $\alpha=0.5$.

\begin{corollary}\label{col:general_rdr}
Under Assumptions \ref{asm:sample_size_ratio}, \ref{asm:gen_consist_true}, and \ref{asm:general_entropy} to \ref{asm:general_entropy_integral}, when replacing $\bQ$ by $\widetilde\bQ$, we have $h_{\widetilde\bQ}(\hat r_n,r_0) = O_p(n^{-\frac{1}{2+\gamma}})$ where $\gamma = \min\br{\gamma_p.\gamma_q}$.
\end{corollary}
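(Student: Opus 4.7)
The plan is to recognize that the relative density ratio $r_0 = 2p_0/(p_0+q_0)$ is precisely the Radon--Nikodym derivative $\dd\bP/\dd\widetilde\bQ$ for the mixture $\widetilde\bQ = \tfrac{1}{2}\bP + \tfrac{1}{2}\bQ$, so that $\hat r_n$ defined via \eqref{eq:RDR_objective} is nothing but the density ratio estimator analyzed in Theorem~\ref{thm:convergence_rate} applied to the pair $(\bP, \widetilde\bQ)$ with empirical counterpart $(\bP_n, \widetilde\bQ_m)$, where $\widetilde\bQ_m = \tfrac{1}{2}\bP_n + \tfrac{1}{2}\bQ_m$. Hence, up to the verification that nothing in the proof of Theorem~\ref{thm:convergence_rate} breaks when $(\bQ,\bQ_m)$ is replaced by $(\widetilde\bQ,\widetilde\bQ_m)$, the conclusion should follow immediately.

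First, I would check that each hypothesis of Theorem~\ref{thm:convergence_rate} transfers under this relabeling. Assumption~\ref{asm:sample_size_ratio} is unchanged; the effective sample size for $\widetilde\bQ_m$ is $n+m \asymp n$ since $m/n \to \rho$, so phrasing the final rate in $n$ alone costs only a fixed constant. The envelope condition in Assumption~\ref{asm:g_L1_bound} transfers because
\begin{equation*}
\int G_0 \,\dd\widetilde\bQ \;=\; \tfrac{1}{2}\int G_0\,\dd\bP + \tfrac{1}{2}\int G_0\,\dd\bQ,
\end{equation*}
which is finite whenever both $\int G_0\,\dd\bP$ and $\int G_0\,\dd\bQ$ are finite; the pointwise bound in Assumption~\ref{asm:g_bound} is unaffected; and the entropy conditions in Assumptions~\ref{asm:general_entropy} and \ref{asm:general_entropy_integral} are to be understood with $\bQ$ replaced by $\widetilde\bQ$ throughout (as indicated by the statement ``when replacing $\bQ$ by $\widetilde\bQ$'').

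The main obstacle---and the one flagged by the authors in the paragraph preceding the corollary---is that $\widetilde\bQ_m$ is not an ordinary iid empirical measure from $\widetilde\bQ$: it shares the sample $\bP_n$ with the first empirical measure appearing in \eqref{eq:RDR_objective}, so the empirical processes $(\bP_n - \bP)$ and $(\widetilde\bQ_m - \widetilde\bQ)$ are correlated. I would handle this by the linear decomposition
\begin{equation*}
\widetilde\bQ_m - \widetilde\bQ \;=\; \tfrac{1}{2}(\bP_n - \bP) \;+\; \tfrac{1}{2}(\bQ_m - \bQ),
\end{equation*}
so that for any function class $\cF$ we obtain the crude bound $\|\widetilde\bQ_m - \widetilde\bQ\|_\cF \le \tfrac{1}{2}\|\bP_n-\bP\|_\cF + \tfrac{1}{2}\|\bQ_m-\bQ\|_\cF$. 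Each of the two summands is a genuine iid empirical process, so the standard maximal inequalities driven by the bracketing entropy integrals over $\cF_\eta^P$ and $\cF_\eta^Q$ in Assumption~\ref{asm:general_entropy_integral} apply to each separately. Crucially, the proof of Theorem~\ref{thm:convergence_rate} already controls the moduli of these two processes on their own and combines them by the triangle inequality without invoking independence between $\bP_n$ and $\bQ_m$; the same chain of inequalities therefore passes through verbatim and yields the peeling-type bound that produces the rate $n^{-1/(2+\gamma)}$ with $\gamma = \min\{\gamma_p,\gamma_q\}$.

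Assembling these ingredients, Theorem~\ref{thm:convergence_rate} applied to the pair $(\bP,\widetilde\bQ)$ with the empirical measures $(\bP_n,\widetilde\bQ_m)$ delivers $h_{\widetilde\bQ}(\hat r_n, r_0) = O_p(n^{-1/(2+\gamma)})$, which is the claim of Corollary~\ref{col:general_rdr}. I expect the only nontrivial bookkeeping in a detailed write-up to be the explicit step where the peeling argument in the proof of Theorem~\ref{thm:convergence_rate} invokes a maximal inequality on each of the two summands $(\bP_n-\bP)$ and $(\bQ_m-\bQ)$ separately, and then uses $m/n \to \rho$ to convert the $m$-rate to an $n$-rate before adding.
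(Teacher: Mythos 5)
Your proposal is correct and follows the same route as the paper: the paper's own argument for Corollary~\ref{col:general_rdr} is essentially the brief remark preceding it, namely that the proof of Theorem~\ref{thm:convergence_rate} never uses independence between $\bP_n$ and $\bQ_m$ and bounds the two empirical-process terms separately by the triangle inequality, so one may replace $(\bQ,\bQ_m)$ by $(\widetilde\bQ,\widetilde\bQ_m)$ throughout. Your write-up is slightly more careful than the paper's: the identity $\widetilde\bQ_m-\widetilde\bQ=\tfrac12(\bP_n-\bP)+\tfrac12(\bQ_m-\bQ)$, which reduces the non-iid weighted empirical process $\bG_m^{\widetilde\bQ}$ to a linear combination of the two genuine iid empirical processes $\bG_n^P$ and $\bG_m^Q$, is exactly what justifies applying the maximal inequality (Lemma~3.4.2 of \citealp{van1996weak}) in the new setting, and the paper leaves this step implicit. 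You are also right that the second-moment checks transfer cheaply, since $h_{\widetilde\bQ}(g,r_0)<\delta$ gives $\bP f^2<2\delta^2$ and $\bQ f^2<2\delta^2$ for the relevant bracketing classes, so the peeling argument of Theorem~\ref{thm:convergence_rate} runs unchanged and yields the rate $n^{-1/(2+\gamma)}$ after using $m/n\to\rho$.
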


\textbf{Remark.} Due to the boundedness of $r(x)$ as shown in Proposition~\ref{prop:RDR}, the upper boundedness assumptions in Assumptions~\ref{asm:g_L1_bound} and \ref{asm:g_bound} are no longer needed in Corollary~\ref{col:general_rdr}.

\subsection{Sieved Neural Network-based M-estimators}

The above section provides convergence rate results for general estimators, as long as the function space of the estimator satisfies the entropy conditions in Assumption~\ref{asm:general_entropy_integral}. In this section, we provide a convergence rate result in terms of sparse neural networks and assume that the true density $g_0$ is in a very general function class, specifically the anisotropic Besov space \citep{nikol2012approximation,triebel2006theory,suzuki2021deep}. Our result is built on \cite{suzuki2021deep}, which provides the approximation error of the sparse neural network for functions in the anisotropic Besov space. In this section, we assume that the support of the density ratio is on $\cX=[0,1]^d$.

\subsubsection{A Brief Review on the Anisotropic Besov Space}

We first introduce the anisotropic Besov space, following the same notations as in  \cite{suzuki2021deep}.

\begin{definition}[Anisotropic Besov space $(B^{\beta}_{s,t}(\cX))$]
    For any function $f\in L^s(\cX)$ where $s\in (0,\infty]$, the $r$-th modulus of smoothness of $f$ is defined as 
    \begin{align*}
        w_{r,s}(f,t) &= \sup_{h\in \R^d:|h_i|<t_i} \|\Delta_h^r(f)\|_{L^s(\cX)}, \numberthis \label{eq:modulus_of_smoothness}
    \end{align*}
    where $\Delta_h^r(f)(x) := \Delta_h^{r-1}(f)(x+h) - \Delta_h^{r-1}(f)(x)$, $\Delta_h^0(f)(x) := f(x).$
    For $s,t\in (0,\infty]$, the smoothness vector $\beta=(\beta_1,\dots,\beta_d)^\rT\in\R^d_{++}$, $r=\max_i \lfloor \beta_i \rfloor +1$, define the seminorm $|\cdot|_{B_{s,t}^\beta}$ as 
    \begin{align*}
        |f|_{B^{\beta}_{s,t}} := 
        \begin{cases}
        \left( \sum_{k=0}^{\infty} \big[ 2^k w_{r,s}\!\left(f, (2^{-k/\beta_1}, \ldots, 2^{-k/\beta_d})\right)\big]^t \right)^{1/t}, & (t < \infty), \\
        \sup_{k \geq 0} \, 2^k w_{r,s}\!\left(f, (2^{-k/\beta_1}, \ldots, 2^{-k/\beta_d})\right), & (t = \infty). \numberthis \label{eq:seminorm}
        \end{cases}
    \end{align*}
The anisotropic Besov space norm is defined as $\|f\|_{B^{\beta}_{s,t}}:=\|f\|_{L^s(\cX)} + |f|_{B^{\beta}_{s,t}}$. The anisotropic Besov space is defined as 
\begin{equation}\label{eq:asnisotropic_besov_space}
    B^{\beta}_{s,t}(\cX) = \br{f\in L^s(\cX):\|f\|_{B^{\beta}_{s,t}} < \infty} 
\end{equation}
\end{definition}

To understand this, the anisotropic Besov space $B^{\beta}_{s,t}(\cX)$ is controlled by $\beta$, the smoothness vector that allows for different smoothness $\beta_i$ in each direction; $s$, indicating that the smoothness is measured in $L^s(\cX)$; and $t$, used in the seminorm \eqref{eq:seminorm}. The seminorm \eqref{eq:seminorm} measures the localized anisotropic smoothness of $f$ in terms of the scale of oscillation. Unlike Sobolev spaces that measure smoothness through derivatives, the modulus of smoothness in \eqref{eq:modulus_of_smoothness} measures the finite difference given an incremental change $h$ and does not require differentiability. The seminorm \eqref{eq:seminorm} dyadically partitions the incremental change $h$ in different directions, $|h_i|<2^{-k/\beta_i}$, and $2^k$ is a normalization factor. Summing over $k$ aggregates the smoothness across different partition layers. \cite{triebel2011entropy} and \cite{suzuki2021deep} provide more detailed results on the relation between H\"older space and anisotropic Besov space. For example (Proposition 1 in \citet{suzuki2021deep}), when $\beta=(\beta_0,\dots,\beta_0)\in \bR^d$, and $\beta_0\notin \bN$, then  H\"older space $\cC^{\beta_0}=B^\beta_{\infty,\infty}$. 

The main advantage of assuming the true density ratio $g_0$ lies in $B^{\beta}_{s,t}(\cX)$ is that $B^{\beta}_{s,t}(\cX)$ allows for different smoothness in each direction. This coincides with our belief in two-sample comparisons in high dimensions, especially when comparing generative models: the two-sample difference may only lie in a few directions, especially when the generative models can produce realistic samples.

\subsubsection{Feedforward Neural Network Formulation}\label{subsubsec:ffnn}

Feedforward neural networks, including common architectures such as MLP and CNN, are usually formulated as a sequence of compositions in the following way. Let $U$ be the maximum number of nodes in each layer; $L$ is the total number of layers. Let $f^{(1)}(x) = \sigma\sbr{W^{(1)}x+b^{(1)}}$ be the first layer of the neural network for $x\in \R^d, W^{(1)}\in \R^{U\times d}, b^{(1)}\in \R^U$, where $\sigma$ is the element-wise activation function. We assume $\sigma$ is an element-wise 1-Lipschitz function and $\sigma(0)=0$; for example, the ReLU activation $\sigma(x) = \max\br{0,x}$ satisfies the element-wise 1-Lipschitz condition ($|f(x)-f(y)|\leq |x-y|$). Let $f^{(l)}(\cdot) = \sigma\sbr{W^{(l)}(\cdot)+b^{(l)}}$ be the $l$-layer, $W^{(l)}\in \R^{U\times U}, b^{(l)}\in \R^U$. For the last layer $f^{(L)}(\cdot) = \sigma\sbr{W^{(L)}(\cdot)+b^{(L)}}$, $W^{(L)}\in \R^{1\times U}, b^{(L)}\in \R$.

 We slightly abuse the notation and let $\|W\|_\infty:=\max_{i,j}|W_{i,j}|$, which is different from the matrix infinity norm. Similarly, $\|b\|_\infty:=\max_j|b_j|$. We define the following neural network function space characterized by the number of layers $L$, the maximum number of nodes in each layer $U$, the sparsity $S$, and the norm constraint $B$,
\begin{equation}\label{eq:def_nn}
    \Phi(L,U,S,B):=\br{f^{(L)}\circ \dots \circ f^{(1)}(x): \sum_{l=1}^L \sbr{\|W^{(l)}\|_0 + \|b^{(l)}\|_0} \leq S, \max_{l} \br{\|W^{(l)}\|_\infty, \|b^{(l)}\|_\infty} \leq B }
\end{equation}
This is a commonly used method of neural network parameterization in the literature \citep{suzuki2021deep,kaji2023adversarial}.

\subsubsection{Convergence Rate for Sieved Neural Network}

Denote $\cG_n$ as the sieve space of neural networks. The sieved M-estimator allows the search space $\cG_n$ to grow with sample size $n$, and is eventually flexible enough to approximate the true function (see details in Section 3.4 of \citet{van1996weak}). We define the sieve space according to the neural network parameter scales in Proposition 2 in \cite{suzuki2021deep}. Let $\cG_n  =\Phi(L_N,U_N,S_N,B_N)$, where the neural network parameters depend on $N$, and $N=N(n)$ depends on the sample size $n$. The dependence of $(L_N,U_N,S_N,B_N)$ on $N$ is specified in Proposition 2 in \cite{suzuki2021deep} and in our proof \eqref{eq:nn_params}. The rate $N(n)$ will be specified later in the proof.

Let $\cU\sbr{B^{\beta}_{s,t}(\cX)}$ be the unit ball of $B^{\beta}_{s,t}(\cX)$. Denote $\tilde\beta = \sbr{\sum_{j=1}^d 1/\beta_j}^{-1}$ as the harmonic mean of the smoothness vector $\beta$. 

\begin{theorem}\label{thm:sieve_nn_rate}
Under Assumption~\ref{asm:sample_size_ratio}, assume the functions in $\cG_n$ and $g_0$ satisfy the upper and lower bounds in Assumption~\ref{asm:g_bound} and $\|q_0^2/p_0\|_\infty <\infty$. If  $g_0\in \cU\sbr{B^{\beta}_{s,t}(\cX)}$ where $\tilde\beta>(1/s-1/2)_+$, then  the convergence rate for the sparse neural network density ratio estimator is 
    \begin{align*}
        h_\bQ(\hat g_n,g_0) = O_p(n^{-\frac{\tilde\beta}{2\tilde\beta+1}}\sbr{\log n}^{\frac{3\tilde\beta}{2\tilde\beta+1}}). \numberthis\label{eq:nn_rate}
    \end{align*}
\end{theorem}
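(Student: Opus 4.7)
The plan is to combine the general convergence rate in Theorem~\ref{thm:convergence_rate} with the approximation theory for sparse ReLU networks over anisotropic Besov spaces developed in \cite{suzuki2021deep}. Since $g_0$ is not assumed to lie in $\mathcal{G}_n$, Assumption~\ref{asm:gen_consist_true} fails and I would need a sieved version of Theorem~\ref{thm:convergence_rate}: picking any good approximator $g^\star \in \mathcal{G}_n$, decompose
\begin{equation*}
    h_\bQ(\hat g_n, g_0) \leq h_\bQ(\hat g_n, g^\star) + h_\bQ(g^\star, g_0),
\end{equation*}
so that the overall rate is the maximum of an estimation term (controlled by the complexity of $\mathcal{G}_n$) and an approximation term (controlled by how well $\mathcal{G}_n$ captures $g_0$). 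The parameters $(L_N,U_N,S_N,B_N)$ are then chosen to balance the two.

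First, I would apply Proposition~2 of \cite{suzuki2021deep} to obtain, for each resolution $N$, a sparse network $g^\star \in \Phi(L_N,U_N,S_N,B_N)$ with $\|g^\star-g_0\|_{L^\infty(\cX)} \lesssim N^{-\tilde\beta}$, where $L_N \asymp \log N$, $S_N \asymp N\log N$, and $U_N,B_N$ are polynomial in $N$. The condition $\tilde\beta > (1/s-1/2)_+$ guarantees the Besov embedding into $L^\infty$ needed for this sup-norm approximation. Because Assumption~\ref{asm:g_bound} bounds $g$ above and below and $\|q_0^2/p_0\|_\infty<\infty$ keeps $g_0$ away from zero, the map $u\mapsto\sqrt{u}$ is bi-Lipschitz on the relevant range, so $h_\bQ(g^\star,g_0)^2 \lesssim \|g^\star-g_0\|_{L^\infty}^2 \lesssim N^{-2\tilde\beta}$.

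Next, I would control the local bracketing entropy of the sieve. A standard perturbation argument yields
\begin{equation*}
\log N_{[]}\!\bigl(\epsilon,\Phi(L_N,U_N,S_N,B_N),\|\cdot\|_\infty\bigr) \lesssim S_N L_N \log\!\bigl(B_N U_N L_N/\epsilon\bigr) \asymp N (\log N)^2 \log(1/\epsilon),
\end{equation*}
and again by bi-Lipschitz continuity of $u\mapsto u^{\pm 1/2}$ on a bounded positive range, the same bound (up to constants) transfers to the shifted classes $\cF_\eta^P$ and $\cF_\eta^Q$ in \eqref{eq:def_F}. Plugging this into the peeling argument underlying Theorem~\ref{thm:convergence_rate} (cf.\ Theorem~3.4.1 of \cite{van1996weak}) produces a statistical fluctuation of order $\sqrt{N(\log N)^3/n}$ for $h_\bQ(\hat g_n, g^\star)$, once the approximation bias $N^{-\tilde\beta}$ is also folded in via the basic inequality for the sieved M-estimator. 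Balancing $N(\log N)^3/n$ against $N^{-2\tilde\beta}$ gives $N \asymp (n/\log^3 n)^{1/(2\tilde\beta+1)}$ and hence the asserted rate $n^{-\tilde\beta/(2\tilde\beta+1)}(\log n)^{3\tilde\beta/(2\tilde\beta+1)}$.

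The main obstacle I expect is establishing the quadratic curvature of the variational objective around $g_0$ in the metric $h_\bQ$, which is what lets one convert a bound on the excess objective into a rate for $h_\bQ(\hat g_n,g_0)$. Unlike the Kullback--Leibler case treated in \cite{Nguyen2010-xb}, here the loss $\tfrac12\int g^{-1/2}\dd\bP + \tfrac12\int g^{1/2}\dd\bQ$ contains both $g^{1/2}$ and $g^{-1/2}$, and one must simultaneously lower-bound its excess by $h_\bQ(g,g_0)^2$ and upper-bound the empirical process increments on both $\cF^P$ and $\cF^Q$. Both steps crucially require the two-sided bounds of Assumption~\ref{asm:g_bound} together with $\|q_0^2/p_0\|_\infty<\infty$, which ensure the true ratio lies in a compact sub-interval of $(0,\infty)$ where $u\mapsto u^{\pm1/2}$ are strongly convex/concave; this is the technical heart of the argument.
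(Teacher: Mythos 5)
Your overall plan is essentially the paper's: decompose $h_\bQ(\hat g_n, g_0) \leq h_\bQ(\hat g_n, g_n) + h_\bQ(g_n, g_0)$, verify the sieved M-estimator conditions of Theorem~3.4.1 of \cite{van1996weak}, bound the bracketing entropy of the sparse ReLU class, and invoke Proposition~2 of \cite{suzuki2021deep} for the approximation error. You also correctly identify quadratic curvature of the loss in $h_\bQ$ as the technical heart; the paper handles that via a dedicated lemma (Lemma~\ref{lem:sieve_cond1}) using only Assumption~\ref{asm:g_bound}.

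However, the approximation step as you wrote it would not go through. You claim $\tilde\beta > (1/s-1/2)_+$ guarantees a sup-norm approximation $\|g^\star - g_0\|_{L^\infty} \lesssim N^{-\tilde\beta}$, but that condition corresponds to approximation in $L^2$, not $L^\infty$; sup-norm approximation via the Besov embedding would require the stronger $\tilde\beta > 1/s$, which is not assumed. The paper instead invokes the $L^2(\cX)$ version of Proposition~2 in \cite{suzuki2021deep}, so $\|g_n - g_0\|_{L^2(\cX)} \lesssim N^{-\tilde\beta}$, and then converts this to the $h_\bQ$ metric by writing
\[
h^2_\bQ(g_n,g_0)
=\int \Bigl(\frac{g_n-g_0}{\sqrt{g_n}+\sqrt{g_0}}\Bigr)^2 \dd\bQ
\leq \int (g_n-g_0)^2\,\frac{q_0}{g_0}\,\dd\leb
= \int (g_n-g_0)^2\,\frac{q_0^2}{p_0}\,\dd\leb
\leq \|q_0^2/p_0\|_\infty\,\|g_n-g_0\|^2_{L^2(\cX)}.
\]
This is exactly what the assumption $\|q_0^2/p_0\|_\infty < \infty$ is for: it lets one pass from Lebesgue $L^2$ to the $\bQ$-weighted Hellinger metric. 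You instead attribute it to ``keeping $g_0$ away from zero,'' but that is already ensured by Assumption~\ref{asm:g_bound}, which the theorem statement imposes on $g_0$ as well. So the structure of your argument is right, but you need to replace the $L^\infty$ approximation by the $L^2$ one and then use $\|q_0^2/p_0\|_\infty<\infty$ for the change of measure; without that correction, the claimed sup-norm control is unavailable under the stated hypotheses, and the approximation term cannot be bounded.
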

\noindent\textbf{Remark.} The boundedness Assumption~\ref{asm:g_bound} can be enforced for neural network estimators by setting a bounded image of the activation function in the last layer. The unit ball assumption of $g_0$ is reasonable because we already assume the boundedness of $g_0$ in Assumption~\ref{asm:g_bound}.

Previously, in the proof of Theorem~\ref{thm:convergence_rate}, consistency is required; the true function $g_0$ must be in the same function space $\cG$ as the estimator; and the entropy condition is prespecified in Assumptions~\ref{asm:general_entropy} and \ref{asm:general_entropy_integral}. Here in Theorem~\ref{thm:sieve_nn_rate}, consistency is implied but not required, and we allow $g_0$ to be outside of $\cG_n$, although the distance between  $\cG_n$ and $\cU\sbr{B^{\beta}_{s,t}(\cX)}$ goes to 0 as $N\to\infty$ (see Proposition 2 of \cite{suzuki2021deep}); the entropy condition of the sparse neural network space $\cG_n$ is shown in Lemma~\ref{lem:nn_bracket_number} rather than being prespecified. Similar to Corollary~\ref{col:general_rdr}, since the proof of Theorem~\ref{thm:sieve_nn_rate} does not require the independence of $\bP_n$ and $\bQ_m$, we can directly apply this result to the general relative density ratio $r(x)$ by replacing $Q$ with $\widetilde Q=\alpha P + (1-\alpha) Q$, and similarly for $\widetilde Q_m$.

\begin{corollary}\label{col:nn_rdr_rate}
Under the same assumptions in Theorem~\ref{thm:sieve_nn_rate} but with $\bQ,\bQ_m$ replaced by $\widetilde\bQ$    and $\widetilde\bQ_m$, we have $h_{\widetilde\bQ}(\hat r_n,r_0) = O_p(n^{-\frac{\tilde\beta}{2\tilde\beta+1}}\sbr{\log n}^{\frac{3\tilde\beta}{2\tilde\beta+1}})$.
\end{corollary}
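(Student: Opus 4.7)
The plan is to deduce Corollary~\ref{col:nn_rdr_rate} as a direct reparameterization of Theorem~\ref{thm:sieve_nn_rate}. The key algebraic observation is that
\[
r_0(x) \;=\; \frac{2\,p_0(x)}{p_0(x)+q_0(x)} \;=\; \frac{p_0(x)}{\widetilde q_0(x)},
\]
where $\widetilde q_0 = \alpha p_0 + (1-\alpha) q_0$ is the density of $\widetilde Q$. Thus $r_0$ is literally the density ratio of $P$ with respect to $\widetilde Q$, and the RDR objective \eqref{eq:RDR_objective} defining $\hat r_n$ is exactly the standard density-ratio objective \eqref{eq:objective} defining $\hat g_n$ under the substitution $\bQ_m \mapsto \widetilde\bQ_m = \alpha \bP_n + (1-\alpha)\bQ_m$. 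Once the proof of Theorem~\ref{thm:sieve_nn_rate} is shown to remain valid under this substitution, the rate for $h_{\widetilde\bQ}(\hat r_n,r_0)$ follows immediately.

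The corollary's hypotheses coincide, term by term, with the hypotheses of Theorem~\ref{thm:sieve_nn_rate} after this substitution: Assumption~\ref{asm:g_bound} is imposed on the neural-network class and on $r_0$, the moment condition $\|\widetilde q_0^{\,2}/p_0\|_\infty<\infty$ replaces $\|q_0^{\,2}/p_0\|_\infty<\infty$, and the Besov regularity $r_0\in\cU(B^{\beta}_{s,t}(\cX))$ replaces that of $g_0$. Two routine technical checks carry over automatically: the bracketing-entropy bound for $\cG_n$ in $L^2(\widetilde\bQ)$ coincides with the $L^2(\bQ)$ bound up to a constant, because $\widetilde Q$ is a convex combination of $P$ and $Q$ and so any $\epsilon$-bracket in $L^2(P)$ or $L^2(Q)$ is an $O(\epsilon)$-bracket in $L^2(\widetilde Q)$; and Assumption~\ref{asm:sample_size_ratio} ensures the effective sample size of $\widetilde\bQ_m$ is $n+m \asymp n$, so the stochastic rate is unchanged.

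The only nontrivial point --- and the main potential obstacle --- is that $\widetilde\bQ_m$ shares the samples $X_1,\ldots,X_n$ with $\bP_n$, so the two empirical measures are not independent. However, the proof of Theorem~\ref{thm:sieve_nn_rate} (relying on the empirical-process machinery of Section~\ref{subsec:convergence_general}) controls the stochastic error by bounding the two terms $\int (g^{-1/2}-g_0^{-1/2})\,\dd(\bP_n-\bP)$ and $\int (g^{1/2}-g_0^{1/2})\,\dd(\bQ_m-\bQ)$ by \emph{marginal} bracketing-entropy maximal inequalities combined with a peeling argument in the functional Hellinger distance. Each inequality invokes only the marginal law of a single empirical process; the joint law $\bP_n\otimes \bQ_m$ is nowhere used. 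Consequently, the same marginal bounds apply with $\bQ_m$ replaced by $\widetilde\bQ_m$, whose marginal distribution concentrates around $\widetilde Q$ at the same rate, even though $\widetilde\bQ_m$ and $\bP_n$ are dependent. This reproduces the rate $O_p(n^{-\tilde\beta/(2\tilde\beta+1)} (\log n)^{3\tilde\beta/(2\tilde\beta+1)})$ for $h_{\widetilde\bQ}(\hat r_n,r_0)$ and completes the proof.
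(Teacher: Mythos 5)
Your proposal is correct and follows essentially the same route as the paper: identify $r_0 = p_0/\widetilde q_0$ as the density ratio with respect to the mixture $\widetilde Q$, observe that the proof of Theorem~\ref{thm:sieve_nn_rate} nowhere invokes independence of $\bP_n$ and $\bQ_m$ (only marginal empirical-process bounds), and apply the theorem verbatim under the substitution $\bQ,\bQ_m \mapsto \widetilde\bQ,\widetilde\bQ_m$. The paper leaves these points implicit in a single remark; you usefully spell out the extra bookkeeping (entropy comparability of $L^2(\widetilde\bQ)$ and $L^2(\bQ)$, the decomposition $\widetilde\bQ_m - \widetilde\bQ = \tfrac12(\bP_n-\bP)+\tfrac12(\bQ_m-\bQ)$ underlying the dependence issue), but the argument is the same.
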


\noindent \textbf{Remark.} The convergence rate in \eqref{eq:nn_rate} depends only on the smoothness parameter $\tilde\beta$ and does not directly depend on the dimension $d$ of the sample space. This is because we allow the true function $g_0$ to have different smoothness in different directions, and the harmonic mean $\tilde\beta$ is only driven by a few least smooth directions, rather than assuming a spatially homogeneous smoothness bound in all directions (H\"older space), where the dimension $d$ will come into play, as in previous works \citep{nakada2020adaptive}.

\section{Numerical Experiments}\label{sec:numerical_example}

We use three numerical experiments to illustrate the rich information that can be extracted from the relative density ratio $r(x)$: the MNIST dataset \citep{mnist}, the CelebA-64 dataset \citep{CelebA64}, and the microbiome data from the American Gut Project \citep{AmericanGut}. Different from \cite{pmlr-v97-simon19a}, where the evaluation is conducted on the extracted inception features rather than the raw data, we perform all of our numerical experiments on the raw data to preserve the authentic distributional discrepancies. We provide simulation studies in Section~\ref{sec:sim} where the true relative density ratio function is known, and we compare the 4-layer MLP based on the Hellinger loss with other existing density ratio estimation methods such as uLSIF, KLIEP, and the \textit{density ratio trick} as discussed in Section~\ref{subsec:connection}. We use the bounded sigmoid function $\sigma_\alpha(x) = \frac{2}{1-\exp\br{-\alpha x}}$ with $\alpha=1$ as the default output activation function, and add a discussion and sensitivity analysis on the difference choice of output activation function in the Supplementary Section~\ref{supp_sec:act_sensi}.

\subsection{MNIST} \label{subsec:mnist}

The MNIST dataset \citep{mnist} consists of $60{,}000$ training and $10{,}000$ testing grayscale images, and each image, with dimensions $28\times 28$, is preprocessed so that all the voxels are mapped to $[-1,1]$. We test on two pretrained generators with saved checkpoints provided in the GitHub repository \citep{csinva2019ganvae}: VAE \citep{kingma2014vae} and DCGAN \citep{DCGAN}. To train the ratio function, we use batch updates and stochastic gradient descent to iterate through the training set; the architecture for the ratio function is the same as the discriminator in DCGAN, a CNN-based architecture. The ratio function is estimated using the training data, and we evaluate the ratio function on the $10{,}000$ real testing data and another $10{,}000$ generated samples from each generator. 

Figure~\ref{fig:MNIST_VAE} provides visualizations on $\{r(X_p),r(X_q)\}$ for VAE, and Figure~\ref{fig:MNIST_DCGAN} for DCGAN. Immediately from the scalar metric, $h^2(p,\frac{p+q}{2})$ is 0.277 for VAE, and 0.135 for DCGAN, we know that DCGAN is closer to the real sample than VAE. Note that this distance $h^2(p,\frac{p+q}{2})$ directly results from the negative training loss and does not require separate computation since the variational form is the dual problem of the $\phi$-divergence. The sample-level RDR allows us to dive much deeper than simply picking the best generative model.

\begin{figure}[ht!]
    \centering
    \includegraphics[width=\linewidth]{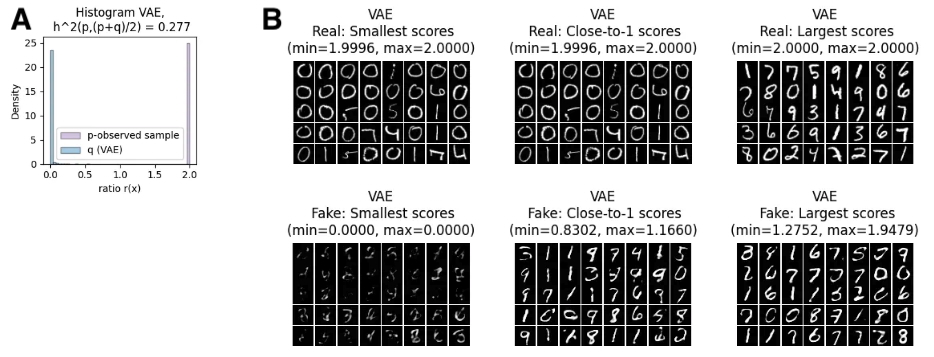}
    \caption{MNIST: VAE result. A: Histogram of $r(X_p),r(X_q)$ evaluated for the real test data and generated images. B: Top row: Real test images; Bottom row: Fake generated images; From left to right: images with $r(X)$ values from the smallest, closest to 1, and the largest; the range of $r(X)$ for the shown images is in each subplot title.
    }
    \label{fig:MNIST_VAE}
\end{figure}

Panel A in Figures~\ref{fig:MNIST_VAE} and \ref{fig:MNIST_DCGAN} shows the histogram of $\{r(X_p),r(X_q)\}$. The majority of the VAE $r(X_q)$ mass concentrates near 0 and has almost no overlap with $r(X_p)$, indicating low coverage and low fidelity. For DCGAN, 
the empirical support of $r(X_q)$ overlaps with $r(X_p)$ on $[0,2]$, 
indicating good coverage. But for DCGAN, there is still a noticeable portion of $r(X_q)$ centered around 0, even though this portion still overlaps with the mass of certain $r(X_p)$ samples, indicating a portion of the real sample is overly represented by the generator DCGAN. The sample-level RDR enables us to examine which regions of the support the DCGAN is biased toward. Because the real sample comes with digit labels, we can first check $r(X_p)$ stratified by digits, as shown in Panel B of Figures~\ref{fig:MNIST_DCGAN}. For DCGAN, although $r(X_p)$ ranges from 0 to 2 for all digits, digits 1 and 6 appear to have the most mass above 1, which means DCGAN tends to under-represent digits 1 and 6.

\begin{figure}[ht!]
    \centering
    \includegraphics[width=0.8\linewidth]{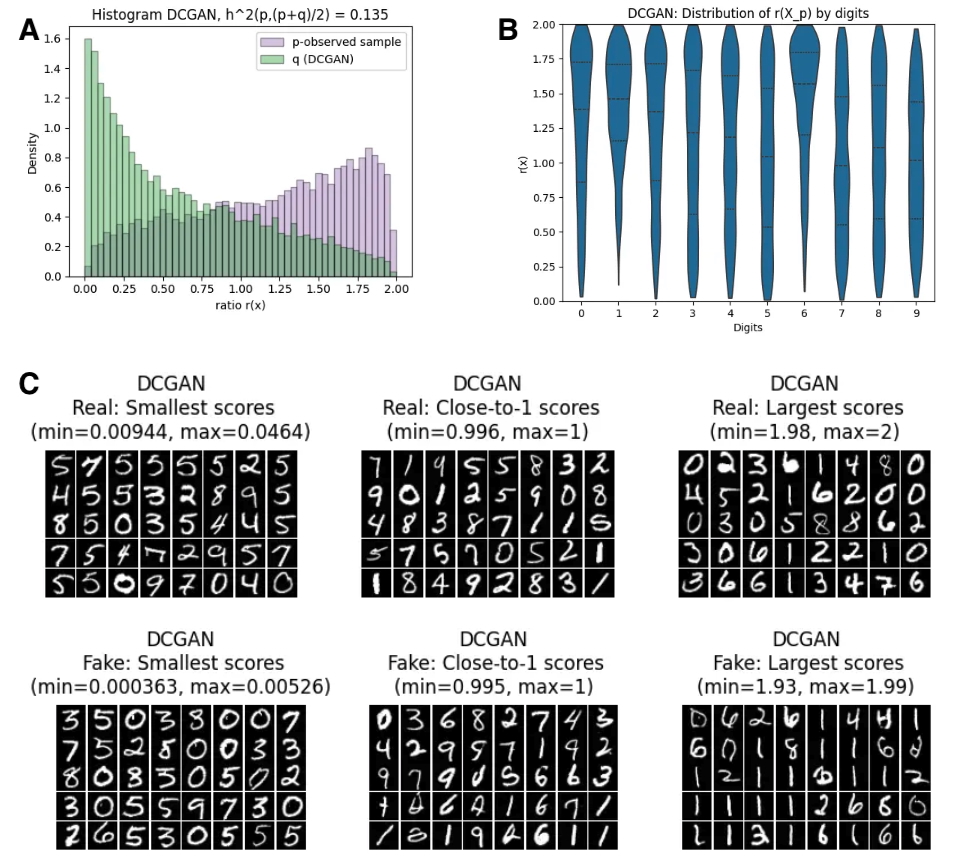}
    \caption{MNIST: DCGAN result. See explanations for the three groups of plots A, B, and C in the caption of Figure~\ref{fig:MNIST_VAE}. A: Histogram of $r(X_p),r(X_q)$ evaluated for the real test data and generated images. B: Violin plot of $r(X_p)$ stratified by digits label in the test data. C: Top row: Real test images; Bottom row: Fake generated images; From left to right: images with $r(X)$ values from the smallest, closest to 1, and the largest; the range of $r(X)$ for the shown images is in each subplot title.}
    \label{fig:MNIST_DCGAN}
\end{figure}

Figure~\ref{fig:MNIST_VAE}(B) and Figure~\ref{fig:MNIST_DCGAN}(C) provides examples of real and fake images whose values $r(X)$ are closest to 0, 1, and 2, respectively. For VAE, because the ranges of both $r(X_p)$ and $r(X_q)$ are very narrow and non-overlapping, the left and middle panels are the same for the real sample (top row); the middle and right panels have overlapping instances in the fake sample (bottom row). Although the generation quality of VAE is not ideal, it can still produce a few realistic images. The fake VAE samples on the left panel (near 0) illustrate poor fidelity, whereas those on the right appear more realistic (near 2). For the DCGAN result in Figure~\ref{fig:MNIST_DCGAN}, the range of $r(X_p)$ and $r(X_q)$ almost fully overlaps. However, the RDR values indicate DCGAN has its own distributional bias compared to the real sample. The over-represented samples (left panel) in Figure~\ref{fig:MNIST_DCGAN}(C) exhibit more rounded, spatially diffused patterns, such as ``5" and ``8", across the 28×28 grid, whereas the under-represented samples (right panel) are either geometrically simpler such as digit ``1”, or more irregular-looking.

\subsection{CelebA-64}\label{subsec:celeb}
The CelebA data \citep{CelebA64}, a widely used benchmark with human celebrity face images, is on a much larger scale, with 162{,}770 training and 19{,}962 testing observations in the form of $3\times 64\times 64$ for the RGB images. Importantly, it also contains rich covariate information: 40 binary facial attributes, such as smiling, male, wearing a hat, eyeglasses, etc. The covariate information can turn the distributional differences measured by $r(X)$ into interpretable features. This aspect (Desiderata C) is useful when evaluating synthetically generated scientific data, such as microbiome relative abundance data, where it is important to ensure that the key attributes match the real sample. We tried evaluating the generators in the low-dimensional feature space, but the low-dimensional mapping tends to smooth out the differences, and the choice of such mapping largely influences the evaluation results; hence, we still keep the evaluation in the original space. We use a DCGAN-style neural network architecture as the ratio learner, with a bounded sigmoid function $(\alpha=0.1)$ as the output to ensure the range is in $[0,2]$, and the training loss has a smoother decay with a small $\alpha$. Supplementary Section~\ref{supp_sec:act_sensi} provides the sensitivity analysis result when using the bounded softplus function as the output activation.

\begin{figure}[ht!]
    \centering
    \begin{subfigure}{0.3\textwidth}
        \includegraphics[width=\linewidth]{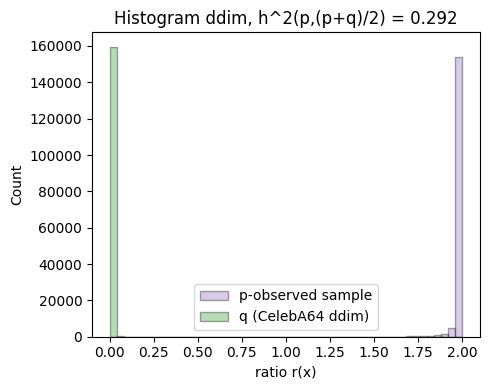}
        \caption{Histogram.}
        \label{fig:hist_ddim_alpha01}
    \end{subfigure}%
    \hfill
    \begin{subfigure}{0.27\textwidth}
        \centering
        \resizebox{\linewidth}{!}{%
        \begin{tabular}{lll}
        \toprule
       & $r(X_p)$ & $r(X_q)$ \\
       \hline
        length & 162770  & 160000  \\
        mean       & 1.99       & 0.00      \\
        std        & 0.05       & 0.03      \\
        min        & 0.10       & 0.00      \\
        q1         & 1.99       & 0.00      \\
        median     & 2.00       & 0.00      \\
        q3         & 2.00       & 0.00      \\
        max        & 2.00       & 1.97     \\
        \bottomrule
        \end{tabular}
        }
        \caption{Summary Statistics.}
        \label{tb:summary_ddim_alpha01}
    \end{subfigure}
    \hfill
    \begin{subfigure}{0.27\textwidth}
        \centering
        \resizebox{\linewidth}{!}{%
        \begin{tabular}{lcc}
        \toprule
        Feature               & Coef  & 95\% CI        \\ \hline
        \textbf{Wearing\_Hat}          & 0.63  & (0.61, 0.66) \\
        Bangs                 & 0.48  & (0.47, 0.50) \\
        Mouth\_Slightly\_Open & 0.44  & (0.43, 0.45) \\
        \textbf{5\_o\_Clock\_Shadow}   & -0.42 & (-0.44, -0.39) \\
        Blurry                & 0.39  & (0.36, 0.41) \\
        Pale\_Skin            & 0.25  & (0.22, 0.28) \\
        Black\_Hair           & -0.24 & (-0.25, -0.22) \\
        Bald                  & 0.22  & (0.18, 0.26) \\
        Sideburns             & -0.21 & (-0.24, -0.18) \\
        Rosy\_Cheeks          & -0.21 & (-0.23, -0.18)\\
        \bottomrule
        \end{tabular}
        }
        \caption{Linear regression of $\log((2-r(X_p))$ on 40 attributes.}
        \label{fig:ddim_log_regression}
        
    \end{subfigure}

    \vskip\baselineskip
    \begin{subfigure}{0.98\textwidth}
        \includegraphics[width=\linewidth]{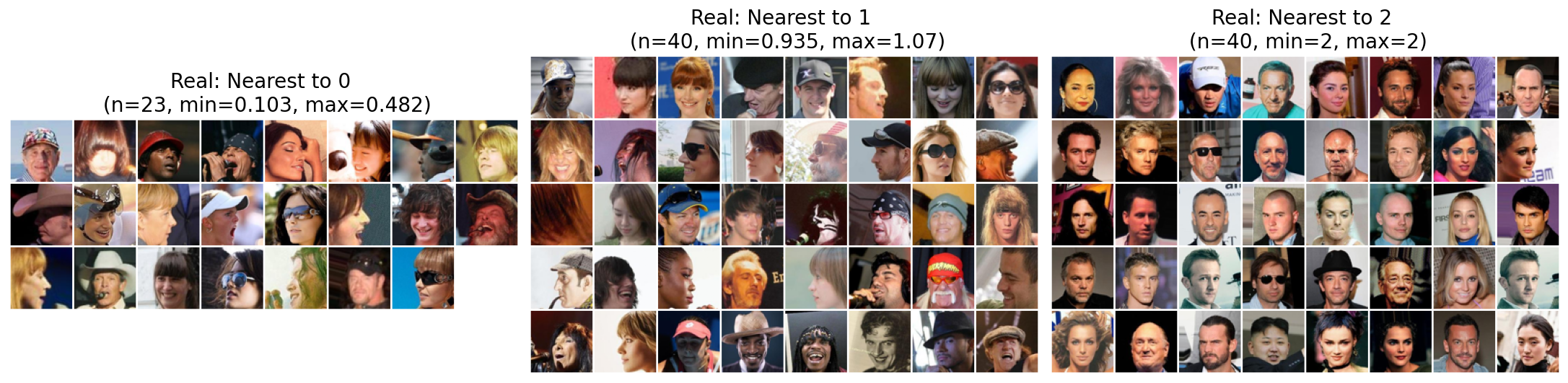}
        \caption{Real images.}
        \label{fig:CelebA_DDIM_real_img_alpha01}
    \end{subfigure}

    \vskip\baselineskip
    \begin{subfigure}{0.98\textwidth}
        \includegraphics[width=\linewidth]{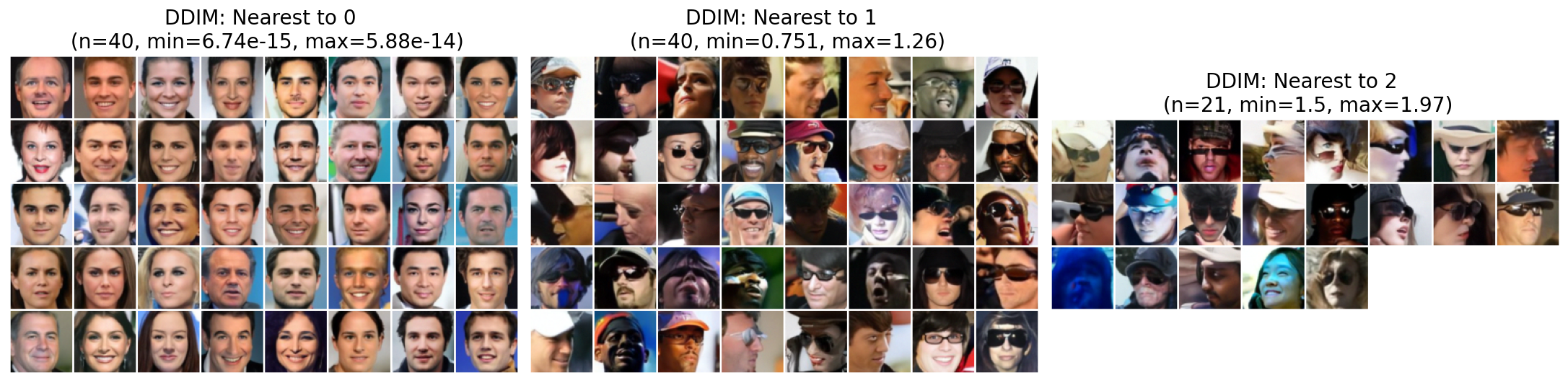}
        \caption{Fake images.}
        \label{fig:CelebA_DDIM_fake_img_alpha01}
    \end{subfigure}

    \caption{CelebA-64: DDIM evaluated on full training data.}
    \label{fig:CelebA_DDIM_alpha01}
\end{figure}

\textbf{DDIM.} Denoising diffusion models such as DDIM \citep{song2021ddim} are known to generate highly realistic images, but the sampling process is relatively slow. Therefore, in this experiment, when evaluating the pretrained DDIM model \citep{ddim_github}, we first generate 160,000 images and store them on disk. We then train the ratio function stochastically using batch updates from this pre-generated dataset. Figure~\ref{fig:CelebA_DDIM_alpha01} provides the analysis result on the DDIM generator. Even though the DDIM generated images in Figure~\ref{fig:CelebA_DDIM_fake_img} are visually realistic, but the vast distributional discrepancy shown in Figure~\ref{fig:hist_ddim} and Table~\ref{tb:summary_ddim} suggests otherwise. The DDIM generator has good coverage, since $r(X_p)$ and $r(X_q)$ have a similar range from near 0 to near 2. However, a vast majority of the DDIM-generated samples are highly concentrated around $r(X_q)$ near 0.  In fact, 159{,}719 out of 160{,}000 generated DDIM samples have $r(X_q)$ values smaller than the least $r(X_p)=0.1026$, indicating a large portion of possibly \textit{hallucinated} images. The left panel in Figure~\ref{fig:CelebA_DDIM_fake_img} shows such examples (more such examples in Supplementary Figure~\ref{fig:DDIM_900_outsupport}), and they are all centered, frontal or near-frontal portraits, with neutral to positive expressions, simple backgrounds, and the lighting appears relatively even and bright, without strong shadows. The overall structure of these faces is consistent, suggesting DDIM generated them from a tight distribution around a latent region representing a certain geometric average of the training data. To further study and interpret such distributional differences, we run a linear regression on the real training data, using $\log\br{2-r(X_p)}$ as the outcome, and the 40 attributes as the predictors. Table~\ref{tb:summary_ddim} further reveals some key features ranked by the largest absolute coefficients: for example, the attribute with the most negative effect is the \textit{5\_o\_Clock\_Shadow} (a feature indicating strong lighting in the image that cast shadows beneath the human face), meaning that it is difficult for DDIM to generate such images with shadows; on the other hand, it is easy for DDIM to generate realistic images with people wearing hat (feature with the most positive effect). These identified features can also be visually checked by comparing the middle and right panels in Figure~\ref{fig:CelebA_DDIM_real_img} and Figure~\ref{fig:CelebA_DDIM_fake_img}. 

This example highlights the strength of using RDR for distributional evaluation, particularly in high dimensions. For face images, distinctive differences between real and generated samples are readily perceived by human eyes. In contrast, for high-dimensional scientific data, such subtle discrepancies are often difficult to interpret directly. However, most scientific datasets are accompanied by rich covariate information, and the ratio-based regression approach leverages this context to make distributional differences more interpretable.

\subsection{American Gut Project}\label{subsec:microbiome}

For this analysis, we use the 2018 release of the American Gut Project \citep{AmericanGut} data, and the preprocessing is publicly available in our GitHub page \hidetext{(\url{https://github.com/yuliangxu/RDR})}. We use the relative abundance data derived from the taxa count data, such that the sum over all taxa for one sample equals 1. 
After preprocessing, we have 614 taxa (columns)  at the species level, and split the data into 80\% for training and 20\% for testing, resulting in training data of $n=10443$ and test data $n=2611$. We apply the ICFM (Independent Continuous Flow Matching) \citep{lipman2022flow, tong2023improving} as the generator on the logistic tree transformed space, where the relative abundance is aggregated through all levels in the phylogenetic tree and mapped to the real line using logit transform. We use the bounded sigmoid with $\alpha=0.1$ to have stably decaying training loss, and use the validation loss evaluated on the testing data to determine the early stop timing and avoid overfitting. The final results presented below are based on 856 training epochs.

\begin{figure}[ht!]
    \centering
    \begin{subfigure}{0.3\textwidth}
        \includegraphics[width=\linewidth]{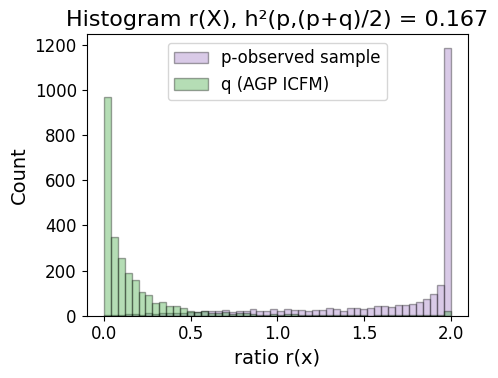}
        \caption{Histogram.}
        \label{fig:hist_icfm}
    \end{subfigure}%
    \hfill
    \begin{subfigure}{0.25\textwidth}
        \centering
        \resizebox{\linewidth}{!}{%
        \begin{tabular}{lll}
        \toprule
       & $r(X_p)$ & $r(X_q)$ \\
       \hline
        length & 2611    & 2611    \\
        mean   & 1.64       & 0.20       \\
        std    & 0.50       & 0.33       \\
        min    & 0.01       & 0.00       \\
        q1     & 1.41       & 0.02       \\
        median & 1.93       & 0.08       \\
        q3     & 2.00       & 0.21       \\
        max    & 2.00       & 2.00    \\
        \bottomrule
        \end{tabular}
        }
        \caption{Summary Statistics.}
        \label{tb:summary_icfm}
    \end{subfigure}
    \hfill
    \begin{subfigure}{0.35\textwidth}
        \includegraphics[width=\linewidth]{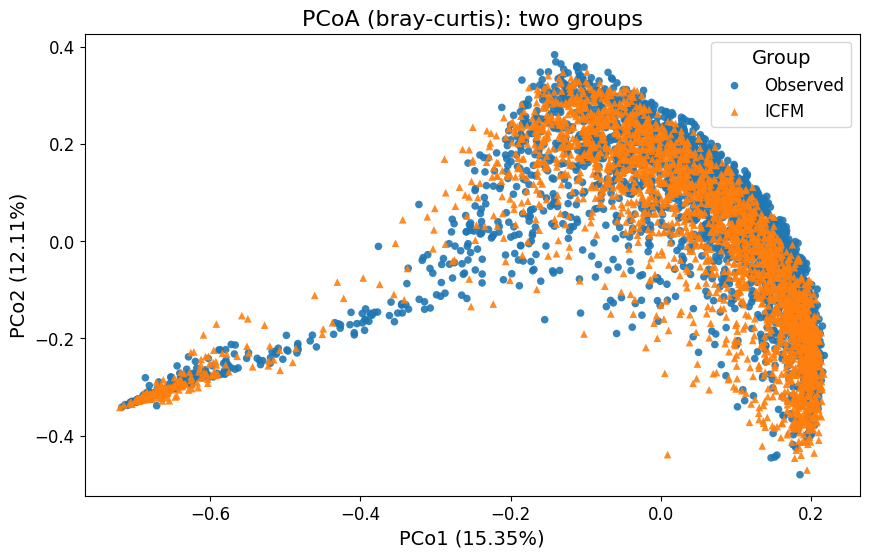}
        \caption{PCoA plot with Bray-Curtis metric.}
        \label{fig:icfm_pcoa}
    \end{subfigure}%

    \vskip\baselineskip
    \begin{subfigure}{0.98\textwidth}
        \includegraphics[width=\linewidth]{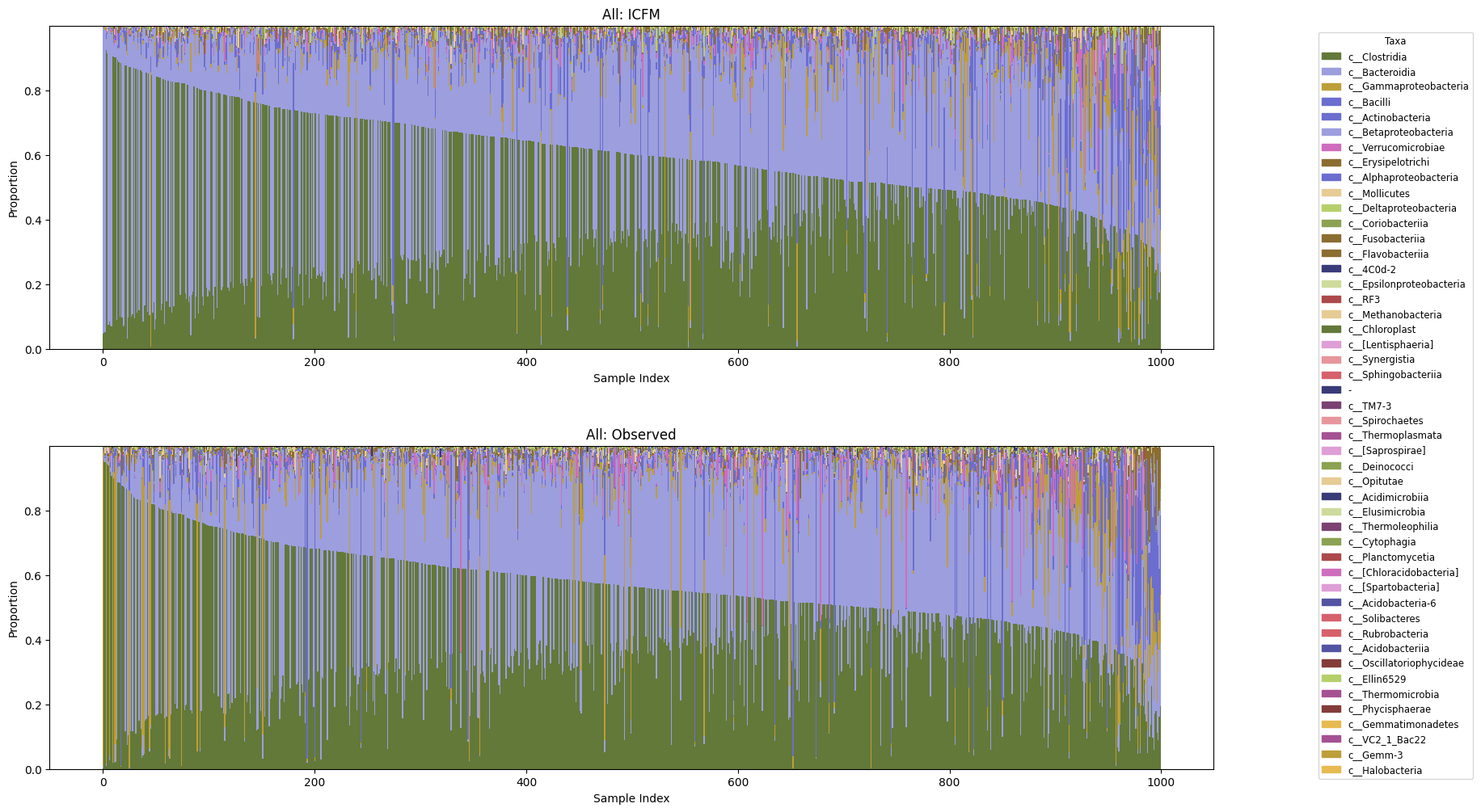}
        \caption{Stacked bar plot comparison on relative abundance.}
        \label{fig:agp_icfm_stacked_bar}
    \end{subfigure}
    \caption{American Gut: ICFM evaluated on testing data.}
    \label{fig:agp_icfm}
\end{figure}

Figure~\ref{fig:agp_icfm_stacked_bar} shows the class-level stacked bar plot in the simplex space of the generated sample (top) and the real sample (bottom). Each column in the stacked bar plot represents one sample, and the y-axis is the proportional breakdown in each class-level relative abundance for better visualization. The samples are ordered by the largest abundance of taxa. Different from the previous CelebA example, where we can visually check the difference between the generated sample and the real sample, visually evaluating the generation quality of the relative abundance data becomes impossible, even with a much smaller dimensionality. Some common visualizations include the alpha diversities to compare within-sample diversity  (Figure~\ref{fig:agp_icfm_alpha_diversity}) and beta diversities such as the PCoA scatter plot using Bray-Curtis metric (Figure~\ref{fig:icfm_pcoa}) to compare between-sample diversity. The ICFM generator performs well in all of these visual comparisons.

To evaluate the distributional difference, we train a ratio function $r(x)$ using the training data and an equal-sized ICFM-generated sample using a simple 4-layer MLP with ReLU activations on the species-level simplex space. Figure~\ref{fig:hist_icfm} shows the histogram of $r(X)$ evaluated on the test sample and another equal-sized ICFM-generated sample, and Table~\ref{tb:summary_icfm} provides the summary statistics. We can see that ICFM still has good coverage based on the range of $r(X)$ in the summary statistics table, but the majority of the generated samples are concentrated in an area outside the support of the observed data. 

To have a deeper understanding of this distributional difference, in particular, which taxa are over-represented/under-represented in the generated sample, we conduct an association test between each level of taxa abundance and the learned ratio. For the association test, we first transform the relative abundance samples using the central log ratio (CLR) transformation, then stack the CLR-transformed real and generated samples, as well as the learned ratio for both samples. At each taxonomic level, we aggregate the relative abundance to that level, and run a Spearman correlation test between the learned ratio and the CLR-transformed samples. Figure~\ref{fig:icfm_sunburst} shows the correlation coefficients at each taxonomic level. If the correlation coefficient for taxon $k$ is positive (red), this means that when $r(X)$ is higher, taxon $k$ increases relative to the geometric mean of its level, and is more abundant in the real sample than the generated sample. In Figure~\ref{fig:icfm_sunburst},  the size of each node (circle sector) reflects the number of descendant taxa contained within that lineage. Based on Figure~\ref{fig:icfm_sunburst}, we can see the tendency that the ICFM-generated sample tends to over-represent taxa with few descendants, whereas for large classes such as \textit{Clostridia}, ICFM tends to have less representation (this is also hinted by Table~\ref{tab:class_association}). In a way, similar to the CelebA result, ICFM as a generator tends to concentrate its mass towards a more \textit{averaged} area in the simplex.

\section{Discussion}\label{sec:discussion}

In this work, we propose a distributional evaluation framework on generative models based on the relative density ratio. This evaluation method not only provides numerical comparisons but also reveals the distributional discrepancy between the generator and the true data-generating mechanism, with clear interpretations that enable downstream covariate analyses. The relative density ratio can be extended for $K$-sample comparison: given $P_1,\dots,P_K$, the proposed RDR in \eqref{eq:RDR} can be modified to $r_k(x):=\frac{p_k(x)}{\frac{1}{K}\sum_{k=1}^K p_k(x)}$. Our theoretical results reveal that the (relative) density ratio mapping preserves $\phi$-divergence of the original distributions. For neural network estimators, when the true ratio function is in the anisotropic Besov space where the smoothness in different directions is allowed to vary, the convergence rate of the ratio function only depends on the harmonic mean of the true function's smoothness and is free of the dimension, which provides a theoretical foundation that explains the good performance of neural-network-based ratio learner in high-dimensional spaces. We have conducted extensive numerical experiments on a wide range of modern popular generators in various frameworks, including VAE, GAN, diffusion model (DDIM), and continuous flow matching (ICFM). Through benchmark image datasets like MNIST and CelebA-64, and microbiome relative abundance data in the American Gut Project, we discover that even for generators that are able to provide realistic synthetic samples, on a distributional level, they all have various biases from the true data-generating mechanism. This phenomenon has been observed in the machine learning community, where synthetic data are used in retraining the generators to lower the cost of authentic data acquisition. However, due to the distributional shift of the synthetic data, these retrained generators tend to degrade to a local region, a phenomenon known as \textit{model collapse}. We anticipate that by offering a principled approach to evaluating distributional discrepancies between synthetic and authentic data, this work will provide valuable guidance for correcting generative model behavior and facilitating model retraining and other applications involving synthetic data.

\section*{Acknowledgement}
\hidetext{We are grateful to Naoki Awaya for his valuable comments and discussions. We thank Ganchao Wei for providing the code to preprocess the American Gut Data and the code to train the ICFM generator on the relative abundance data. This work is supported by NIGMS grant R01-GM135440.} We used ChatGPT-5 to make sentence-level edits in writing and to assist code debugging related to this project.

\section*{Data Availability}
Our GitHub page \hidetext{(\url{https://github.com/yuliangxu/RDR})} and Supplementary code contain all the reproducible workflows to recreate all the analysis results reported in this manuscript. All data used are openly available, and we have included the code to directly download the data online. We use pre-trained models for MNIST and CelebA-64 examples, and have included the URL to download the pretrained checkpoints in our reproducible workflow. 

\spacingset{1.1} 
\bibliographystyle{asa}
\bibliography{ref}

\appendix

\section{Proof of Proposition \ref{prop:RDR}}

\begin{proof}

To see that $r(x)\in [0,2]$, note that $r(x) = \frac{2}{1+q(x)/p(x)}$ where $q(x)/p(x)\in [0,\infty]$, hence $r(x)\in [0,2]$.

To see that the squared Hellinger distance $H^2(p,\frac{p+q}{2})\leq 1-\frac{1}{\sqrt{2}}$, note that
\begin{align*}
    H^2\sbr{p,\frac{p+q}{2}}&=\frac{1}{2}\int_\cX \sbr{\sqrt{p(x)} - \sqrt{\frac{p(x)+q(x)}{2}}}^2 \leb(\dd x)\\
    &=1-\int_\cX \sqrt{p(x)\frac{p(x)+q(x)}{2}} \leb(\dd x) \leq 1-\int_\cX \frac{p}{\sqrt{2}} \leb(\dd x) = 1-\frac{1}{\sqrt{2}}.
\end{align*}

\end{proof}

\section{Proofs for general estimators}

\subsection{Proof of Theorem \ref{thm:dpi}}
\begin{proof}
    According to the relative data processing inequality, $D_\phi(P\|Q) \geq D_\phi(F_{\#}P \| F_{\#}Q)$. By Theorem 3.3 in \cite{Liese2012-rl} equivalent conditions (A) and (C), the equality holds when the transformation $F$ is any sufficient statistic for the family of probabilities $\cP := \{P, Q\}$. 

    According to Theorem 6.12 in Chapter 1 in \cite{lehmann2006theory} (which requires $P$ and $Q$ have common support), the density ratio $g(X)=\frac{p(X)}{q(X)}$ is a minimal sufficient statistic for the family $\cP$. Hence $g$ is a maximizer for $D_\phi(F_{\#}P \| F_{\#}Q)$ among all other measurable maps $F:\cX \to \cY$. 

    Consequently, after any one-to-one mapping, $T\circ g$ is also sufficient for $\cP$, hence also a maximizer for $D_\phi(F_{\#}P \| F_{\#}Q)$.
\end{proof}

\subsection{Proof of Theorem \ref{thm:consistency}}

\begin{proof}

The difference $|\hat H^2-H^2(\bP\|\bQ)|$ can be decomposed into two terms:
\begin{align*}
    \cE_0(\cG):=H^2(\bP\|\bQ) - \sup_{g\in\cG}\br{1-
    \frac{1}{2}\int g^{-1/2} \dd\bP -\frac{1}{2}\int g^{1/2}\dd\bQ}, \\
    \cE_1(\cG):= \sup_{g\in\cG}\br{
    \frac{1}{2}\int g^{-1/2} \dd(\bP_n-\bP) +\frac{1}{2}\int g^{1/2}\dd(\bQ_m-\bQ)}.
\end{align*}
And $|\hat H^2-H^2(\bP\|\bQ)|\leq \cE_0(\cG)+\cE_1(\cG)$. Due to Assumption~\ref{asm:gen_consist_true}, the true density ratio $g_0$ exists almost surely in $\cG$,  and $\cE_0(\cG)\overset{\as}{=}0$.
\begin{align*}
    &\left| \int g^{-1/2} \dd(\bP_n-\bP) +\int g^{1/2}\dd(\bQ_m-\bQ) \right|\\
    \leq & \underbrace{\left| \int \sbr{g^{-1/2} - g_0^{-1/2}} \dd(\bP_n-\bP) +\int \sbr{g^{1/2}-g_0^{1/2}}\dd(\bQ_m-\bQ)\right|}_{v_n(g)} +  \underbrace{\left| \int g_0^{-1/2} \dd(\bP_n-\bP) +\int g_0^{1/2}\dd(\bQ_m-\bQ)\right|}_{w_n}.
\end{align*}
Note that $\int g_0^{1/2}\dd\bQ = \int g_0^{-1/2}\dd\bP =\int \sqrt{p_0q_0}\dd\leb$ where $\leb$ is the base measure. By H\"older inequality, $\int\sqrt{p_0q_0}\dd\leb \leq \|\sqrt{p_0}\|_{L^2}\|\sqrt{q_0}\|_{L^2} = \sqrt{\int p\dd\leb}\sqrt{\int q\dd\leb}=1 <\infty$.  By the strong law of large number, $\int g_0^{1/2}\dd\bQ_m \overset{\as}{\to}\int g_0^{1/2}\dd\bQ$ and $\int g_0^{-1/2}\dd\bP_n\overset{\as}{\to}\int g_0^{-1/2}\dd\bP$, and both limits are finite,  $w_n\overset{\as}{\to}0$. 

By Assumption~\ref{asm:g_L1_bound} and Assumption~\ref{asm:general_entropy}, and Theorem 3.7 in \cite{geer2000empirical}, 
\begin{align*}
    &\sup_{g\in\cG} \int \sbr{g^{-1/2} - g_0^{-1/2}} \dd(\bP_n-\bP)\overset{\as}{\to} 0,\\
    &\sup_{g\in\cG} \int \sbr{g^{1/2}-g_0^{1/2}}\dd(\bQ_m-\bQ)\overset{\as}{\to} 0.
\end{align*}
Hence $\sup_{g\in\cG} v_n(g)\overset{\as}{\to}0$. Consequently, $|\hat H^2-H^2(\bP\|\bQ)|\overset{\as}{\to}0$.

Next, to show the consistency of $\hat g_n$, we first define 
\begin{equation}\label{eq:pseudo_distance}
    d(g,g_0):=\int (g^{1/2}-g_0^{1/2})\dd\bQ + \int (g^{-1/2}-g_0^{-1/2})\dd\bP
\end{equation}
as a pseudo distance between $g$ and $g_0$. We want to upper bound $d(\hat g_n,g_0)$ by $\sup_{g\in\cG}v_n(g)$, and lower bound $d(g,g_0)$ by a proper metric.

To show the upper bound, since $\hat g_n=\arg\min_{g\in\cG} \br{\int g^{-1/2}\dd\bP_n + \int g^{1/2}\dd\bQ_m}$, note that $\int \hat g_n^{-1/2}\dd\bP_n +\int \hat g_n^{1/2}\dd\bQ_m\leq \int g_0^{-1/2}\dd\bP_n +\int g_0^{1/2}\dd\bQ_m$. Then
\begin{align*}
    d(\hat g_n,g_0) &= \int (\hat g_n^{1/2}-g_0^{1/2})\dd\bQ + \int (\hat g_n^{-1/2}-g_0^{-1/2})\dd\bP\\
    &\leq \int (\hat g_n^{1/2}-g_0^{1/2})\dd(\bQ-\bQ_m) + \int (\hat g_n^{-1/2}-g_0^{-1/2})\dd(\bP-\bP_n) \leq \sup_{g\in\cG} v_n(g) \overset{\as}{\to} 0. \numberthis \label{eq:sup_v_bound}
\end{align*}

To show the lower bound, note that for any $g\in\cG$,
\begin{align*}
    d(g,g_0)&= \int (g^{1/2}-g_0^{1/2})\dd\bQ + \int ( g^{-1/2}-g_0^{-1/2})\dd\bP\\
    &=\int \frac{\sbr{\sqrt{g}-\sqrt{g_0}}^2}{\sqrt{g}} \dd\bQ.
\end{align*}
Given Assumption~\ref{asm:g_L1_bound} where $M_0 = \int G_0(x)\dd\bQ  =\int \sup_{g\in \cG}|g^{1/2}|\dd\bQ <\infty$, and H\"older inequality,
\begin{align*}
    \sbr{\int |\sqrt{g} - \sqrt{g_0}|\dd\bQ}^2 & = \sbr{ \int g^{1/4} \br{\frac{\sbr{\sqrt{g} - \sqrt{g_0}}^2}{\sqrt{g}}}^{1/2} \dd\bQ}^2\\
    &\leq \int g^{1/2}\dd\bQ \cdot \int \frac{\sbr{\sqrt{g}-\sqrt{g_0}}^2}{\sqrt{g}} \dd\bQ,\\
    \frac{1}{M_0}\sbr{\int |\sqrt{g} - \sqrt{g_0}|\dd\bQ}^2 &\leq d(g,g_0). \numberthis \label{eq:L1_lower_bound}
\end{align*}
This implies $\|\sqrt{\hat g_n}-\sqrt{g_0}\|_{L^1(\bQ)}\leq \sqrt{M_0d(\hat g_n,g_0)}\overset{\as}{\to}0$.

To show the last part, notice that $d(g,g_0) = \int \sqrt{g} \sbr{\frac{1}{\sqrt{g_0}}-\frac{1}{\sqrt{g}}}^2\dd\bP$. Similarly, denote $M_1=\int G_1\dd\bP<\infty$ in Assumption~\ref{asm:g_L1_bound},
\begin{align*}
    \sbr{\int \absbr{\frac{1}{\sqrt{g_0}}-\frac{1}{\sqrt{g}}}\dd\bP}^2 &= \sbr{\int g^{-1/4}\sbr{\sqrt{g}\absbr{\frac{1}{\sqrt{g_0}}-\frac{1}{\sqrt{g}}}}^{1/2} \dd\bP}^2\\
    &\leq \int g^{-1/2}\dd\bP\int \sqrt{g}\sbr{\frac{1}{\sqrt{g_0}}-\frac{1}{\sqrt{g}}}^2\dd\bP,\\
    \|\hat g_n^{-1/2} - g_n^{-1/2}\|_{L^1(\bP)}\leq \sqrt{M_1 d(\hat g_n,g_0)}\overset{\as}{\to}0.
\end{align*}

\end{proof}

\subsection{Proof of Lemma \ref{lem:entropy_bound}}
\begin{proof}

This result is mentioned in \citet{Nguyen2010-xb}. We give a formal proof here.
By Theorem 2.4.1 in \cite{van1996weak}, the bracketing number $N_{[]}(\epsilon, \cF, L^1(\bP))<\infty$ implies $\cF$ is Glivenko-Cantelli, i.e. $\sup_{f\in\cF} |\bP_n f-\bP f|\to 0$ a.s.

Let $N_{[]}(\epsilon, \cF, L^1(\bP)) =M<\infty$. By the definition of bracketing number, there exist $M$ pairs $\br{f^L_m, f^U_m}_{m=1}^M$ such that $\|f^L_m-f^U_m\|_{L^1(\bP)}\leq \epsilon$ and for any $f\in\cF$, we can find a pair indexed by $m$ such that $f^L_m\leq f\leq f^U_m$. Hence for any $f\in\cF$, there exists $m$ such that $f^L_m\leq f\leq f^U_m$,
\begin{align*}
    \|f-f_m^L\|_{L^1(\bP_n)} - \|f-f_m^L\|_{L^1(\bP)}
    &= \bP_n (f-f_m^L) - \bP (f-f_m^L) = (\bP_n-\bP)f- (\bP_n-\bP)f_m^L,\\
     \sup_{f\in\cF}\big|\|f-f_m^L\|_{L^1(\bP_n)} - \|f-f_m^L\|_{L^1(\bP)} \big|&
     \leq \sup_{f\in\cF} |(\bP_n-\bP)f| + \max_{1\leq m\leq M} |(\bP_n-\bP)f_m^L| \overset{\as}{\to} 0,
\end{align*}
where $\sup_{f\in\cF} |(\bP_n-\bP)f|\to 0$ a.s. is because $\cF$ is Glivenko-Cantelli, and $\max_{1\leq m\leq M} |(\bP_n-\bP)f_m^L|\to 0$ a.s. is by applying the strong law of large number $M$ times. The LHS $\sup_{f\in\cF}\big|\|f-f_m^L\|_{L^1(\bP_n)} - \|f-f_m^L\|_{L^1(\bP)} \big| \to 0$ a.s. implies that there exists $\Omega_0, N_\epsilon$ such that $\bP(\Omega_0)=1$, for any $n>N_\epsilon$ and any $\omega\in\Omega_0$,
\[\sup_{f\in\cF}\big|\|f-f_m^L\|_{L^1(\bP_n)} - \|f-f_m^L\|_{L^1(\bP)} \big| < \epsilon. \]

Since $\|f-f_m^L\|_{L^1(\bP)} \leq \epsilon$, then $\|f-f_m^L\|_{L^1(\bP_n)}\leq 2\epsilon$ for any $\omega\in\Omega_0$. Hence $\br{f_m^L}_{m=1}^M$ form an $2\epsilon$-covering set of cardinality $M$ with metric $L^1(\bP_n)$ a.s. Therefore the smallest covering set $N(2\epsilon,\cF,L^1(\bP_n))\leq N_{[]}(\epsilon, \cF, L^1(\bP))$ a.s. when $n>N_\epsilon$. Since $\epsilon$ is arbitrary, $N(\epsilon,\cF,L^1(\bP_n)) <\infty$ a.s. and also bounded in probability, and $\frac{1}{n}N(\epsilon,\cF,L^1(\bP_n))\overset{P}{\to}0$.
\end{proof}

\subsection{Proof of Theorem \ref{thm:convergence_rate}}

\begin{proof}
    This proof follows Theorem 3.2.5 in \cite{van1996weak}. 

    First, we introduce some notations to relate the M-estimator theory in our context. Denote the objective function in \eqref{eq:hellinger_objective} as
    \begin{equation}\label{eq:M_obj}
        \bM(g):=1-\frac{1}{2}\int g^{-\frac{1}{2}} \dd\bP - \frac{1}{2}\int g^{\frac{1}{2}}\dd\bQ,
    \end{equation}
    and the corresponding empirical process as $\bM_n(g):=1-\frac{1}{2}\int g^{-\frac{1}{2}} \dd\bP_n - \frac{1}{2}\int g^{\frac{1}{2}}\dd\bQ_m$.

    To show the convergence rate, we need to have a metric in the function space $\cG$. In the previous Theorem \ref{thm:consistency}, the $L^1(\bQ)$ distance is used. For the convergence rate result, because most of the existing literature on the entropy number condition is stated in terms of $L^2$ norms, we need a $L^2$ type metric to verify the entropy number conditions. Hence we use a different lower bounding metric here than in \eqref{eq:L1_lower_bound}, given $\sup_{g\in\cG}|g| < M_u$ in Assumption~\ref{asm:g_bound},
    \begin{equation}\label{eq:d_less_h}
        d(g,g_0)\geq \frac{1}{\sqrt{M_u}} \int \sbr{\sqrt{g}-\sqrt{g_0}}^2 \dd\bQ =\frac{1}{\sqrt{M_u}} h_\bQ^2(g,g_0).
    \end{equation}

    Now we need to verify the following four conditions from Theorem 3.2.5 in \cite{van1996weak}.
    
    \noindent\textbf{Conditions to be verified:}
    \begin{enumerate}
        \item $\bM(g) - \bM(g_0) \lesssim - h_\bQ^2(g,g_0)$;
        \item For every $n$ and sufficiently small $\delta$,
        \begin{equation}\label{eq:condition2}
            \bE \sup_{h_\bQ(g,g_0)<\delta} \left|
        (\bM_n-\bM)(g) - (\bM_n-\bM)(g_0)
        \right| \lesssim \frac{\phi_n(\delta)}{\sqrt{n}}
        \end{equation}
        such that $\frac{\phi_n(\delta)}{\delta^\alpha}$ is decreasing  for some $\alpha<2$;
        \item Let $r_n^2\phi_n\sbr{\frac{1}{r_n}}\leq \sqrt{n}$ for every $n$, and verify that $\hat g_n$ satisfies $\bM_n(\hat g_n) \geq \bM_n(g_0) - O_p(r_n^{-2})$;
        \item $h_\bQ(\hat g_n,g_0)\overset{p}{\to}0$.\footnote{The original Theorem 3.2.5 in \cite{van1996weak} is stated with outer probability in Condition 4 and outer expectation in Condition 2. Here, we omit the discussion on the measurability of $\hat g_n$.}
    \end{enumerate}
With these conditions, one can conclude that $r_n h_\bQ(\hat g_n, g_0) = O_p(1)$.

Note that in \eqref{eq:pseudo_distance}, $d(g,g_0) = -\bM(g) + \bM(g_0)$, and Condition 1 is verified by \eqref{eq:d_less_h}. By the inequalities \eqref{eq:d_less_h} and \eqref{eq:sup_v_bound}, $h_\bQ^2(\hat g_n,g_0)\lesssim d(\hat g_n,g_0) \overset{p}{\to} 0$, hence Condition 4 is verified. The inequality in Condition 3 is satisfied, given that $\hat g_n$ is the maximizer of $\bM_n(g)$ in $\cG$ and $g_0\in \cG$. Now, we focus on finding $\phi_n(\delta)$ to satisfy Condition 2 and $r_n^2\phi_n(\frac{1}{r_n})\leq \sqrt{n}$. 

We adopt the following empirical process notations 
$$\bG_n^P f:=\sqrt{n}\int f \dd \sbr{\bP_n - \bP},\quad\bG_m^Q f:=\sqrt{m}\int f \dd \sbr{\bQ_m - \bQ},$$ and denote $\|\cG_n\|_{\cF} := \sup_{f\in\cF}|\bG_n f|$. The inequality \eqref{eq:condition2} can be written as 
\[\bE \sup_{h_\bQ(g,g_0)<\delta} \left|  \bG_n^P(g^{-1/2}-g_0^{-1/2}) +  \bG_m^Q(g^{1/2}-g_0^{1/2}) \right|\lesssim \phi_n(\delta).\]
Hence we only need to find $\phi_n(\delta)$ such that 
\begin{equation*}
    \text{(A) } \bE \|\bG_n^P\|_{\cF_\delta^P} \lesssim \phi_n(\delta), \quad \text{and (B) } \bE \|\bG_m^Q\|_{\cF_\delta^Q} \lesssim \phi_m(\delta),
\end{equation*}
where $\cF_\delta^P,\cF_\delta^Q$ are defined as in \eqref{eq:def_F}.

We use Lemma 3.4.2 in \cite{van1996weak} to show the entropy condition. We first show (B), and the proof for (A) is similar. By the definition \eqref{eq:def_F}, for any $f\in \cF_\delta^Q$, $\bQ f^2<\delta^2$, and Assumption~\ref{asm:g_bound} implies there exists a positive constant $M_{\cF^Q}$ such that $\sup_{g\in\cG}|g^{1/2} - g_0^{1/2}|\leq M_{\cF^Q}<\infty$. By Lemma 3.4.2 in \cite{van1996weak}
\begin{equation}\label{eq:entropy_bound_q}
    \bE \|\bG_m^Q\|_{\cF_\delta^Q} \lesssim \tilde J_{[]}(\delta, \cF_\delta^Q, L^2(\bQ)) \sbr{1+\frac{\tilde J_{[]}(\delta, \cF_\delta^Q, L^2(\bQ))}{\delta^2\sqrt{m}}M_{\cF^Q}}.
\end{equation}

According to Assumption~\ref{asm:general_entropy_integral}, for any small $\delta<\eta$,
\begin{align*}
 \tilde J_{[]}(\delta, \cF_\delta^Q, L^2(\bQ)) &\leq 
 \int_0^\delta \sqrt{1+\log N_{[]}(\epsilon,\cF_\eta^Q,L^2(\bQ))} \dd \epsilon
 \lesssim \int_0^{\delta}\sqrt{1+\epsilon^{-\gamma_q}}\dd\epsilon.   
\end{align*}

Because $\delta$ is allowed to be arbitrarily small, we can assume that $\delta<1$. Hence when $\epsilon<1$, $0<\gamma_q<2$, $1+\epsilon^{-\gamma_q}<2\epsilon^{-\gamma_q}$, $\int_0^\delta\sqrt{1+\epsilon^{-\gamma_q}}\dd\epsilon\leq \sqrt{2}\int_0^\delta \epsilon^{-\gamma_q/2}\dd\epsilon = \frac{\sqrt{2}}{1-\gamma_q/2}\delta^{-\gamma_q/2+1}$. When we plug in the upper bound $\tilde J_{[]}(\delta, \cF_\delta^Q, L^2(\bQ))\leq \frac{\sqrt{2}}{1-\gamma_q/2}\delta^{-\gamma_q/2+1}$ to \eqref{eq:entropy_bound_q}, $\bE \|\bG_m^Q\|_{\cF_\delta^Q} \lesssim \delta^{-\gamma_q/2+1}$.

The proof for (A) is similar, except that when verifying the second moment bound, for any $f\in\cF_\delta^P$,
\begin{align*}
    \bP f^2&=\int \sbr{\frac{1}{\sqrt{g}} - \frac{1}{\sqrt{g_0}}}^2 \dd\bP = \int \frac{(\sqrt{g} - \sqrt{g_0})^2}{gg_0}\dd\bP \\
    &\overset{(*)}{\leq} M_l^2 \int (\sqrt{g}-\sqrt{g_0})^2 \dd\bP 
    \leq M_l^2 \int g_0 (\sqrt{g}-\sqrt{g_0})^2 \dd\bQ \\
    &\overset{(**)}{\leq} M_l^2 M_u \int (\sqrt{g}-\sqrt{g_0})^2 \dd\bQ \leq C_M^2\delta^2. \numberthis \label{eq:pf2_verify}
\end{align*}
Both inequalities $(*)$ and $(**)$ are due to Assumption~\ref{asm:g_bound}. $C_M = \sqrt{M_l^2 M_u}$ is a constant. Similarly, for small $\delta$ where $\delta C_M<1$, $\tilde J_{[]}(C_M\delta, \cF_\delta^P,L^2(\bP))\leq \frac{\sqrt{2}(C_M\delta)^{-\gamma_p/2+1}}{1-\gamma_p/2}$, and $\bE\|\bG_n^\bP\|_{\cF_\delta^\bP}\lesssim \delta^{-\gamma_p/2+1}$. 

Choose $\phi_n(\delta) =\delta^{-\gamma/2+1}$ where $\gamma = \min\br{\gamma_p,\gamma_q}$. The function $\phi_n(\delta)/\delta^\alpha$ is decreasing when we take some $\alpha$ in $(1,2)$. Now we can solve for $r_n$ from $r_n^2\phi_n(1/r_n)\leq \sqrt{n} \wedge \sqrt{m}$, and under Assumption~\ref{asm:sample_size_ratio}, $r_n\lesssim n^{\frac{1}{2+\gamma}}$.

\end{proof}

\section{Proofs for neural network sieved space}

Let $g_n$ be a point in $\cG_n$ ($g_n$ can be viewed as the projection of $g_0$ in $\cG_n$). The convergence rate $h_\bQ(\hat g_n, g_0) \leq h_\bQ(\hat g_n, g_n) + h_\bQ(g_n, g_0)$. The first term is the empirical process estimation error, and the second term is the neural network approximation error. 

To show the rate of the estimation error, we need to verify conditions in Theorem 3.4.1 in \cite{van1996weak}.

\noindent\textbf{Conditions to be verified:}
    
    For every $n$ and $\delta_n<\delta < \infty$,
    \begin{enumerate}
        \item $\sup_{\delta/2 < h_\bQ(g,g_n)\leq \delta} \bM(g) - \bM(g_n) \lesssim -\delta^2$;
        \item 
        \begin{equation}\label{eq:sieve_condition2}
            \bE \sup_{\delta/2 < h_\bQ(g,g_n)\leq \delta} \left|
        (\bM_n-\bM)(g) - (\bM_n-\bM)(g_n)
        \right| \lesssim \frac{\phi_n(\delta)}{\sqrt{n}}
        \end{equation}
        such that $\frac{\phi_n(\delta)}{\delta^\alpha}$ is decreasing on $(\delta_n,+\infty)$ for some $\alpha<2$;
        \item Let $r_n \lesssim \delta_n^{-1}$ satisfy $r_n^2\phi_n\sbr{\frac{1}{r_n}}\leq \sqrt{n}$ for every $n$.
        \item $\hat g_n$ takes values in $\cG_n$ and satisfies $\bM_n(\hat g_n) \geq \bM_n(g_n) - O_p(r_n^{-2})$.
    \end{enumerate}
With these conditions, one can conclude that $r_n h_\bQ(\hat g_n, g_n) = O_p(1)$. Here, $\delta_n$ is a multiple of $h_\bQ(g_n, g_0)$. Note that $h_\bQ(g_n, g_0)$ depends on the approximation rate in Proposition 2 of \cite{suzuki2021deep}. The sketch of the proof is as follows: first, we use Lemma~\ref{lem:sieve_cond1} to show Condition 1; then use Lemma~\ref{lem:nn_bracket_number} to compute the bracketing number of the sparse neural network space that will be used in the proof of Condition 2; finally, we will show Condition 2 to 4 in the proof of Theorem~\ref{thm:sieve_nn_rate}, and solve for the rate $r_n$ in Condition 3 where $r_n\asymp \delta_n^{-1}$, hence the final convergence rate will be the same order of $h_\bQ(\hat g_n, g_n)$ and $h_\bQ(g_n, g_0)$.

\begin{lemma}\label{lem:sieve_cond1}
Under Assumption~\ref{asm:g_bound},
    for $g_n\in \cG_n, g\in \cG$, and $g_0$ is the maximizer of $\bM(g)$, if $h_\bQ(g_n, g) \geq c_M h_\bQ(g_n, g_0)$, then $\bM(g) - \bM(g_n) \leq -\frac{1}{2\sqrt{M_u}} h^2_\bQ(g,g_n)$. $c_M$ is a constant where $c_M = \frac{1}{4M_lM_u}$.
\end{lemma}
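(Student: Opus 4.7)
The plan is to rewrite $\bM$ as a single integral against $\bQ$ using $\dd\bP = g_0\,\dd\bQ$, yielding $\bM(g) = 1 - \int \phi_x(\sqrt{g(x)})\,\dd\bQ(x)$ with the strictly convex pointwise function $\phi_x(u) := \frac{g_0(x)}{2u} + \frac{u}{2}$, whose minimum occurs at $u = u_0 := \sqrt{g_0(x)}$. Setting $u := \sqrt{g}$, $u_n := \sqrt{g_n}$ and using the algebraic identity $uu_n - u_0^2 = u_n(u - u_n) + (u_n - u_0)(u_n + u_0)$, I obtain the pointwise decomposition
\[
\phi_x(u) - \phi_x(u_n) \;=\; \frac{(u-u_n)^2}{2u} \;+\; \frac{(u-u_n)(u_n-u_0)(u_n+u_0)}{2uu_n},
\]
so that integrating against $\bQ$ writes $\bM(g_n) - \bM(g)$ as a nonnegative diagonal quadratic plus a signed cross term that encodes the approximation error $u_n - u_0$.

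I would then handle these two pieces separately. For the diagonal term, the uniform upper bound $u \leq \sqrt{M_u}$ from Assumption~\ref{asm:g_bound} gives $\tfrac{1}{2}\int (u-u_n)^2/u\,\dd\bQ \geq \tfrac{1}{2\sqrt{M_u}}\,h_\bQ^2(g,g_n)$, which already matches the target coefficient. For the cross term, Cauchy--Schwarz together with the uniform estimates $u, u_n \geq 1/\sqrt{M_l}$ and $u_n + u_0 \leq 2\sqrt{M_u}$ produces an absolute bound of order $M_l\sqrt{M_u}\,h_\bQ(g,g_n)\,h_\bQ(g_n,g_0)$. A Young-type split $ab \leq \lambda a^2 + b^2/(4\lambda)$ with an appropriate $\lambda$ divides this into a piece that can be absorbed by a reserved fraction of the diagonal term and a remainder proportional to $h_\bQ^2(g_n,g_0)$.

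The final step is to invoke the separation hypothesis $h_\bQ(g_n,g) \geq c_M\,h_\bQ(g_n,g_0)$, which lets the residual $h_\bQ^2(g_n,g_0)$ be replaced by $h_\bQ^2(g_n,g)/c_M^2$, so that the entire cross-term loss becomes a constant multiple of $h_\bQ^2(g,g_n)$ calibrated by $c_M$. The stated scale $c_M = 1/(4M_l M_u)$ (with the matching splitting parameter $\lambda$) is chosen exactly so that the net residual is dominated by the diagonal bound, leaving the full coefficient $\tfrac{1}{2\sqrt{M_u}}$ of $h_\bQ^2(g,g_n)$ intact; negating both sides then returns the claimed upper bound on $\bM(g) - \bM(g_n)$. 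The main obstacle is precisely this constant bookkeeping: the cross term genuinely has either sign and its magnitude threatens to cancel the entire gain from the diagonal term, so the argument only succeeds because the separation hypothesis forces $h_\bQ(g_n,g_0)$ to be strictly subdominant to $h_\bQ(g_n,g)$. A minor technicality is verifying that $g_0$ inherits pointwise bounds analogous to Assumption~\ref{asm:g_bound}, which follows because $g_0$ is the unconstrained maximizer of $\bM$ over $(0,\infty)$ and the density ratio is controlled in the sieve setup.
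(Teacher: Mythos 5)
Your pointwise decomposition and bounding steps are the same as the paper's: both rewrite $\bM(g)-\bM(g_n)$ against $\bQ$ via $\dd\bP=g_0\,\dd\bQ$, add and subtract $g_n$, and split the result into the negative quadratic $-\int(\sqrt{g}-\sqrt{g_n})^2/\sqrt{g}\,\dd\bQ$ (lower-bounded via $\sqrt{g}\leq\sqrt{M_u}$) plus the cross term $\int(g_0-g_n)(\sqrt{g}-\sqrt{g_n})/(\sqrt{g_n g})\,\dd\bQ$ (bounded via Cauchy--Schwarz and the $M_l,M_u$ envelopes); your identity for $\phi_x(u)-\phi_x(u_n)$ reproduces exactly these two pieces up to the factor $\tfrac12$ coming from the definition of $\bM$. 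The Young split you invoke is unnecessary: the product bound $M_l\sqrt{M_u}\,h_\bQ(g,g_n)\,h_\bQ(g_n,g_0)$ is already set up for a direct substitution of the separation hypothesis, which is what the paper does, and the split can only worsen the constant.

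The genuine gap is in the final constant check, which your plan describes but does not carry out. Under the hypothesis as you have written it, $h_\bQ(g_n,g)\geq c_M h_\bQ(g_n,g_0)$ with $c_M=\tfrac1{4M_lM_u}\leq\tfrac14$ (note $M_lM_u\geq 1$ from Assumption~\ref{asm:g_bound}), you only get $h_\bQ(g_n,g_0)\leq 4M_lM_u\,h_\bQ(g_n,g)$; substituting this into the cross-term bound gives something of order $M_l^2M_u^{3/2}\,h^2_\bQ(g,g_n)$, which swamps the diagonal gain $\tfrac1{2\sqrt{M_u}}h^2_\bQ(g,g_n)$. Since the cross term can have either sign, the desired negative upper bound does not follow. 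What the argument actually needs is the separation with $c_M$ on the other side: $h_\bQ(g_n,g_0)\leq c_M\,h_\bQ(g_n,g)$, equivalently $h_\bQ(g_n,g)\geq 4M_lM_u\,h_\bQ(g_n,g_0)$. The paper's own proof invokes exactly this stronger form (it writes the condition as $h_\bQ(g_0,g_n)\leq c_M h_\bQ(g,g_n)$), even though the lemma statement records the inequality in a way that does not match it; a careful constant check would have surfaced the inconsistency. With the corrected orientation, the cross term is at most $M_l\sqrt{M_u}\,c_M\,h^2_\bQ(g,g_n)=\tfrac1{4\sqrt{M_u}}h^2_\bQ(g,g_n)$, leaving a net $-\tfrac1{4\sqrt{M_u}}h^2_\bQ(g,g_n)$ (the paper's $-\tfrac1{2\sqrt{M_u}}$ comes from a dropped factor $\tfrac12$ in the first display of its proof; either constant suffices for the rate). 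In particular, your expectation that the full $\tfrac1{2\sqrt{M_u}}$ coefficient survives ``intact'' cannot be realized: the cross term always eats into the diagonal, and the entire purpose of the separation hypothesis is to control by how much---this is precisely the step your outline omits, and it is where the argument would have failed.
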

\begin{proof}
    The intuition behind this proof is that, for $g_n$ in $\cG_n$, we want $g_n$ to be closer to the global optimizer $g_0$ than the arbitrary $g$ when verifying the first condition in Theorem 3.4.1 in \cite{van1996weak}. 

For the objective $\bM(g)$ defined in \eqref{eq:M_obj},
    \begin{align*}
        \bM(g) - \bM(g_n) &= \int p_0 (g_n^{-1/2} - g^{-1/2}) + q_0 (g_n^{1/2} - g^{1/2}) \dd\mu\\
        &= \int g_0\frac{\sqrt{g} - \sqrt{g_n}}{\sqrt{g_n g}} +(\sqrt{g_n} - \sqrt{g}) \dd\bQ\\
        &= \int (g_0-g_n+g_n)\frac{\sqrt{g} - \sqrt{g_n}}{\sqrt{g_n g}} +(\sqrt{g_n} - \sqrt{g}) \dd\bQ\\ 
        &= \int (g_0-g_n)\frac{\sqrt{g} - \sqrt{g_n}}{\sqrt{g_n g}} - \frac{(\sqrt{g_n}-\sqrt{g})^2}{\sqrt{g}} \dd\bQ.
    \end{align*}
Given Assumption~\ref{asm:g_bound}, $\int_\cX \sup_{g\in \cG_n}|g^{1/2}(x)|\dd\bQ\leq M_u^{1/2}$, the second term in the last equation can be bounded in the same way as in \eqref{eq:L1_lower_bound},
\begin{align*}
    -\int \frac{(\sqrt{g_n}-\sqrt{g})^2}{\sqrt{g}} \dd\bQ\leq  -\frac{1}{M_u^{1/2}} h^2_\bQ(g_n,g).
\end{align*}
For the first term, based on Assumption~\ref{asm:g_bound},
\begin{align*}
     \int &(g_0-g_n)\frac{\sqrt{g} - \sqrt{g_n}}{\sqrt{g_n g}}\dd\bQ  = \int (\sqrt{g_0}-\sqrt{g_n})(\sqrt{g_0}+\sqrt{g_n})\frac{\sqrt{g} - \sqrt{g_n}}{\sqrt{g_n g}}\dd\bQ  \\
     &\leq 2M_l\sqrt{M_u}  \int (\sqrt{g_0}-\sqrt{g_n}) (\sqrt{g}-\sqrt{g_n}) \dd\bQ \\
     &\leq 2M_l\sqrt{M_u} h_\bQ(g_0,g_n) h_\bQ(g,g_n).
\end{align*}
Now if we let $h_\bQ(g_0,g_n) \leq c_M h_\bQ(g,g_n)$, where $c_M = \frac{1}{4M_lM_u}$, then 
\begin{align*}
    \bM(g) - \bM(g_n) &\leq -\frac{1}{2\sqrt{M_u}} h^2_\bQ(g,g_n).
\end{align*}

\end{proof}

The next lemma bounds the bracketing number of a given neural network functional class.

\begin{lemma}\label{lem:nn_bracket_number}
     Let $H(x):=d\|x\|_\infty+2$ where $d$ is the dimension of the sample space $\cX$. Without loss of generality, assume that $UB\geq 2$, then for any norm $\|\cdot\|$ in the function space $\cF$,
     \begin{align*}
         N_{[]}(\epsilon,\Phi(L,U,S,B),\|\cdot\|)&\leq  \sbr{\frac{2B^{L}U^{L}Le\|H(x)\|(2UL+d)}{\delta S}}^S.
     \end{align*}
\end{lemma}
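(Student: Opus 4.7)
The strategy is the standard three-step recipe for bracketing sparse feedforward networks: (i) derive a Lipschitz bound of the network output in its parameters with an input-dependent envelope $H(x)$; (ii) enumerate all possible sparsity patterns; (iii) for each pattern, cover the resulting $S$-dimensional cube $[-B,B]^S$ in $\ell_\infty$ and convert each parameter ball into a function bracket via the Lipschitz envelope.

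\textbf{Step 1 (Parameter-Lipschitz bound).} I would first show that whenever $\bftheta,\bftheta'\in\Phi(L,U,S,B)$ share the same sparsity pattern and satisfy $\|\bftheta-\bftheta'\|_\infty\leq\rho$,
\[
    |f(x;\bftheta)-f(x;\bftheta')| \;\leq\; L\,(UB)^{L-1}\,H(x)\,\rho,
\]
with $H(x)=d\|x\|_\infty+2$. This is done by induction on layer depth using the 1-Lipschitz, origin-fixing activation: a single matrix-vector product under weights bounded by $B$ and width $U$ contracts an $\ell_\infty$ perturbation by at most $UB$ and creates a forward signal of magnitude at most $(UB)^l\|x\|_\infty$ plus accumulated bias, hence bounded by $(UB)^l H(x)$. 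The $d\|x\|_\infty$ term appears from the first layer, whose weight matrix has $d$ columns, while the additive $2$ collects the bias perturbations and the output-layer contribution. Summing the per-layer sensitivities through all $L$ layers multiplies by the $L$ prefactor.

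\textbf{Step 2 (Pattern count and cube cover).} The total number of parameter slots across all layers is bounded by $T\leq U(2UL+d)$ (counting weights and biases). The number of sparsity patterns is therefore
\[
    \binom{T}{S}\;\leq\;\Bigl(\frac{eT}{S}\Bigr)^S\;\leq\;\Bigl(\frac{eU(2UL+d)}{S}\Bigr)^S.
\]
For any fixed pattern, the nonzero entries live in $[-B,B]^S$, which admits an $\ell_\infty$ $\rho$-cover of cardinality at most $(2B/\rho)^S$.

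\textbf{Step 3 (Brackets) and the main obstacle.} Choose $\rho=\epsilon/\bigl(2L(UB)^{L-1}\|H\|\bigr)$. For each pattern-center $\bftheta_c$ produced in Step 2, define the pair $l(x)=f(x;\bftheta_c)-L(UB)^{L-1}H(x)\rho$ and $u(x)=f(x;\bftheta_c)+L(UB)^{L-1}H(x)\rho$. By Step 1 every $f(\cdot;\bftheta)$ lies between $l$ and $u$ for its nearest center, and $\|u-l\|\leq\epsilon$ by the choice of $\rho$. Multiplying the counts from Steps 2 and 3 and substituting $\rho$ yields, up to a harmless numerical constant, the stated bound with leading factor $B^{L}U^{L}Le\|H\|(2UL+d)/(\epsilon S)$. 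The hardest piece is Step 1: the induction has to track simultaneously the growth of the forward signal and the accumulation of per-layer perturbation contributions without double-counting through $UB$ factors, and has to thread the dimension $d$ (from the first layer) and the additive constant (from biases) into the envelope $H(x)$ in exactly the form claimed.
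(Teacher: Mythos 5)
Your proof follows the same three-step strategy as the paper (parameter-Lipschitz envelope $F(x)=L(UB)^{L-1}H(x)$, sparsity-pattern count $\binom{T}{S}\leq(eU(2UL+d)/S)^S$, cube cover converted to brackets); the paper invokes Theorem 2.7.11 of van der Vaart and Wellner for the last step, which is precisely the explicit bracket construction you spell out. The only discrepancy is a factor-of-two in the leading constant from your looser cube-cover bound $(2B/\rho)^S$ versus the paper's $(B/\epsilon)^S$, which, as you note, is immaterial.
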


\begin{proof}
This proof partially follows the ideas in Lemma S.2 in \cite{kaji2023adversarial} and Lemma 6 in \cite{suzuki2021deep}. The key step is to verify the conditions for Theorem 2.7.11 in \cite{van1996weak}.

Denote $f(x) = f^{(L)}\circ\dots\circ f^{(1)}$ as a function in $\Phi(L,U,S,B)$, indexed by the neural network parameters $\bftheta = \br{W^{(l)},b^{(l)}}_{l=1}^L$, and let $\tilde f(x) = \tilde f^{(L)}\circ\dots\circ \tilde f^{(1)}$ be another function in $\Phi(L,U,S,B)$, indexed by the neural network parameters $\tilde \bftheta = \br{\tilde W^{(l)},\tilde b^{(l)}}_{l=1}^L$. We first show that $|f(x)-\tilde f(x)|\leq Kd_\theta(\bftheta,\tilde\bftheta)F(x)$ where $d_\theta$ is a metric on the vector space of $\bftheta$. $K$ is a constant that only depends on $(L,U,S,B)$, and will be specified later.

For any $l\leq L-1$, note that the activation function $\sigma$ is 1-Lipschitz and $\sigma(0)=0$,
\begin{align*}
    \nbr{f^{(l)}\circ\dots\circ f^{(1)}}_\infty
    &\leq \nbr{\sbr{W^{(l)}(\cdot) + b^{(l)}}\circ\dots f^{(1)}}_\infty\\
    &\leq \max_i \nbr{W_{i\cdot}^{(l)}}_1 \nbr{f^{(l-1)}\circ\dots \circ f^{(1)}}_\infty + \nbr{b^{(l)}}_\infty\\
    &\leq UB \nbr{f^{(l-1)}\circ\dots \circ f^{(1)}}_\infty + B\\
    &\leq (UB)^{l-1}\nbr{f^{(1)}}_\infty + B\sum_{i=0}^{l-2}(UB)^i.
\end{align*}
Here, $\sum_{i=0}^{l-2}(UB)^i = \frac{1-(UB)^{l-1}}{1-UB}\leq (UB)^{l-1}$ given $UB\geq 2$. $\|f^{(1)}\|_\infty \leq Bd\|x\|_\infty + B$. Hence
\begin{align*}
     \|f^{(l)}\circ\dots\circ f^{(1)}\|_\infty &\leq (UB)^{l-1}\sbr{ Bd\|x\|_\infty + B} + B(UB)^{l-1}\\
     &= U^{l-1}B^{l}(d\|x\|_\infty + 2). 
\end{align*}
Now, let $H(x) = d\|x\|_\infty+2$. In addition, denote $d^{(l)}_w := \|W^{(l)}-\tilde W^{(l)}\|_\infty$, $d^{(l)}_b:=\|b^{(l)}-\tilde b^{(l)}\|_\infty$, $d^{(\infty)}_w:=\max_{l}d^{(l)}_w$, $d^{(\infty)}_b:=\max_{l}d^{(l)}_b$,
\begin{align*}
    |f(x)-\tilde f(x)| \leq &
     \bigg|\br{(W^{(L)}(\cdot)+b^{(L)})\circ f^{(L-1)}\circ\dots\circ f^{(1)}} - \br{(\tilde W^{(L)}(\cdot)+\tilde b^{(L)})\circ \tilde f^{(L-1)}\circ\dots\circ \tilde f^{(1)}} \bigg| \\
     = & \bigg|\br{(W^{(L)}(\cdot)+b^{(L)})\circ f^{(L-1)}\circ\dots\circ f^{(1)}} - \br{(\tilde W^{(L)}(\cdot)+\tilde b^{(L)})\circ f^{(L-1)}\circ\dots\circ f^{(1)}} + \\
     & \br{(\tilde W^{(L)}(\cdot)+\tilde b^{(L)})\circ f^{(L-1)}\circ\dots\circ f^{(1)}} - \br{(\tilde W^{(L)}(\cdot)+\tilde b^{(L)})\circ \tilde f^{(L-1)}\circ\dots\circ \tilde f^{(1)}}\bigg|\\
    \leq & Ud_w^{(L)} \|f^{(L-1)}\circ\dots\circ f^{(1)}\|_\infty + d_b^{(L)} + UB\|f^{(L-1)}\circ\dots\circ f^{(1)} - \tilde f^{(L-1)}\circ\dots\circ \tilde f^{(1)}\|_\infty\\
    \leq & (UB)^{L-1}H(x)d_w^{(L)} + d_b^{(L)}  + \\
    & UB\br{(UB)^{L-2}H(x)d_w^{(L-1)} + d_b^{(L-1)} + UB \|f^{(L-2)}\circ\dots\circ f^{(1)} - \tilde f^{(L-2)}\circ\dots\circ \tilde f^{(1)}\|_\infty}\\
    \leq & (L-1)(UB)^{L-1}H(x) d_w^{(\infty)} + d_b^{(\infty)}\sum_{l=0}^{L-2}(UB)^l + (UB)^{L-1}\|f^{(1)}-\tilde f^{(1)}\|_\infty.
\end{align*}
Note $\|f^{(1)}-\tilde f^{(1)}\|_\infty\leq d \times d_w^{(1)} \|x\|_\infty+d_b^{(1)}$, where $x\in \R^d$,
\begin{align*}
    |f(x)-\tilde f(x)|
    \leq & L(UB)^{L-1}H(x)d_w^{(\infty)} + (UB)^{L-1}d_b^{(\infty)}.
\end{align*}
Let $d(\bftheta,\tilde\bftheta)=\|\bftheta-\tilde\bftheta\|_\infty \geq \max\br{d_w^{(\infty)},d_b^{(\infty)}}$, then 
\begin{align*}
    |f(x)-\tilde f(x)|
    \leq & L(UB)^{L-1}\sbr{d\|x\|_\infty+2}d(\bftheta,\tilde\bftheta).
\end{align*}

Since all elements in $\bftheta$ is constrained within $[-B,B]$, the space of all possible $\bftheta$ equipped with $\|\cdot\|_\infty$ is a hypercube $[-B,B]^S$ given a fixed sparsity structure. Given that the input dimension is $d$ and the output dimension is $1$, there are a total of $$\sbr{\begin{array}{c}
     U^2(L-2)+Ud+U+U(L-1)+1  \\
      S
\end{array}}$$ such sparsity structures. Note that $\sbr{\begin{array}{c}
     n  \\
     S 
\end{array}}$ can be bounded by $n^S/S!$, and using Stirling’s inequality $S!\geq (S/e)^S$, $\sbr{\begin{array}{c}
     n  \\
     S 
\end{array}}\leq \sbr{en/S}^S$. In our case, $n$ is $U^2(L-2)+Ud+U+U(L-1)+1\leq U^2L+U(L+d)\leq U(2UL+d)$, given that $L,d\geq1$. Hence, the total number of sparsity structures can be bounded by $(eU(2UL+d)/S)^S$.

Denote $F(x):= L(UB)^{L-1}H(x) = L(UB)^{L-1}\sbr{d\|x\|_\infty+2}$. Also, note that using the vector infinity norm and radius $\epsilon$, the covering number of $[-B,B]^S$ is $(B/\epsilon)^S$.
According to Theorem 2.7.11 in \cite{van1996weak}, then  
\begin{align*}
    N_{[]}\sbr{2\epsilon \|F\|,\Phi(L,U,S,B),\|\cdot\|} &\leq \sbr{\frac{B}{\epsilon}\frac{eU(2UL+d)}{S}}^S.
\end{align*}

Hence,
\begin{align*}
    N_{[]}\sbr{\delta,\Phi(L,U,S,B),\|\cdot\|} &\leq \sbr{\frac{2B^{L}U^{L}Le\|H(x)\|(2UL+d)}{\delta S}}^S.
\end{align*}
\end{proof}

\subsection{Proof of Theorem~\ref{thm:sieve_nn_rate}}

\begin{proof}
We first show the proof of condition 2 and the inequality \eqref{eq:sieve_condition2}. Based on Assumption~\ref{asm:sample_size_ratio} $m=\rho n$, for simplicity, we just take $m=n$ in the following proof.

We again use the empirical process notations $\bG_n^P f = \sqrt{n}\int f \sbr{\dd\bP_n - \dd\bP}$.
    The LHS of equation \eqref{eq:sieve_condition2} can be written as 
    \begin{align*}
        \sqrt{n}\LHS  &\leq  \bE \sup_{\delta/2<h_\bQ(g,g_n)<\delta}\left| 
        \bG_n^P\sbr{g_n^{-1/2} - g^{-1/2}}
        \right| + \bE \sup_{\delta/2<h_\bQ(g,g_n)<\delta}\left| 
        \bG_n^Q\sbr{g_n^{1/2} - g^{1/2}}
        \right| \\
        &\leq  \bE \|\bG_n^P\|_{\cF_{\delta,n}^P} + \bE \|\bG_n^Q\|_{\cF_{\delta,n}^Q},
    \end{align*}
    where for any fixed $g_n\in\cG_n$ ($g_n$ is usually taken to be the projection of $g_0$ onto the sieve space $\cG_n$),
    \begin{align*}
        \cF_{\delta,n}^P &=\br{g^{-1/2} - g_n^{-1/2}: g\in\cG_n, h_\bQ(g_n,g)<\delta}, \quad
        \cF_{\delta,n}^Q =\br{g^{1/2} - g_n^{1/2}: g\in\cG_n, h_\bQ(g_n,g)<\delta}.
    \end{align*}

We first show how to find $\phi_n(\delta)$ that satisfy $ \bE \|\bG_n^Q\|_{\cF_{\delta,n}^Q}\lesssim \phi_n(\delta)$, and $ \bE \|\bG_n^P\|_{\cF_{\delta,n}^P}\lesssim \phi_n(\delta)$ follows similarly. 

By Lemma 3.4.2 in \cite{van1996weak}, if there exists $M$ such that $\|g_n^{1/2} - g^{1/2}\|_\infty \leq M$, and $\bQ\br{\sbr{g^{1/2} - g_n^{1/2}}^2}\lesssim \delta^2$, then 
\begin{align*}
    \bE \|\bG_n^Q\|_{\cF_{\delta,n}^Q} &\leq \tilde J_{[]}\sbr{\delta, \cF_{\delta,n}^Q, L^2(\bQ)} \sbr{1+\frac{\tilde J_{[]}\sbr{\delta, \cF_{\delta,n}^Q, L^2(\bQ)}}{\delta^2\sqrt{n}}M}.
\end{align*}
By Assumption~\ref{asm:g_bound}, such $M$ exists and is a constant that only depends on $M_l,M_u$, and the second moment condition is satisfied in the definition of $\cF_{\delta,n}^Q$.
We want to show that for some constant $C_g>0$, $N_{[]}\sbr{C_g\delta,\cF_{\delta,n}^Q,L^2(\bQ)}\leq N_{[]}\sbr{\delta,\cG_n,L^2(Q)}$, and use the entropy bound for $\cG_n$ that we have obtained in Lemma~\ref{lem:nn_bracket_number}. 

Denote $\cN_{[]}\sbr{\delta,\cG_n,L^2(\bQ)}$ as the minimal set of all bracketing pairs for $\cG_n$, with a total of $N_{[]}\sbr{\delta,\cG_n,L^2(Q)}$ pairs. For any $g\in\cG_n$, there exists a pair $\br{g^L,g^U}$ in $\cN_{[]}\sbr{\delta,\cG_n,L^2(Q)}$. We need to show that $\br{\sqrt{g^L}-\sqrt{g_n},\sqrt{g^U} - \sqrt{g_n}}$ forms a bracketing pair in $\cN_{[]}\sbr{C_g\delta,\cF_{\delta,n}^Q,L^2(Q)}$ for some $C_g$. To see this, firstly, $g^L\leq g\leq g^U$ implies $\sqrt{g^L} - \sqrt{g_n}\leq \sqrt{g} - \sqrt{g_n}\leq \sqrt{g^U} -\sqrt{g_n}$. Secondly, notice that by Assumption~\ref{asm:g_bound}, because any function in $\cG_n$ is upper bounded by $M_u$ and lower bounded by $1/M_l$, the bracketing pairs $\br{g^L,g^U}$ are also bounded functions; otherwise we can modify the pair to be clipped version: $\tilde g^L(x) = \min\br{\max\br{g^L(x),1/M_l},M_u}$, $\tilde g^U(x) = \min\br{\max\br{g^U(x),1/M_l},M_u}$. Now we can show that $\absbr{\sqrt{g^U} - \sqrt{q^L}} = \frac{\absbr{g^U-g^L}}{\sqrt{g^U} + \sqrt{g^L}}\leq 2\sqrt{M_l}\absbr{g^U-g^L}$, and $\nbr{\sqrt{g^U} - \sqrt{g^L}}_{L^2(\bQ)}\leq 2\sqrt{M_l} \nbr{g^U-g^L}_{L^2(\bQ)}\leq 2\sqrt{M_l}\delta$. Hence $$N_{[]}\sbr{2\sqrt{M_l}\delta,\cF_{\delta,n}^Q,L^2(\bQ)}\leq N_{[]}\sbr{\delta,\cG_n,L^2(Q)},$$ and note that 
\begin{align*}
    \tilde J_{[]}\sbr{\delta, \cF_{\delta,n}^\bQ, L^2(\bQ)} &= \int_0^{\delta}\sqrt{1+\log N_{[]}(\epsilon,\cF_{\delta,n}^\bQ,L^2(\bQ))}\dd\epsilon \leq \int_0^{\delta}\sqrt{1+\log N_{[]}\sbr{\frac{\epsilon}{2\sqrt{M_l}},\cG_n,L^2(\bQ)}}\dd\epsilon\\
    &\overset{t = \frac{\epsilon}{2\sqrt{M_l}}}{=} 2\sqrt{M_l}\int_{0}^{\frac{\delta}{2\sqrt{M_l}}} \sqrt{1+\log N_{[]}\sbr{t,\cG_n,L^2(\bQ)}}\dd t \\
    &= 2\sqrt{M_l}\tilde J_{[]}\sbr{\frac{\delta}{2\sqrt{M_l}}, \cG_n,L^2(Q)},
\end{align*}
hence
\begin{equation}
    \bE \|\bG_n^Q\|_{\cF_{\delta,n}^Q} \leq 2\sqrt{M_l}\tilde J_{[]}\sbr{\delta/(2\sqrt{M_l}), \cG_n, L^2(\bQ)} \sbr{1+\frac{2\sqrt{M_l}\tilde J_{[]}\sbr{\delta/(2\sqrt{M_l}), \cG_n, L^2(\bQ)}}{\delta^2\sqrt{n}}M}. \label{eq:Q_phi_n}
\end{equation}

For the upper bound of $\bE \|\bG_n^P\|_{\cF_{\delta,n}^P}$, we can similarly apply Lemma 3.4.5 in \citet{van1996weak}. To verify that $P\br{\sbr{g^{-1/2}-g_n^{-1/2}}^2}\lesssim \delta^2$ given that in $\cF_{\delta,n}^P$, $h_Q(g_n,g)<\delta$, we refer to the previous proof in \eqref{eq:pf2_verify} and replace $g$ by $g_n$ in \eqref{eq:pf2_verify}. Similarly, $\nbr{\frac{1}{\sqrt{g^L}} - \frac{1}{\sqrt{g^U}}}_{L^2(\bP)} = \nbr{ \frac{g^L-g^U}{\sqrt{g^Ug^L}\sbr{\sqrt{g^U} + \sqrt{g^L}}} }_{L^2(\bP)}\leq \frac{1}{2M_l^{3/2}}\nbr{g^L-g^U}_{L^2(\bP)}\leq \frac{\delta}{2M_l^{3/2}}$. Hence
\begin{align}
    \bE \|\bG_n^P\|_{\cF_{\delta,n}^P} 
        &\leq \frac{1}{2M_l^{3/2}}\tilde J_{[]}\sbr{2M_l^{3/2}\delta, \cG_n, L^2(\bP)} \sbr{1+\frac{\frac{1}{2M_l^{3/2}}\tilde J_{[]}\sbr{2M_l^{3/2}\delta, \cG_n, L^2(\bP)}}{\delta^2\sqrt{n}}M}.
        \label{eq:phi_n}
\end{align}

For simplicity, we unify the upper bounds in \eqref{eq:Q_phi_n} and \eqref{eq:phi_n}. For the upper bound in \eqref{eq:Q_phi_n}, note $\|f\|^2_{L^2(\bP)}\geq \frac{1}{M_l}\|f\|^2_{L^2(\bQ)}$, $N_{[]}\sbr{\epsilon,\cG_n,L^2(\bQ)}\leq N_{[]}\sbr{\epsilon/\sqrt{M_l},\cG_n,L^2(\bP)}$, and 
\begin{align*}
    \tilde J_{[]}\sbr{\delta/(2\sqrt{M_l}), \cG_n, L^2(\bQ)} &= \int_0^{\delta/(2\sqrt{M_l})}\sqrt{1+N_{[]}\sbr{\epsilon,\cG_n,L^2(\bQ)}}\dd\epsilon \\
    &\leq \int_0^{\delta/(2\sqrt{M_l})} \sqrt{1+N_{[]}\sbr{\epsilon/\sqrt{M_l},\cG_n,L^2(\bP)}}\dd\epsilon
    =\sqrt{M_l} \tilde J_{[]}\sbr{\frac{\delta}{2M_l},\cG_n,L^2(\bP)}.
\end{align*}
Recall that in Assumption~\ref{asm:g_bound}, $\sup_{g\in\cG_n}|g^{-1}|\leq M_l$, WLOG, let $2M_l>1$. Hence $$\tilde J_{[]}\sbr{\delta/(2\sqrt{M_l}), \cG_n, L^2(\bQ)}\lesssim \tilde J_{[]}\sbr{\delta,\cG_n,L^2(\bP)}.$$

 For the upper bound in \eqref{eq:phi_n}, WLOG, let $2M_l^{3/2}>1$, $N_{[]}\sbr{2M_l^{3/2}\epsilon,\cG_n,L^2(\bP)}\leq N_{[]}\sbr{\epsilon,\cG_n,L^2(\bP)}$. Hence 
 \begin{align*}
     \frac{1}{2M_l^{3/2}}\tilde J_{[]}\sbr{2M_l^{3/2}\delta, \cG_n, L^2(\bP)} &= \frac{1}{2M_l^{3/2}}\int_0^{2M_l^{3/2}\delta} \sqrt{1+N_{[]}\sbr{\epsilon,\cG_n,L^2(\bP)}}\dd\epsilon \\
     &\overset{t=\frac{\epsilon}{2M_l^{3/2}}}{=}\int_0^\delta\sqrt{1+N_{[]}\sbr{2M_l^{3/2}t,\cG_n,L^2(\bP)}}\dd t\\
     &\leq \tilde J_{[]}\sbr{\delta,\cG_n,L^2(\bP)}.
 \end{align*}

Now we have shown that both the RHS of \eqref{eq:Q_phi_n} and \eqref{eq:phi_n} are upper bounded by $$ \tilde J_{[]}\sbr{\delta, \cG_n, L^2(\bP)} \sbr{1+\frac{\tilde J_{[]}\sbr{\delta, \cG_n, L^2(\bP)}}{\delta^2\sqrt{n}}M}$$ up to a constant that only depends on $M_u,M_l$.
Let $\phi_n(\delta) = \tilde J_{[]}\sbr{\delta, \cG_n, L^2(\bP)} \sbr{1+\frac{\tilde J_{[]}\sbr{\delta, \cG_n, L^2(\bP)}}{\delta^2\sqrt{n}}M}$.  To solve for $r_n$ based on Conditions 2 and 3, we need to check that $\phi_n(\delta)/\delta^\alpha$ is decreasing on $(\delta_n,\infty)$ for some $0<\alpha<2$.

Note that $\tilde J_{[]}\sbr{\delta,\cG_n, L^2(\bP)} =  \int_{0}^\delta \sqrt{1+\log N_{[]}\sbr{\epsilon,\cG_n, L^2(\bP)}}\dd\epsilon$, and the integrand $$f(\epsilon)=\sqrt{1+\log N_{[]}\sbr{\epsilon,\cG_n, L^2(\bP)}}$$ is a decreasing function of $\epsilon$, the radius of the bracketing number.  Let $h(\delta) = \frac{\int_{0}^\delta f(\epsilon)\dd\epsilon}{\delta^\alpha}$, and let $\alpha=1$
\begin{align*}
    h'(\delta)&=  \frac{f(\delta)\delta -\int_{0}^\delta f(\epsilon)\dd\epsilon }{\delta^{2}} = \frac{\int_0^\delta (f(\delta) - f(\epsilon))\dd\epsilon}{\delta^{2}} \leq 0.
\end{align*}
Hence $h(\delta)$ is a decreasing function $\delta$, and $\phi_n(\delta)/\delta = h(\delta)\sbr{1+\frac{h(\delta)}{\delta\sqrt{n}}M}$ is also a decreasing function of $\delta$.

 Now we've found the $\phi_n(\delta)$ that satisfies Condition 2, and it depends on the neural network parameter bounds, which will be determined later when solving for $r_n$ in Condition 3.
In Condition 3, the inequality $r_n^2\phi(1/r_n)\lesssim \sqrt{n}$ is equivalent to $r_n^2\tilde J_{[]}\sbr{1/r_n, \cG_n, L^2(\bP)} \lesssim\sqrt{n}$ (To see this, if we denote $x_n = r_n^2\tilde J_{[]}\sbr{1/r_n, \cG_n, L^2(\bP)}$, then $r_n^2\phi(1/r_n)=x_n(1+\frac{x_n}{\sqrt{n}}M)$, and note that $x_n+\frac{x_n^2}{\sqrt{n}}M\leq 2\sqrt{n} \Leftrightarrow C_Mx_n\leq\sqrt{n}$ by solving the quadratic roots, where $C_M=\frac{\sqrt{1+8M}-1}{2M}$ when $x_n>0$).

Note that $\tilde J_{[]}\sbr{\delta,\cG_n, L^2(\bP)} =  \int_{0}^\delta \sqrt{1+\log N_{[]}\sbr{\epsilon,\cG_n, L^2(\bP)}}\dd\epsilon$. We can further simplify the result in Lemma \ref{lem:nn_bracket_number} and let $A_n = S_n\log\sbr{B_n^{L_n}U_n^{L_n}L_n(2U_nL_n+d)e\|H\|_{L^2(\bP)}}-S_n\log\sbr{2S_n}+1$, then the above integrand can be upper bounded by 
\[\sqrt{1+\log N_{[]}\sbr{\epsilon,\cG_n, L^2(\bP)}}\leq \sqrt{A_n - S_n\log\epsilon}.\]
We can further bound the integral $\int_0^\delta \sqrt{A_n-S_n\log\epsilon}\dd\epsilon$ by the Cauchy–Schwarz inequality, 
\[\int_0^\delta \sqrt{A_n-S_n\log\epsilon}\dd\epsilon \leq \sqrt{\delta} \sqrt{\int_0^\delta A_n - S_n\log\epsilon}\dd\epsilon =\delta \sqrt{A_n-S_n\log\delta + S_n}. \]

Condition 4 is true because both $\hat g_n$ and $g_n$ belong to $\cG_n$, and $\hat g_n$ is the maximizer of $\bM_n(g)$ over $\cG_n$.

Note that $\delta_n \asymp h_\bQ(g_n,g_0)$, and $h_\bQ(g_n,g_0)$ depends on the distance between the sieved neural network space where $g_n$ belongs to and the anisotropic Besov space where $g_0$ belongs to. To get $h_\bQ(g_n,g_0)$, we use Proposition 2 in \cite{suzuki2021deep}, where the neural network parameters used in Lemma~\ref{lem:nn_bracket_number} are specified in the following way,
\begin{align*}
L_N &:= 3 + 2 \Big\lceil \log_{2}\!\left( \frac{3^{d \vee m}}{\epsilon c_{(d,m)}} \right) + 5 \Big\rceil 
            \Big\lceil \log_{2}(d \vee m) \Big\rceil, 
 \quad U_N := N U_{0},  \\
S_N &:= \big[(L_N-1)U_{0}^{2} + 1\big] N, 
 \quad B_N := O\!\left(N^{d(1+\nu^{-1})(1/s-\tilde{\beta})_{+}}\right), \numberthis\label{eq:nn_params}
\end{align*}
where $\epsilon = N^{-\tilde{\beta}} \log(N)^{-1}$, $m$ is a positive integer such that $0<\bar\beta<\min(m,m-1+1/s)$, $\delta = (1/s - 1/r)_{+}$ (we use the $L^2(\cX)$ norm in \eqref{eq:minimax_distance} and let $r=2$), $\nu = (\tilde{\beta} - \delta)/(2\delta)$, and $ U_0 := 6dm(m+2) + 2d$. $c_{(d.m)}$ is a constant that only depends on $d$ and $m$. The parameter $N$ controls the total number of neural network parameters, and we will allow $N$ to grow with the sample size $N=N(n)$ as $n\to\infty$. Hence we abbreviate the notations $\br{L_N,U_N,S_N,B_N}$ to $\br{L_n,U_n,S_n,B_n}$ in the following discussion. Proposition 2 in \cite{suzuki2021deep} states that, under \eqref{eq:nn_params}, 
\begin{align}
    \sup_{g_0\in\cU\sbr{B^{\beta}_{s,t}(\cX)}}\inf_{g_n\in \cG_n} \|g_n-g_0\|_{L^2(\cX)} \lesssim N^{-\tilde\beta}. \numberthis \label{eq:minimax_distance}
\end{align}
Recall Assumption~\ref{asm:g_bound}, $\|g_0^{-1}\|_\infty\leq M_l$. With the assumption $\|q_0^2/p_0\|_\infty <\infty$, we can derive the following inequality between distances:
\begin{align*}
    h^2_\bQ(g_n,g_0) &= \int\sbr{ g_n^{1/2}-g_0^{1/2}}^2 \dd\bQ = \int\sbr{ \frac{g_n-g_0}{\sqrt{g_n}+\sqrt{g_0}}}^2 \dd\bQ \leq \int\sbr{g_n-g_0}^2 \frac{q_0}{g_0}\dd \leb(x) \\
    &\leq \|q_0^2/p_0\|_\infty  \|g_n-g_0\|^2_{L^2(\cX)} \lesssim \|g_n-g_0\|^2_{L^2(\cX)} \overset{(*)}{\lesssim} N^{-2\tilde\beta}.
\end{align*}
Note that $(*)$ is based on \eqref{eq:minimax_distance} and $g_n$ is taken to be the projection of $g_0$ in $\cG_n$.

Hence $\delta_n \lesssim N^{-\tilde\beta}$. Now we need to solve for an upper bound for $r_n$ such that
\begin{itemize}
    \item[(i)] $r_n\sqrt{A_n+S_n+S_n\log(r_n)}\lesssim \sqrt{n}$;
    \item[(ii)] $\frac{1}{r_n} \gtrsim  N^{-\tilde\beta}$.
\end{itemize}

We first choose $N \asymp r_n^{1/\tilde\beta}$ based on (ii). The inequality (i) depends on the neural network parameters. Recall \eqref{eq:nn_params}, we can compute the order of neural network configuration parameters w.r.t. $N$ first: $L_n = O(\log(N)), U_n = O(N), S_n=O(N\log(N)), B_n = O(N^\eta)$, where $\eta$ is a constant that depends on the smoothness $\tilde\beta$ and the input dimension $d$. Here, we consider $d$ as a constant that does not grow with $n$.

We can simplify the first inequality (i) in the order of $N$ as follows
\begin{align*}
    A_n+S_n &= S_n\log\sbr{B_n^{L_n}U_n^{L_n}L_n(2U_nL_n+d)e\|H\|_{L^2(\bP)}}-S_n\log\sbr{2S_n}+S_n+1\\
    &=S_n\br{L_n\log(B_n) + L_n\log(U_n) + \log(2U_nL_n+d)+ \log(L_n)+O(1) } - S_n\log(2S_n)+S_n+1\\
    &= O(N\sbr{\log N}^3).
\end{align*}
Because we set $N \asymp r_n^{1/\tilde\beta}$, $r_n\sqrt{A_n+S_n+S_n\log(r_n)} \asymp r_n^{1+\frac{1}{2\tilde\beta}}\sbr{\log(r_n)}^{3/2}$. Now we only need to solve for an upper bound of $r_n$ from 
\begin{equation}\label{eq:r_upperbound}
 r_n^{1+\frac{1}{2\tilde\beta}}\sbr{\log(r_n)}^{3/2} \lesssim \sqrt{n}.
\end{equation}

To solve for this inequality, we can guess that the upper bound of $r_n$ takes the form $r_n = \frac{n^{c_1}}{(\log n)^{c_2}}$, because the RHS of \eqref{eq:r_upperbound} is of polynomial rate of $n$, whereas the LHS of \eqref{eq:r_upperbound} is a polynomial of $r_n$ times a log rate of $r_n$. Then we solve for $c_1,c_2$ by first matching the polynomial rate, $c_1(1+\frac{1}{2\tilde\beta}) = 1/2$, $c_1 = \frac{\tilde\beta}{2\tilde\beta+1}$. The log term in the LHS is $(\log(r_n))^{3/2} = O\sbr{\sbr{\frac{\tilde\beta}{2\tilde\beta+1}\log n - c_2\log\log n}^{3/2}} = O\sbr{\sbr{\log n}^{3/2}}$. Now the rate on the LHS is 
\begin{align*}
    r_n^{1+\frac{1}{2\tilde\tilde\beta}}\sbr{\log(r_n)}^{3/2} &= O\sbr{n^{1/2} (\log n)^{-c_2(1+\frac{1}{2\tilde\beta})} (\log n)^{3/2}}.
\end{align*}
Hence by setting $c_2(1+\frac{1}{2\tilde\beta}) = 3/2$, we can solve for $c_2 = \frac{3\tilde\beta}{2\tilde\beta+1}$. Hence the upper bound for $r_n$ is $O(n^{\frac{\tilde\beta}{2\tilde\beta+1}}\sbr{\log n}^{-\frac{3\tilde\beta}{2\tilde\beta+1}})$.

Now the proof for the estimation error is complete, 
\[h_\bQ(\hat g_n,g_n) =O_p(1/r_n)= O_p(n^{-\frac{\tilde\beta}{2\tilde\beta+1}}\sbr{\log n}^{\frac{3\tilde\beta}{2\tilde\beta+1}}).\]
The approximation error $h_\bQ(g_n,g_0) \leq O(N^{-\tilde\beta})$ follows the same rate by design $r_n=O(N^{\tilde\beta})$. Hence the convergence rate is $O_p(n^{-\frac{\tilde\beta}{2\tilde\beta+1}}\sbr{\log n}^{\frac{3\tilde\beta}{2\tilde\beta+1}})$.

\end{proof}

\newpage
\section{Output activation function and sensitivity analysis}\label{supp_sec:act_sensi}

As discussed in Section~\ref{subsec:connection}, the different choices of the last-layer output activation function may impact the estimation results. Our main analysis uses the bounded sigmoid function $\sigma^{\text{sig}}_\alpha(x) = \frac{1}{1-\exp\br{-\alpha x}}$ where $\alpha$ controls how fast the activation function saturates to the two end points $0,2$. In an earlier version of this work, we have experimented with the bounded softplus function, $\sigma^{\text{bsp}}(x) = \frac{2f(x)}{f(x)+1}$ where $f(x) = \log\sbr{1+e^{x}}$. Figure~\ref{fig:act_compare} shows these different activation functions. The sigmoid function $\sigma^{\text{sig}}_\alpha(x)$ saturates to the two endpoints $0,2$ faster as $\alpha$ grows, whereas the bounded softplus function $\sigma^{\text{bsp}}(x)$ saturates asymmetrically to 0 and to 2. In the main analysis, we use the sigmoid function with $\alpha=1$. In this section, we include a sensitivity analysis on the choice of different output activation functions.

\begin{figure}
    \centering
    \includegraphics[width=0.7\linewidth]{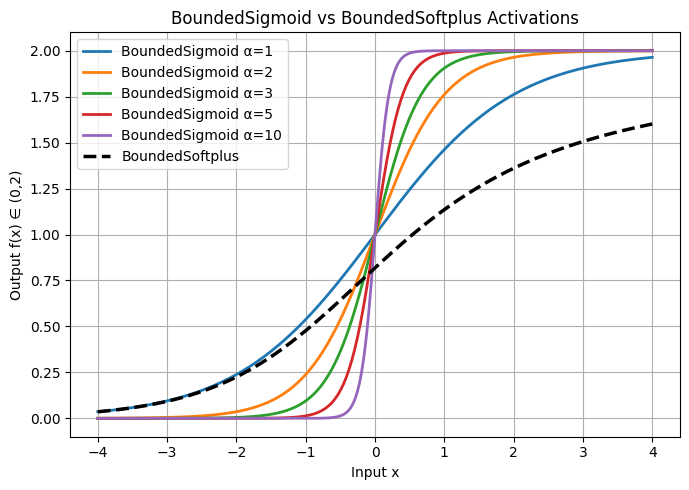}
    \caption{Comparison of different output activation functions.}
    \label{fig:act_compare}
\end{figure}

\noindent\textbf{1D Gaussian.} For the 1D Gaussian example in Figure~\ref{fig:oneD_compare}, Figure~\ref{fig:oneD_compare_softplus} shows the estimation result if we use the bounded softplus output activation $\sigma^{\text{bsp}}(x)$. We can see that the asymmetric saturation of $\sigma^{\text{bsp}}(x)$ leads to under-estimated $\hat r(x)$ near 2.
\begin{figure}
    \centering
    \includegraphics[width=0.9\linewidth]{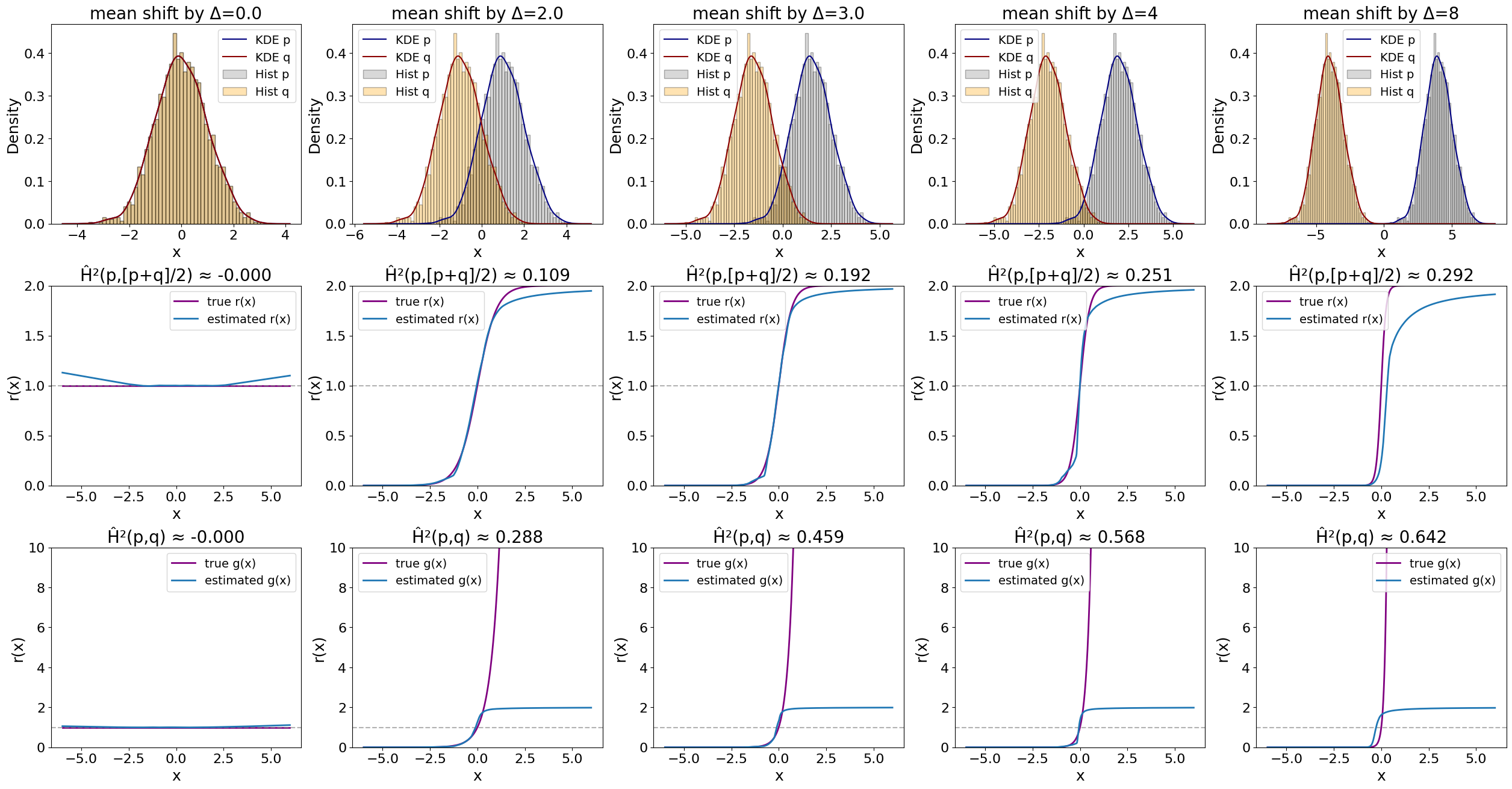}
    \caption{1D Gaussian example using the bounded softplus output activation $\sigma^{\text{bsp}}(x)$.}
    \label{fig:oneD_compare_softplus}
\end{figure}

\noindent\textbf{MNIST.} For the main result in Section~\ref{subsec:mnist}, because VAE fails to generate high-quality images, the estimated $\hat r(X)$ tends to have spikes of mass near 0 and 2, and such results do not change much regardless of the different output activation. Here, we focus on the sensitivity analysis of the DCGAN generator. Figure~\ref{fig:MNIST_DCGAN_softplus} presents the results using bounded softplus output activation, which has an asymmetric saturation to 0 and 2. Figure~\ref{fig:MNIST_DCGAN_sigmoid2} presents the results using the bounded sigmoid output activation function with a faster saturation rate $\alpha=2$. The asymmetric saturation of $\sigma^{\text{bsp}}(x)$ leads to the potential underestimated $\hat r(X)$ near 2 as shown in Figure~\ref{fig:MNIST_DCGAN_softplus}, a consistent pattern as in Figure~\ref{fig:oneD_compare_softplus}. The faster saturation brought by $\sigma^{\text{sig}}_{\alpha=2}(x)$ in Figure~\ref{fig:MNIST_DCGAN_sigmoid2} leads to a more concentrated mass, especially around 0, compared to Figure~\ref{fig:MNIST_DCGAN}. However, the overall pattern for the DCGAN performance remains the same: the images ``1" are under-represented by DCGAN due to its geometric simplicity, and images with more round-shaped patterns like ``5" and ``8" are more commonly drawn by DCGAN.

\begin{figure}
    \centering
    \includegraphics[width=0.8\linewidth]{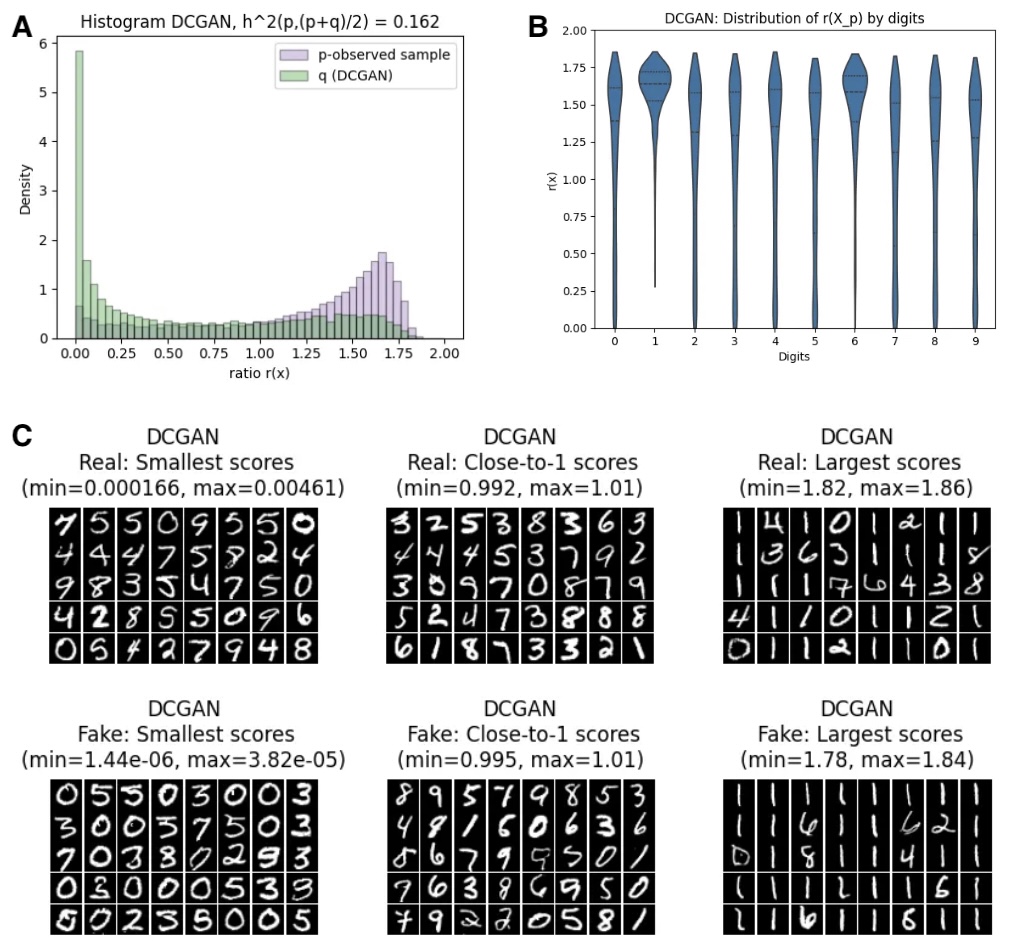}
    \caption{MNIST DCGAN results when using the bounded softplus output activation function $\sigma^{\text{bsp}}(x)$.}
    \label{fig:MNIST_DCGAN_softplus}
\end{figure}

\begin{figure}
    \centering
    \includegraphics[width=0.8\linewidth]{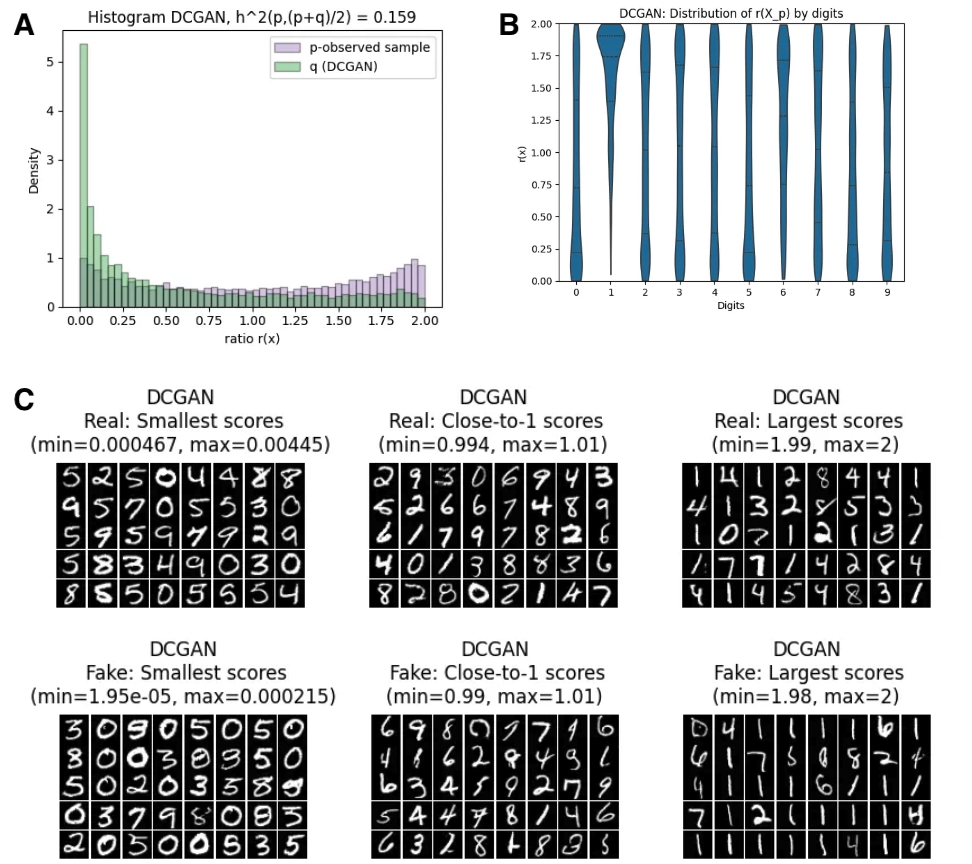}
    \caption{MNIST DCGAN results when using the bounded sigmoid output activation function with a faster saturation rate $\alpha=2$, $\sigma^{\text{sig}}_{\alpha=2}(x)$.}
    \label{fig:MNIST_DCGAN_sigmoid2}
\end{figure}

\begin{figure}[ht!]
    \centering
    \begin{subfigure}{0.3\textwidth}
        \includegraphics[width=\linewidth]{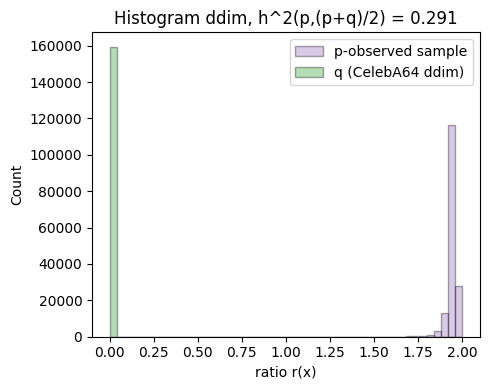}
        \caption{Histogram.}
        \label{fig:hist_ddim}
    \end{subfigure}%
    \hfill
    \begin{subfigure}{0.27\textwidth}
        \centering
        \resizebox{\linewidth}{!}{%
        \begin{tabular}{lll}
        \toprule
       & $r(X_p)$ & $r(X_q)$ \\
       \hline
        length & 162770  & 160000  \\
        mean   & 1.94       & 0.00       \\
        std    & 0.04       & 0.06       \\
        min    & 0.00       & 0.00       \\
        q1     & 1.94       & 0.00       \\
        median & 1.95       & 0.00       \\
        q3     & 1.96       & 0.00       \\
        max    & 1.98       & 1.95      \\
        \bottomrule
        \end{tabular}
        }
        \caption{Summary Statistics.}
        \label{tb:summary_ddim}
    \end{subfigure}
    \hfill
    \begin{subfigure}{0.27\textwidth}
        \centering
        \resizebox{\linewidth}{!}{%
        \begin{tabular}{lll}
        \toprule
                      & coef  & pvalue \\
                      \hline
            Wearing\_Hat          & -1.257 & 0.000  \\
            No\_Beard             & 1.010  & 0.002  \\
            Young                 & 0.831  & 0.003  \\
            5\_o\_Clock\_Shadow   & 2.213  & 0.003  \\
            Mouth\_Slightly\_Open & -0.792 & 0.004  \\
            Oval\_Face            & 1.974  & 0.007  \\
            Goatee                & 1.382  & 0.015  \\
            Bangs                 & -0.821 & 0.041  \\
            Bags\_Under\_Eyes     & 0.834  & 0.054  \\
            Wavy\_Hair            & 0.870  & 0.078 \\
            \bottomrule
        \end{tabular}
        }
        \caption{Logistic regression on $P(r(X_p)>1)$.}
        \label{fig:ddim_logistic_regression}
    \end{subfigure}

    \vskip\baselineskip
    \begin{subfigure}{0.98\textwidth}
        \includegraphics[width=\linewidth]{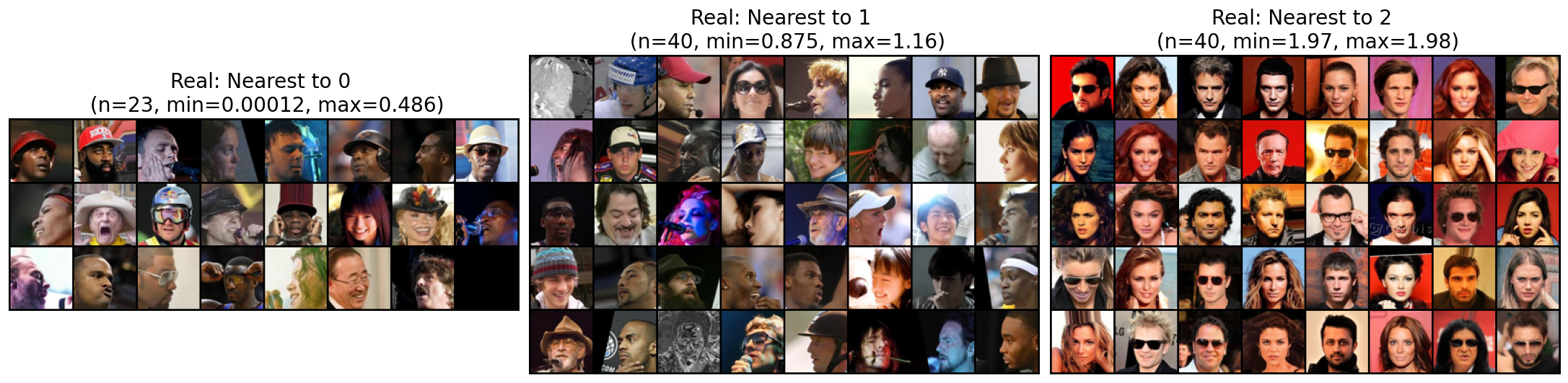}
        \caption{Real images.}
        \label{fig:CelebA_DDIM_real_img}
    \end{subfigure}

    \vskip\baselineskip
    \begin{subfigure}{0.98\textwidth}
        \includegraphics[width=\linewidth]{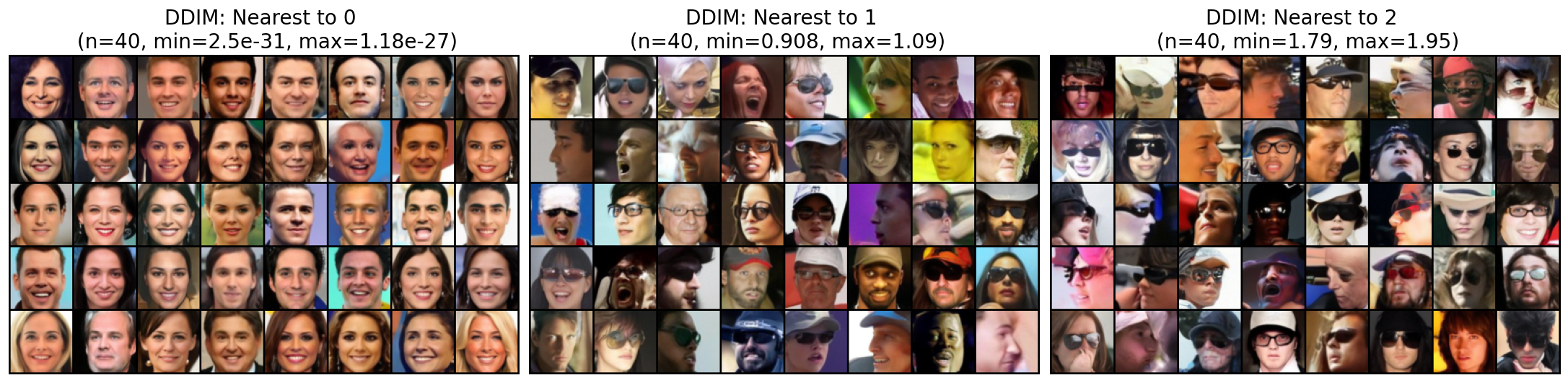}
        \caption{Fake images.}
        \label{fig:CelebA_DDIM_fake_img}
    \end{subfigure}

    \caption{CelebA-64: DDIM evaluated on full training data using bounded softplus as the output activation function.}
    \label{fig:CelebA_DDIM_asym}
\end{figure}

\noindent\textbf{CelebA.} The main results in Section~\ref{subsec:celeb} are based on the bounded sigmoid function with $\alpha=0.1$. Figure~\ref{fig:CelebA_DDIM_asym} presents the results when using the bounded softplus function, an asymmetric function, as the output activation. Additionally, we also included the logistic regression in Table~\ref{fig:ddim_logistic_regression} for the downstream feature detection. Although the numerical values of $r(X_p),r(X_q)$ differ from the main analysis in Section~\ref{subsec:celeb}, the general trend is the same: high concentration of homogeneous patterns of face images (left panel in Figure~\ref{fig:CelebA_DDIM_fake_img}), with $r(X_q)$ concentrating around 0, shows possible model hallucinations; the most easily learned feature by DDIM is still the \textit{wearing\_hat feature,} and the most difficult one is the \textit{5\_o\_Clock\_Shadow}. The major difference from Section~\ref{subsec:celeb} is that the evaluated $r(X_p),r(X_q)$ cannot reach the upper bound 2, due to the asymmetric activation function.

\section{Additional Numerical Experiments}

This section provides additional results for the numerical experiments in Section~\ref{sec:numerical_example}. 

For the CelebA-64 data, Figure~\ref{fig:DDIM_900_outsupport} shows 900 images that have small $r(X_q)$ values around 0 and display a homogeneous pattern. 

For the American Gut Project microbiome data, Figure~\ref{fig:agp_icfm_alpha_diversity} shows the alpha diversity plots in terms of the Shannon diversity and Gini-Simpson diversity. Figure~\ref{fig:icfm_sunburst} provides the sunburst plot on each level of the taxonomy table, where we conduct correlation analysis between the CLR-transformed relative abundance at each taxonomic level with the $r(X)$ evaluated in both observed and synthetic data. Positive correlation indicates the corresponding type of taxa is more abundant in the observed data, and negative correlation indicates more abundance in the synthetic data. Table~\ref{tab:class_association} provides the numeric value of the correlation coefficients on the class level, with the top 10 classes of microbiome ranked by the absolute value of the correlation coefficient.

\begin{figure}[ht!]   
    \centering
    \includegraphics[width=0.9\linewidth]{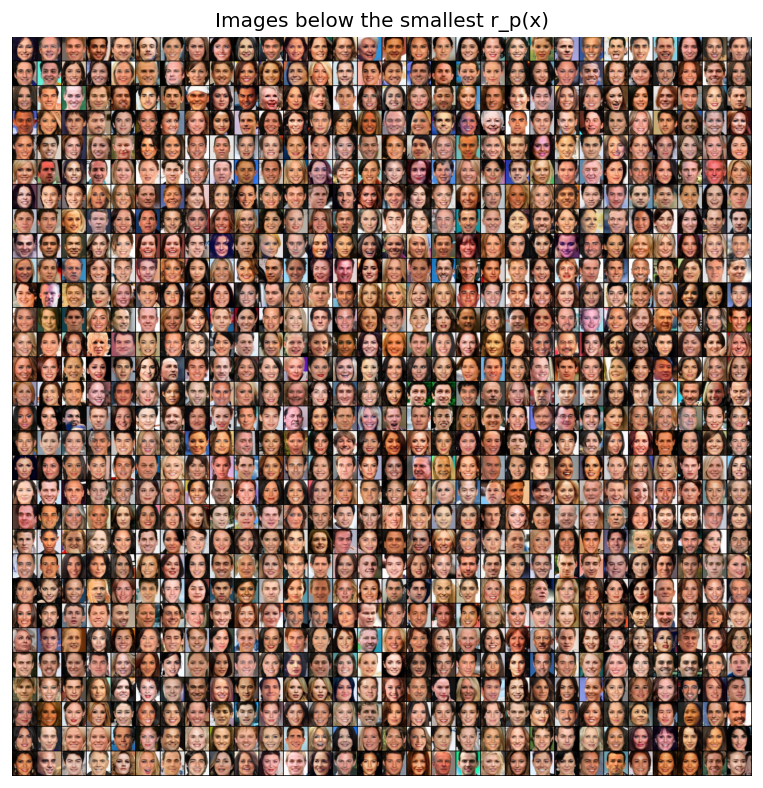}
    \caption{900 generated images by DDIM with $r(X_q)$ smaller than the leaset $r(X_p)$ value. There are a total of 159719 out of 160000 such samples.}
    \label{fig:DDIM_900_outsupport}
\end{figure}

\begin{figure}[ht!]   
    \centering
    \includegraphics[width=0.5\linewidth]{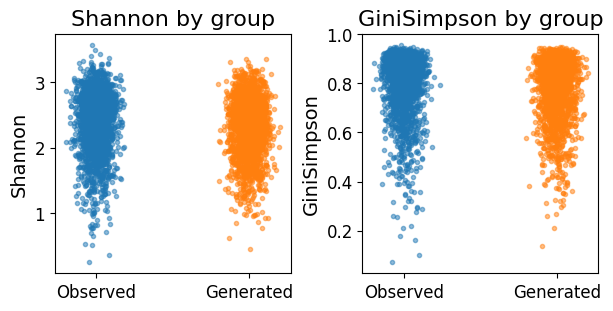}
    \caption{ICFM alpha diversity (within sample variations): (1) Shannon diversity: $H(x) = - \sum_{j=1}^{S} x_j \log x_j$; (2) Gini-Simpson diversity: $D_{\text{GS}}(x) = 1 - \sum_{j=1}^{S} x_j^2$. }
    \label{fig:agp_icfm_alpha_diversity}
\end{figure}
\begin{figure}
    \centering
    \includegraphics[width=0.9\linewidth]{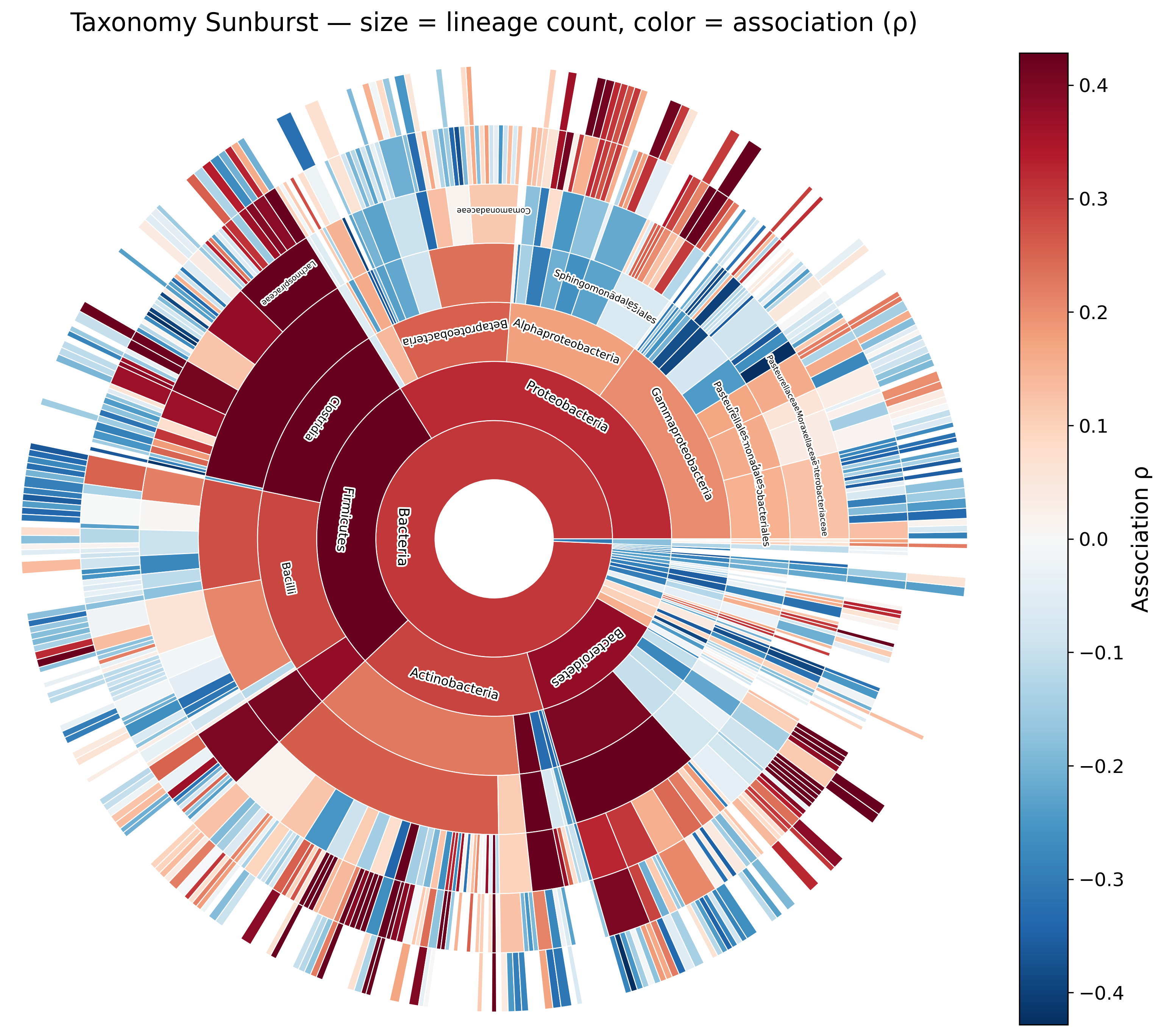}
    \caption{Association test on each level of the taxonomy. Positive (red) values are the taxa whose relative abundance is associated with higher $r(X)$ values, and are more common (relative to the community geometric mean of that level) in the observed data than in the generated data.}
    \label{fig:icfm_sunburst}
\end{figure}

\begin{table}[ht!]   
    \centering
    \begin{tabular}{ll}
    \toprule
   Class                   & rho    \\
Clostridia          & 0.477  \\
Coriobacteriia      & 0.421  \\
Bacteroidia         & 0.404  \\
Opitutae            & -0.389 \\
Erysipelotrichi     & 0.376  \\
Elusimicrobia       & -0.363 \\
Spartobacteria     & -0.357 \\
Synergistia         & -0.357 \\
Deinococci          & -0.354 \\
Acidimicrobiia      & -0.340\\
\bottomrule
\end{tabular}
    \caption{Class-level association test, between $r(X)$ and the CLR-transformed class-level relative abundance data.}
    \label{tab:class_association}
\end{table}

\section{Simulation with Known Densities}\label{sec:sim}

Although the goal of this work is not to propose a new density ratio estimation method, it is still necessary to check the point estimation accuracy of the neural network estimator on known densities and compare its performance with existing density ratio estimation methods. In this simulation study, we start with a 2D case and extend to four high-dimensional cases with a latent 2D manifold.

In the 2D case, we simulate $P$ and $Q$ from mixtures of 3 normal densities, as shown in the first plot in Figure~\ref{fig:sim}. The two distributions \(P\) and \(Q\) are defined as three-component Gaussian mixtures
with shared mixture weights \(\boldsymbol{w} = (w_1, w_2, w_3) = (0.3, 0.3, 0.4)\):
\begin{align}
    P(x)
    &= \sum_{k=1}^{3} w_k \, 
    \mathcal{N}\!\left(x \mid \boldsymbol{\mu}_k^{(P)}, \, \Sigma_k^{(P)} \right), \label{eq:P_mixture}\\[4pt]
    Q(x)
    &= \sum_{k=1}^{3} w_k \, 
    \mathcal{N}\!\left(x \mid \boldsymbol{\mu}_k^{(Q)}, \, \Sigma_k^{(Q)} \right). \label{eq:Q_mixture}
\end{align}
The 3-component mixture of normal densities represents 3 cases of distributional differences as shown in the first scatterplot in Figure~\ref{fig:sim}: (1) the top-left group is the component where $P$ and $Q$ have the same density; (2) the top-right group is when $P$ and $Q$ have the same mean, but the covariance is different; (3) the bottom-left group is when $P$ and $Q$ have the same covariance structure, but the means are shifted. The numerical values of the normal parameters are set up as follows:
\begin{enumerate}
  \item[\textbf{(1)}] \textbf{Top-left group (identical component).}  
  Both \(P\) and \(Q\) share the same mean and covariance:
  \[
  \boldsymbol{\mu}_{1}^{(P)} = \boldsymbol{\mu}_{1}^{(Q)} = (-1,\,5)^{\!\top}, \qquad
  \Sigma_{1}^{(P)} = \Sigma_{1}^{(Q)} =
  \begin{pmatrix}
  1 & 0 \\[2pt]
  0 & 1
  \end{pmatrix}.
  \]

  \item[\textbf{(2)}] \textbf{Top-right group (different covariance, same mean).}  
  Both distributions share the same mean but have different covariance structures:
  \[
  \boldsymbol{\mu}_{2}^{(P)} = \boldsymbol{\mu}_{2}^{(Q)} = (5,\,5)^{\!\top}, \qquad
  \Sigma_{2}^{(P)} =
  2\begin{pmatrix}
  1 & -0.9 \\[2pt]
  -0.9 & 1
  \end{pmatrix}, \qquad
  \Sigma_{2}^{(Q)} =
  2\begin{pmatrix}
  1 & 0 \\[2pt]
  0 & 1
  \end{pmatrix}.
  \]

  \item[\textbf{(3)}] \textbf{Bottom-left group (mean shift, same covariance).}  
  The two distributions share the same covariance matrix but have shifted means:
  \[
  \boldsymbol{\mu}_{3}^{(P)} = (-2,\,-2)^{\!\top}, \qquad
  \boldsymbol{\mu}_{3}^{(Q)} = (0,\,0)^{\!\top}, \qquad
  \Sigma_{3}^{(P)} = \Sigma_{3}^{(Q)} =
  \begin{pmatrix}
  1 & 0.5 \\[2pt]
  0.5 & 1
  \end{pmatrix}.
  \]
\end{enumerate}

For the high-dimensional simulation, we assume that there exists 2D low-dimensional structure to the high-dimensional densities. Let $d$ be the dimension of the high-dimensional densities, and $V_p$ follows the 3-component normal mixture distribution \eqref{eq:P_mixture}, and $V_q$ follows \eqref{eq:Q_mixture}. To project the low-dimensional $V_p,V_q$ onto $\R^d$, we first construct an orthonormal matrix $K\in \R^{d\times 2}$, whose columns form an orthonormal basis in $\mathbb{R}^{d}$. Specifically, we generate a random Gaussian matrix \(G \in \mathbb{R}^{D \times 2}\) with entries \(G_{ij} \sim \mathcal{N}(0,1)\) and obtain \(K\) from the thin QR decomposition $G = Q R, K = Q$, so that \(K^{\top} K = I_2\).  Then the low-dimensional sample $V_p,V_q$ are projected to $\R^d$ using
\[
X_i = K\, V_i + \varepsilon_i, \qquad
\varepsilon_i \sim \mathcal{N}\!\left(0,\, \sigma^2 I_d\right),
\]
and the resulting samples $X_p,X_q$ are in $\R^d$, and their relative density ratio can be analytically computed as the theoretical truth. We choose to use 2D mixtures as the latent distributions simply because it is easy to visualize the performance.

We include three existing methods in this simulation comparison. 

\begin{enumerate}
    \item \textbf{Unconstrained Least-Squares Importance Fitting \citep[uLSIF]{kanamori2009least}} aims to directly minimize the loss function $\min_{r_\theta}\frac{1}{2}\bE_{Q}\br{\sbr{r_\theta(X)-r(X)}^2}$ over a candidate of class of $r_\theta$ that are expressed as kernel expansions. This method assumes the true ratio function is within the reproducing kernel Hilbert space (RKHS) constructed by the kernel basis, and hence requires a global smooth structure of the true ratio function. We used the python package \textit{densratio} (\url{https://github.com/hoxo-m/densratio_py}) and its implementation of uLSIF, and set the kernel bandwidth to be searched in $(0.1, 0.3, 1.0)$ by cross-validation.
    \item \textbf{Kullback–Leibler Importance Estimation Procedure (KLIEP) \citep[KLIEP]{sugiyama2007direct}} also uses kernel basis to minimize the KL divergence $\min_{r_\theta} \text{D}_{\text{KL}}(p \| r_\theta q)$, and we use the \textit{adapt} python package \citet{de2021adapt} (\url{https://adapt-python.github.io/adapt/}). We use the RBF kernel, with at most 100 kernel centers, and use 5-fold cross-validation to automatically select the optimal kernel bandwidth.
    \item \textbf{The density ratio trick (DR-trick) via binary classification.} This method is discussed in Section~\ref{subsec:connection}. We use logistic regression to estimate the binary classification probability $P(Z=1|X)$ in Section~\ref{subsec:connection}, and use the one-to-one mapping through the Bayes rule to recover the relative density ratio.
\end{enumerate}

We denote our neural network model as MLP in Table~\ref{tb:sim}. For each case shown in Table~\ref{tb:sim}, we sample $n_p=n_q=1000$ equally from $P$ and $Q$ as the training sample, and another 1000 pairs as the testing sample. For each method, we use the training sample to train the relative density ratio function, and evaluate RDR on the testing sample and compute the testing MSE. For the neural network model based on MLP structure, to avoid overfit, we collect another 500 pairs of validation sample and stop the training process early if the validation loss does not improve by $10^{-5}$  over 5 consecutive epochs. The number of samples and all the tuning parameters in every method stay the same throughout all the 5 cases in Table~\ref{tb:sim}. Note that including MLP, all of the methods in comparison can be optimized to achieve better performance for each case; however, our goal here is to use a simple simulation setting to study what might happen when extending these methods in high-dimensional RDR estimation.

\begin{table}[ht!]
\centering
\begin{tabular}{llllll}
\toprule
Dim & Noise & MLP            & uLSIF          & KLIEP          & DR-trick            \\
2   & 0.0   & 0.06 (0.04) & 0.15 (0.03) & \textbf{0.05} (0.01) & 0.20 (0.01) \\
20  & 0.1   & \textbf{0.12} (0.03) & 0.16 (0.04) & 0.16 (0.01) & 0.30 (0.01) \\
40  & 0.1   & \textbf{0.15} (0.02) & 0.20 (0.04) & 0.26 (0.02) & 0.31 (0.01) \\
40  & 0.3   & \textbf{0.17} (0.02) & 0.94 (0.03) & 0.24 (0.02) & 0.29 (0.01) \\
100 & 0.3   & \textbf{0.19} (0.01) & 1.30 (0.02) & 0.31 (0.01) & 0.30 (0.01)\\
\bottomrule
\end{tabular}
\caption{Simulation comparison with existing methods. Each entry represents the mean (sd) of test MSE over 100 replicated studies.}
\label{tb:sim}
\end{table}

Table~\ref{tb:sim} shows that MLP gives the second best performance in the 2D case, and the best performance for all other high-dimensional cases. To further understand the pros and cons of different methods, we provide visualizations in Figure~\ref{fig:sim} for all the 5 cases. The 2D plots are directly based on the testing sample. For the high-dimensional cases, we visualize the RDR values on the low-dimensional testing sample $V_p,V_q$.

Based on Figure~\ref{fig:sim}, the DR-trick cannot generalize well to very high-dimensions, since it tends to give very large or very small binary classifications probabilities to a few more extreme observations. uLSIF is a global kernel method that even in low-dimensions, it cannot correctly reflect the true RDR pattern, and it is most sensitive to the high-dimensional noise perturbation. KLIEP relies on the number of kernel centers; with  the limit of such centers set at 100, as the dimension increases, the fixed number of kernel centers fails to capture the high-dimensional complexity of RDR, and the evaluated RDR tends to be more uniform around 1. For the 4-layer MLP, we also only have a fixed number of neural network parameters as the dimension increases, and inevitably, the performance deteriorates as the dimension increases. But the Hellinger-loss-based MLP is still able to capture the general pattern of the low-dimensional manifold especially in the mean-shift case. The caveat is that, when there is noise perturbation in high-dimensions, MLP tends to polarize the points that are supposed to have equal mass in $P$ and $Q$. This is because the noise perturbation in high-dimensions can shift the empirical support of the two samples and make certain points fall out of the overlapping support.

\begin{figure}[p] 
\centering
\includegraphics[width=\textwidth]{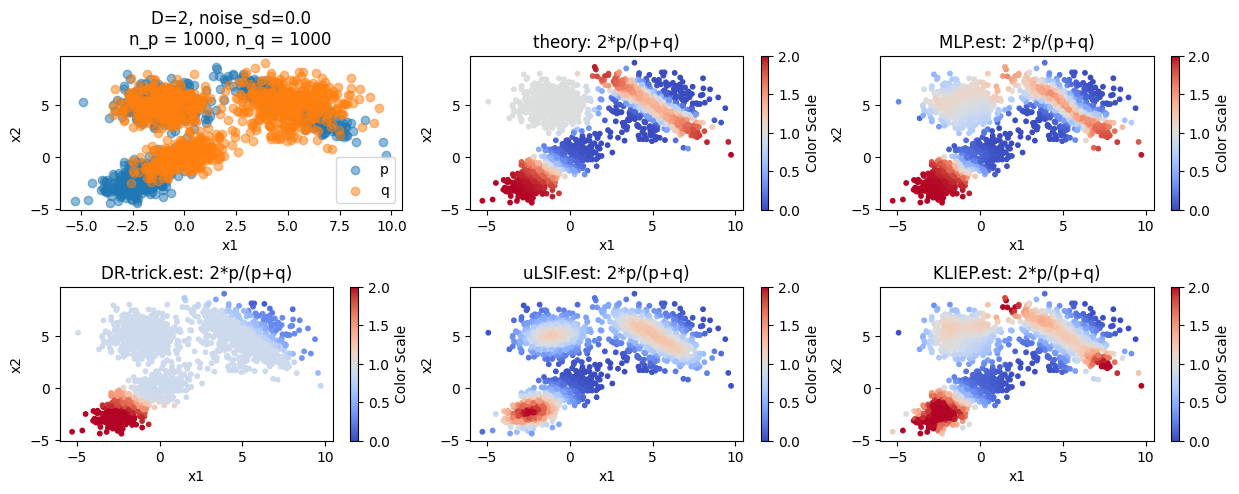}\vspace{3pt}\\
\includegraphics[width=\textwidth]{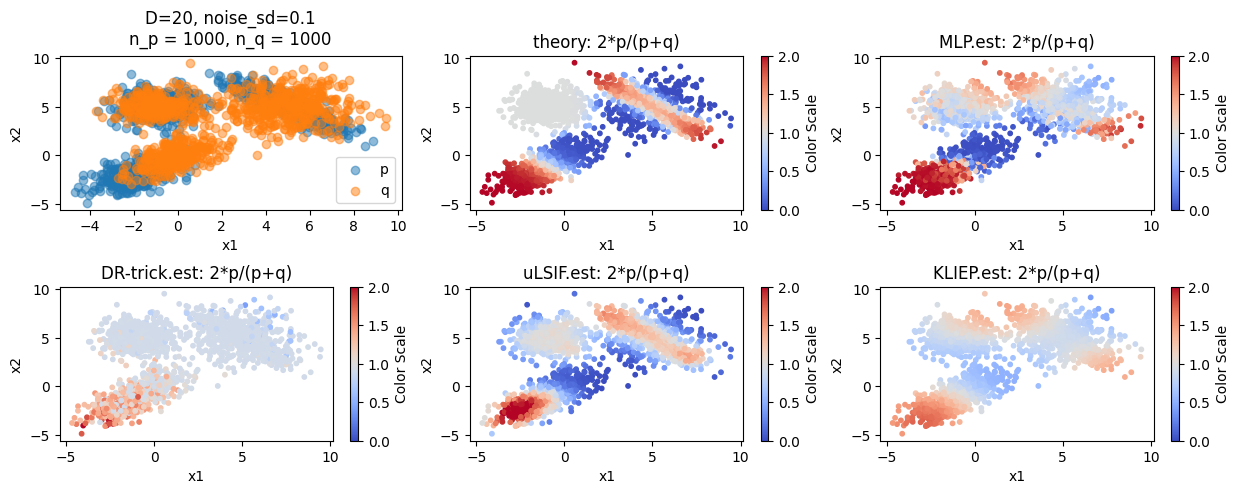}\vspace{3pt}\\
\includegraphics[width=\textwidth]{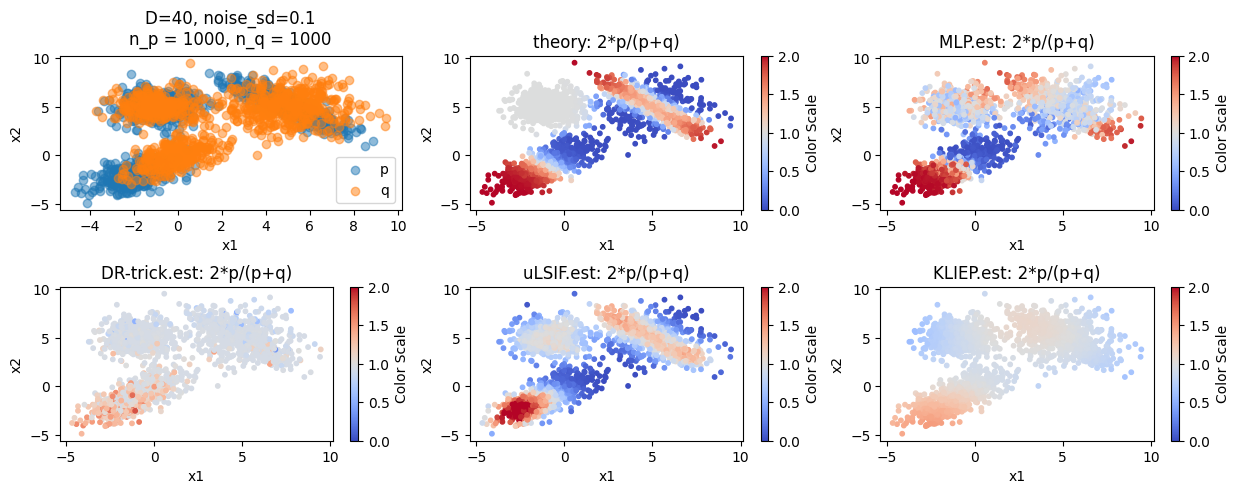}
\caption{Simulation visualizations for each case shown in Table~\ref{tb:sim}.}
\label{fig:sim}
\end{figure}

\begin{figure}[p]\ContinuedFloat
\centering
\includegraphics[width=\textwidth]{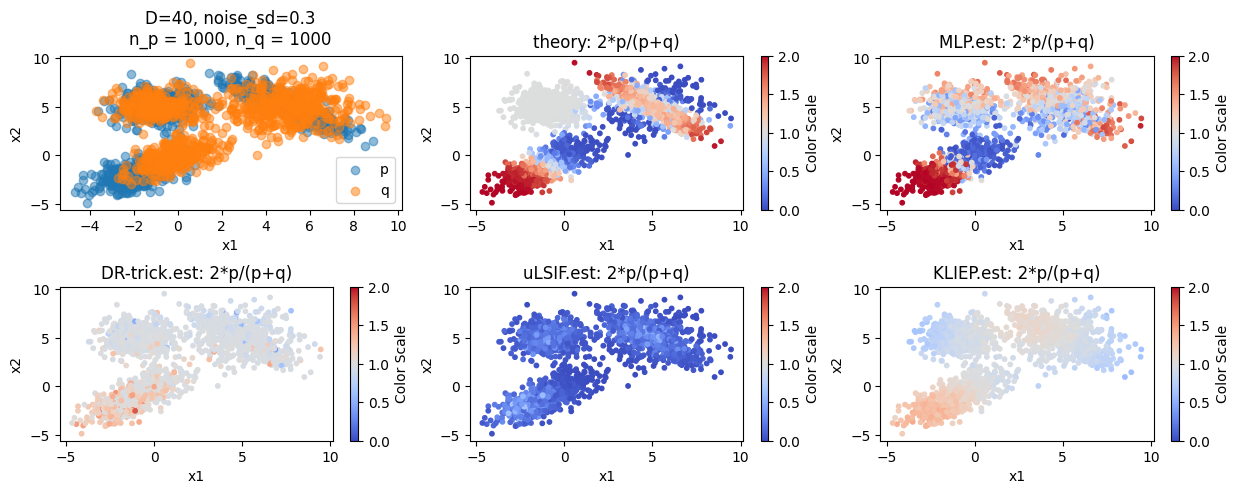}\vspace{3pt}\\
\includegraphics[width=\textwidth]{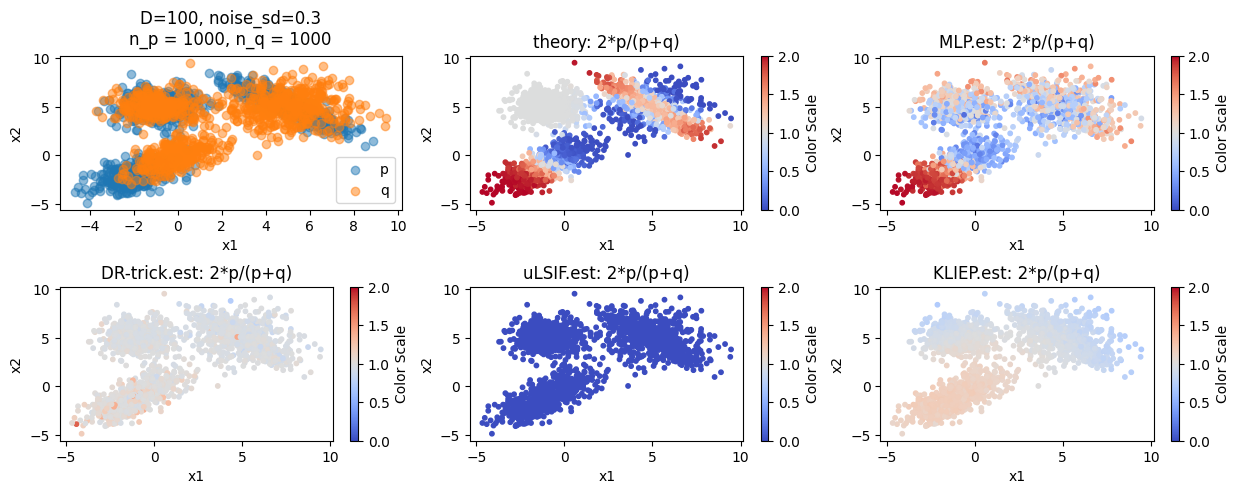}
\caption[]{Simulation visualizations for each case (continued).} 
\end{figure}

\end{document}